\documentclass{article}

\usepackage{cooldraft}

\usepackage[utf8]{inputenc}
\usepackage[T1]{fontenc}
\usepackage{hyperref}
\usepackage{url}
\usepackage{booktabs}
\usepackage{amsfonts}
\usepackage{nicefrac}
\usepackage{microtype}
\usepackage{xcolor}
\usepackage{xargs}
\usepackage{amsmath}
\usepackage{amsthm}
\usepackage{amssymb}
\usepackage{tikz}
\usetikzlibrary{arrows, shapes.geometric, automata, positioning}
\usepackage{thmtools}
\usepackage{thm-restate}
\usepackage{subcaption}
\usepackage[b]{esvect}
\usepackage[inline]{enumitem}

\hypersetup{
  colorlinks=true,
  linkcolor=blue!60!black,
  citecolor=blue!60!black,
  urlcolor=blue!60!black
}

\theoremstyle{plain}
\newtheorem{theorem}{Theorem}
\newtheorem{lemma}[theorem]{Lemma}
\newtheorem{proposition}[theorem]{Proposition}
\newtheorem{corollary}[theorem]{Corollary}

\theoremstyle{definition}
\newtheorem{definition}[theorem]{Definition}
\newtheorem{problem}[theorem]{Problem}
\newtheorem{remark}[theorem]{Remark}
\newtheorem{example}[theorem]{Example}

\newcommand{\nat}{\mathbb{N}}

\newcommand{\real}{\mathbb{R}}

\newcommand{\distr}{\mathrm{Distr}}
\newcommand{\prob}{\mathbb{P}}

\newcommand{\rand}[1]{\boldsymbol{#1}}
\newcommand{\supp}{\mathrm{supp}}
\newcommand{\one}{\mathbf{1}}

\newcommand{\diag}{\operatorname{diag}}

\newcommand{\krank}{\mathrm{krank}}
\newcommand{\linspan}{\operatorname{span}}
\newcommand{\transpose}{^{\intercal}}
\newcommand{\rowtensor}{\otimes^{\mathrm{row}}}

\makeatletter
\DeclareMathAlphabet{\mathbb@bbold}{U}{bbold}{m}{n}
\newcommand{\indicator}{\mathbb@bbold{1}}
\makeatother

\newcommand{\model}{\mathcal{M}}
\newcommand{\trans}{A}
\newcommand{\obs}{B}
\newcommand{\init}{\pi}

\newcommand{\hmms}{\mathcal{H}}
\newcommand{\hmmsn}[1]{\mathcal{H}_{#1}}
\newcommand{\abc}{\mathrm{\Sigma}}

\newcommand{\similarperm}{\cong}
\newcommand{\othereq}{\bowtie}
\newcommand{\target}{{\model^\star}}
\newcommand{\rival}{{\model'}}
\newcommand{\masked}{\mathcal{U}}

\newcommand{\iddeteq}{\mathsf{IDHMM}_\mathrm{det}^{=}}

\newcommand{\iddetperm}{\mathsf{IDHMM}_\mathrm{det}^\mathrm{perm}}

\newcommand{\iddetbisim}{\mathsf{IDHMM}_\mathrm{det}^\sim}
\newcommand{\idlocal}{\mathsf{IDHMM}_\mathrm{local}}
\newcommand{\idgeneric}{\mathsf{IDHMM}_\mathrm{gen}}
\newcommand{\iddetfam}{\mathsf{IDHMM}_\mathrm{fam}}
\newcommand{\biniddeteq}{\mathsf{BIN}\textnormal{-}\iddeteq}

\newcommand{\coiddeteq}{\mathsf{coIDHMM}_\mathrm{det}^{=}}

\newcommand{\coiddetperm}{\mathsf{coIDHMM}_\mathrm{det}^\mathrm{perm}}
\newcommand{\coiddetbisim}{\mathsf{coIDHMM}_\mathrm{det}^\sim}
\newcommand{\coidlocal}{\mathsf{coIDHMM}_\mathrm{local}}

\newcommand{\np}{\mathsf{NP}}
\newcommand{\conp}{\mathsf{coNP}}
\newcommand{\gi}{\mathsf{GI}}
\newcommand{\pspace}{\mathsf{PSPACE}}

\newcommand{\etr}{\exists \real}
\newcommand{\quadetr}{\mathsf{QUAD}}
\newcommand{\coetr}{\forall \real}

\newcommand{\cmark}{\scalebox{1.0}{\checkmark}}
\newcommand{\xlength}{0.55}
\newcommand{\xsymb}{%
  \tikz[baseline=-0.55ex, x=1ex, y=1ex]{
    \draw[-, line width=0.7pt, line cap=round]
      (-\xlength,-\xlength) -- (\xlength,\xlength)
      (-\xlength,\xlength) -- (\xlength,-\xlength);
  }%
}
\newcommand{\xmark}{\hspace{0.5pt}\xsymb\hspace{0.5pt}}

\title{On the Computational Complexity of Hidden Markov Model Identification}

\author{
  Markel Zubia\\
  Ruhr University Bochum\\
  \texttt{markel.zubia@rub.de}
  \And
  Nils Jansen\\
  Ruhr University Bochum\\
  Radboud University Nijmegen
}

\begin{document}

\maketitle

\begin{abstract}
Identification is the task of recovering the parameters of an unknown ground-truth model from sampled data.
When parameters other than the ground truth induce the same output distribution, data alone does not provide enough information to recover the ground truth, and the model is thus called \emph{unidentifiable}.
We study the \emph{identifiability problem} for hidden Markov models (HMMs): given an HMM, is it identifiable?
Existing work on HMM identification establishes conditions under which the ground-truth HMM can be identified.
However, most of these conditions are sufficient but not necessary, meaning that, when a model does not satisfy them, its identifiability remains inconclusive.
We instead take a computational perspective: is there a sound and complete algorithm that decides whether a given HMM is identifiable, and if so, what is the complexity of this decision problem?
We consider the decision problems arising from the various notions of identifiability in the literature, including deterministic, generic, global, local, state-permutation-invariant, and finite-alphabet identifiability.
We show that all of these problems are decidable in $\pspace$, via reductions to the theory of the reals at various levels of its quantifier-alternation hierarchy.
We further show that the deterministic variants are already $\coetr$-hard (and hence $\conp$-hard) for simply parameterized families.
\end{abstract}

\section{Introduction}

\paragraph{Hidden Markov models.}
A Hidden Markov model~\citep[HMM;][]{petrie1969probabilistic,rabiner1989tutorial} is a stochastic model in which an unobserved Markov chain generates sequences of observations.
Given a dataset of observation sequences, the central task is HMM \emph{estimation}: find an HMM that generates the data.
The typical approach to HMM estimation is maximum-likelihood estimation (i.e., estimate the likeliest HMM to generate the data), which is often implemented by the famed Baum–Welch algorithm~\citep{baum1970maximization,leroux1992maximum}.
As such, HMMs are a standard tool for modeling sequential data and play a crucial role in a wide range of fields, including signal processing~\citep{rabiner1989tutorial}, machine learning~\citep{ghahramani2001introduction}, finance~\citep{hamilton1989new}, psychology~\citep{visser2002fitting}, telecommunications~\citep{turin1998hidden}, and many more~\citep{rabiner1989tutorial,ephraim2002hidden}.

\paragraph{HMM identification.}
Within the vast landscape of HMM estimation, a considerable body of work focuses on \emph{identification}~\citep{blackwell1957identifiability,petrie1969probabilistic,rothenberg1971identification}: under the assumption that the dataset was generated by a ground-truth HMM, beyond simply fitting a model to said data, can we reconstruct the ground-truth HMM? 
If there exists an HMM different from the ground truth that generates the same distribution on observation sequences, then no amount of data from that distribution can allow us to distinguish between them, and the ground truth is thus deemed \emph{unidentifiable}. 
In contraposition, the ground-truth is \emph{identifiable} when its distribution on observation sequences is \emph{unique}.
HMM identifiability is particularly relevant in settings where the ground-truth model is an interesting object of study in-and-of-itself, having been applied in a broad range of areas such as cognitive diagnosis \citep{liu2023identifiability}, system monitoring~\citep{zhang2016bayesian}, and DNA sequencing~\citep{steel1993invertible,steel1998reconstructing}.

\paragraph{Identifiability flavors.}
For any ground-truth HMM, one may construct a different HMM with the same output probabilities in a trivial manner, say by adding dummy hidden states or by renaming the hidden states without changing the output probabilities.
Therefore, to pose a sensible notion of identifiability, it is necessary to impose restrictions on the set of allowed HMMs, for instance by fixing the number of hidden states, as well as specify the kind of uniqueness we seek, for instance by enforcing uniqueness up to hidden-state renaming.
Consequently, literature introduces various nuanced definitions of HMM identifiability.
There are two main flavors of identifiability: \emph{deterministic} and \emph{generic}.
Deterministic (or strict) identification is a classical idea~(see \citet{paz1971introduction,finesso1990consistent,bonhomme2016estimating}), and it expresses that the ground-truth model is ``strictly'' unique, in that, no model other than the ground-truth may generate the data.
In contrast, generic (or probabilistic) identification asks whether the HMMs that induce the same distribution on output words as the ground-truth model form a set of Lebesgue measure zero (see \citet{petrie1969probabilistic,allman2009identifiability,liu2023identifiability}), which in principle allows the identification of the ground-truth almost-surely.
Literature further establishes additional variants, and we refer to \citet{cole2020parameter} for a comprehensive introduction.
In short, \emph{local} identification asks whether there exists a ball around the ground-truth in which it is unique (whereas the more general question is regarded as \emph{global} identification), \emph{parameterized} identification restricts the number of parameters that describe the model (closely related to \emph{parameter redundancy} \citep{catchpole1997detecting}), \emph{permutation independence} considers uniqueness up to the permutation of hidden states. 
Finally, some works use the term ``model'' to refer to individual HMMs~\citep{tune2013hidden}, whereas others use it to refer to classes or sets of HMMs induced by a given set of parameters~\citep{catchpole1997detecting}.
We adopt the former terminology, and we explicitly write ``family'' when referring to sets of HMMs.

\paragraph{Existing HMM identification techniques and the current gap.}
Most HMM literature on identifiability studies the conditions under which identification is guaranteed.
These conditions include rank and separation assumptions used by spectral algorithms \citep{DBLP:journals/jcss/HsuKZ12}, 
assumptions on Markov chain topology \citep{pearl1986structuring,steel1998reconstructing,chang1996full}, 
Kruskal-rank conditions \citep{kruskal1976more,allman2009identifiability},
and problem-specific restrictions such as Q-matrix structures in cognitive diagnosis \citep{liu2023identifiability}.
\citet{tune2013hidden} build upon \citet{allman2009identifiability} to provide necessary and sufficient conditions for identifiability under irreducibility, aperiodicity, and non-redundancy assumptions (see Appendix~\ref{app:tnrcondition}).
While these results are crucial, our work presents a different perspective: rather than introducing another criterion for identifiability, we focus on the computational complexity of the decision problem itself.
\citet{cole2019parameter,cole2020parameter} applies symbolic-algebra systems to local HMM identifiability, which provides a sound and complete decision algorithm. 
However, a naive application of such approach into our setting requires exhaustively enumerating exponentially many probabilities, whereas we provide a $\pspace$ solution.
Even though identifiability problems are often computationally hard for other stochastic models (see for instance \citet{van2023hardness,dorfler2024complexity}), and HMM identifiability has vast literature and has generally been speculated intractable~\citep{cole2019parameter}, the exact computational complexity of HMM identifiability has so far remained an open question.

\paragraph{Scope.}
We focus on finite-state, finite-alphabet, discrete HMMs.
This is the setting of much of the classical HMM identifiability and probabilistic automata literature~\citep{petrie1969probabilistic,baum1970maximization,paz1971introduction,rabiner1989tutorial,tzeng1992polynomial}.
It is also the natural setting from a computational perspective: the input is an HMM with rational parameters that can be encoded in the tape of a Turing machine.
Other models such as continuous-output, infinite state, or nonparametric HMMs are important statistical models, but their identifiability is not as relevant from a computational perspective and are not treated in this paper.

\paragraph{Research objectives.}
We study the (Cook–Levin–Karp) computational complexity-theoretic classification of canonical HMM identifiability problems, by proving complexity upper bounds in the form of decision algorithms (i.e., it cannot be harder than running its decision algorithm), and lower bounds in the form of many-one reductions (i.e., it is at least as hard as another well-known hard problem). 
In particular, we study the following problems:
\begin{enumerate*}[label=(\Roman*)]
    \item deterministic HMM identifiability up to equality, hidden state permutation, and bisimilarity,
    \item deterministic identifiability with fixed alphabet sizes,
    \item local identifiability,
    \item generic identifiability, and
    \item family-wise identifiability.
\end{enumerate*}

\paragraph{Contribution.}
We give polynomial-size $\etr$ certificates for HMM output equivalence in Section~\ref{sec:hmm_equivalence_certificate} and use them throughout the paper.
Section~\ref{sec:deterministic_hmm_equality} shows that deterministic identifiability up to equality is in $\coetr$, and that it is $\coetr$-complete for simple parameterizations, and $\conp$-hard for mask (rectangular) parameterizations.
We show that identifiability up to hidden-state permutation is in $\forall \etr$ in Section~\ref{sec:permutation}, while identifiability up to bisimilarity is in $\coetr$ is shown in Section~\ref{sec:bisimilarity}.
Section~\ref{sec:perm_bisim_hard} shows that the hardness results from deterministic global identifiability up to equality carry over to both identifiability up to hidden-state renaming and bisimilarity.
In Section~\ref{sec:binary-alphabet}, we show that the hardness results are alphabet-independent, meaning that the identification of HMMs with a binary alphabet is as hard as the problem with an arbitrary alphabet.
Section~\ref{sec:local_identifiability} shows that local identifiability is in $\coetr$. 
Section~\ref{sec:generic_identifiability} shows that generic identifiability is in $\exists\coetr \cap \forall \etr$, and Section~\ref{sec:family-wise} shows family-wise identifiability is in $\coetr$.
By \citet{canny1988some} and \citet{renegar1992computational}, it follows that all problems studied here are decidable in $\pspace$.

\section{Related work}

\paragraph{HMM learning.}
Learning HMMs from data is generally computationally difficult
\citep{abe1992computational,terwijn2002learnability}.  The Baum--Welch algorithm
and related expectation-maximization methods remain standard practical tools
\citep{baum1970maximization,baum1972inequality,rabiner1989tutorial}, while Bayesian and approximate
inference methods are also widely used~\citep{stolcke1992hidden}.  Positive
polynomial-time learning guarantees are known under conditions such as
rank or separation assumptions~\citep{DBLP:journals/jcss/HsuKZ12,mossel2005learning}.

\paragraph{Deterministic HMM identification.}
The deterministic (or strict) identifiability problem for functions of finite Markov chains goes back to
\citet{blackwell1957identifiability} and \citet{petrie1969probabilistic}.
Later work developed sufficient conditions for deterministic identifiability, including
conditions based on consistency of estimation~\citep{finesso1990consistent},
multivariate latent-structure decompositions~\citep{bonhomme2016estimating}, and
nonparametric HMM inference~\citep{gassiat2016inference}.
\citet{tune2013hidden} use tensor decompositions to give a deterministic identifiability criterion for finite-alphabet HMMs under irreducibility and aperiodicity assumptions. 
We view their theorem as an edge-case and provide a discussion in Appendix~\ref{app:tnrcondition}. 

\paragraph{Generic HMM identification.}
Generic identifiability asks whether the non-identifiable parameters form a set with a lower algebraic dimension.
\citet{allman2009identifiability} establish generic identifiability for broad latent-structure models, including HMMs, using Kruskal's theorem.
Related generic identifiability criteria appear in work on multivariate latent structures and nonparametric HMMs~\citep{bonhomme2016estimating,gassiat2016inference}.
Our generic-identifiability section does not propose another criterion for generic identifiability, but studies the computational complexity of its decision problem.

\paragraph{Local identifiability and parameter redundancy.}
A common way to study local identifiability is through \emph{parameter redundancy}: whether a parameterized statistical model uses more parameters than can be independently recovered from its observable probabilities~\citep{rothenberg1971identification,catchpole1997detecting}.
Cole applies this idea to HMMs using symbolic algebra.
Her calculation first forms an \emph{exhaustive summary}, namely a finite list of expressions in the parameters that fully describe the model's output distribution.
One then differentiates these expressions with respect to the parameters and studies the rank of the resulting matrix~\citep{cole2019parameter,cole2020parameter}.
Cole uses this rank to decide whether the parameterized family is redundant or at least locally identifiable.

\paragraph{Hardness of identification in other models.}
\citet{dorfler2024complexity} study generic identification in linear structural causal models.
They prove $\coetr$-hardness for numerical identification and give $\pspace$ upper bounds for generic identification.
Their lower-bound strategy is based on ``finding another solution'' promise variants such as $\mathsf{QUAD}^{++}$, and is structurally close to HMM non-identifiability: in both settings, one asks whether a known model is unique.

\paragraph{Automaton equivalence.}
Probabilistic and weighted automata provide a useful bridge between HMM learning and automata theory.
A weighted automaton takes a sequence of symbols (a word) as input and returns a real number (its weight) as output by assigning a real-valued matrix to every symbol and taking the product of the symbols that appear in the input word.
Probabilistic automata are the stochastic specialization of weighted automata~\citet{paz1971introduction}.
HMM word probabilities have a similar matrix-product form, which is why algorithms for probabilistic automaton equivalence can be reused for HMMs~\citep{doyen2008equivalence}.
In particular, the probabilistic automaton equivalence theorems by \citet{paz1971introduction}, \citet{tzeng1992polynomial}, and \citet{kiefer2014stability} form the basis of the output-equivalence certificate in Section~\ref{sec:hmm_equivalence_certificate}.

\paragraph{Automaton minimization.}
Automaton minimization asks whether there exists an automaton with fewer states than a given automaton, while performing the same function.
For weighted automata this can be done in polynomial time by constructing the basis of the vectors reachable from the initial vector and, symmetrically, of the vectors that can reach the final vector~\citep{schutzenberger1961definition,berstel1988rational,kiefer2014stability}.
These linear-algebraic approaches for weighted automaton minimization do not necessarily preserve nonnegativity and stochasticity, so they cannot be used for probabilistic automaton minimization.
\citet{kiefer2014stability} show that probabilistic-automaton minimization is NP-hard and reduce it in polynomial time to the existential theory of the reals.
The same basis-of-reachable-vectors idea will reappear in our certificate for HMM output equivalence in Section~\ref{sec:hmm_equivalence_certificate}.

\paragraph{Automata learning.}
Active automata learning begins with Angluin's $L^*$ algorithm for regular languages~\citep{angluin1987learning}, which has been extended to probabilistic models such as Markov chains (MC) and Markov decision processes (MDP)~\citep{tappler2021based}.
Deterministic automaton identification from labeled examples is already computationally hard: 
\citet{gold1978complexity} proved that finding a smallest DFA consistent with given data is NP-complete. 
This literature is distinct from our setting, which assumes access to the full output distribution rather than labeled examples, but it provides useful context for why automaton identification problems often become computationally difficult.

\section{Background}

\subsection{Notation}

A probability distribution over a finite set $X$ is a function $\mu \colon X \to [0,1]$ where $\sum_{x\in X}\mu(x) = 1$.
The support of $\mu$ is $\supp(\mu) := \{x\in X\mid \mu(x)>0\}$.
For a kernel $\sigma\colon X\to\distr(Y)$, we write $\sigma(y\mid x) := \sigma(x)(y)$.
We use boldface for random variables.
Indicator $\indicator_\varphi$ is $1$ when $\varphi$ holds and $0$ otherwise.
An alphabet $\abc$ is a finite set with at least two symbols, and we let $\abc^* := \bigcup_{i\ge0} \abc^i$ (known as the Kleene closure of $\abc$), and let $\varepsilon$ denote the empty word.
We also define $\abc^{\leq n} := \cup_{i=1}^n \abc^i$.
If $w\in \abc^*$, then $|w|$ is its length and $w(t)$ its $t$-th symbol.
We let $\varsigma {:} w$ be the word obtained by prepending symbol $\varsigma$ to word $w$.
The complement of a decision problem $X \subseteq \{0, 1\}^*$ is $\overline X := \{0, 1\}^* \setminus X$.
Given a matrix $X\in\real^{n\times m}$, its entry in row $i$ and column $j$ is $X_{i,j}$.
We write $X_{i,*}$ for row $i$ and $X_{*,j}$ for column $j$.
The all-ones column vector of dimension $n$ is denoted $\one_n$.

\subsection{Hidden Markov models}\label{sec:hmm_defs}

We consider finite discrete hidden Markov models over a finite observation alphabet $\abc = \{1, \dots, m\}$.
Whenever $m$ is clear from the context, we leave it implicit in the notation.
In fact, we later show that a binary alphabet ($m = 2$) is enough for our hardness results; see Proposition~\ref{thm:binalphabet}.

Formal definitions about HMMs are introduced next, without particular intuition, and we refer to \citet{rabiner1989tutorial} for a tutorial.

\begin{definition}
A \emph{hidden Markov model} (HMM) over $\abc$ is a tuple
$\model=(n,\pi,A,B)$ where $n\in\nat$ is the number of hidden states,
$\pi\in\real^n$ is the initial distribution,
$A\in\real^{n\times n}$ is the transition matrix, and
$B\in\real^{n\times m}$ is the output matrix.
They satisfy
\[
  \pi_i\ge0,\quad \sum_i\pi_i=1,\qquad
  A_{ij}\ge0,\quad \sum_j A_{ij}=1,\qquad
  B_{iy}\ge0,\quad \sum_y B_{iy}=1.
\]
We write $\hmmsn n$ for the set of $n$-state HMMs over the current alphabet and
$\hmms:=\bigcup_{n\ge1}\hmmsn n$.
\end{definition}

\begin{definition}
    An HMM $\model = (n, \pi, A, B)$ induces the following stochastic process:
    \begin{align*}
    \prob_\model(\rand q_1 = i) &:= \pi_i,
    \\ \prob_\model(\rand q_{t+1} = j \mid \rand q_{t} = i) &:= A_{ij},
    \\ \prob_\model(\rand z_{t} = k \mid \rand q_{t} = i) &:= B_{ik}.
    \end{align*}
\end{definition}

It follows that $\prob(\rand q_{t+1} = j) = \sum_{i=1}^n \prob(\rand q_{t} = i) \, A_{ij}$ and $\prob(\rand z_t = k) = \sum_{i=1}^n \prob(\rand q_t = i) \, B_{ik}$.

We use the emit-then-transition convention: the hidden state emits an observation, then transitions to the next state.

\begin{definition}[\cite{rabiner1989tutorial}]\label{def:hmm_output_probability}
For $\model=(n,\pi,A,B)$ and $w=w_1\cdots w_\ell\in\abc^\ell$,
\[
\prob_\model(w)
=
\sum_{q_1,\ldots,q_\ell\in \{1, \dots, n\}}
\pi_{q_1}B_{q_1,w_1}
\prod_{t=1}^{\ell-1} A_{q_t,q_{t+1}}B_{q_{t+1},w_{t+1}} .
\]
This formula sums over $n^\ell$ hidden-state sequences--an exponential computation.
Remark~\ref{rem:forward_backward} gives the usual polynomial-time forward computation of the same quantity (see \citet{rabiner1989tutorial}).
\end{definition}

We next define output equivalence for HMMs:

\begin{definition}\label{def:equivalence}
HMMs $\model$ and $\model'$ over the same alphabet are \emph{output equivalent}, written $\model\equiv\model'$, if and only if
$\prob_\model(w)=\prob_{\model'}(w)$ for every $w\in\abc^*$.
\end{definition}

\subsection{HMM parameterization}

An HMM is specified by the entries of $\pi$, $A$, and $B$, usually expressed in relation to a parameter vector $\theta$. 
A \emph{parameterization} specifies how these entries depend on $\theta$ and which parameter values are allowed. 
Parameterizations are standard in identifiability~\citep{catchpole1997detecting,cole2019parameter}.

To take a parameterized HMM as input to a decision problem, we require its description to be finite, so that it can be encoded on, say, the tape of a Turing machine~\citep{arora2009computational}.
As \citet{sullivant2023algebraic} argues, ``\emph{statistical models are semialgebraic sets}''.
Adopting this view, we consider semialgebraic parameterizations of HMMs, which indeed admit finite descriptions and capture the overwhelming majority of models studied in the literature.

\begin{definition}[HMM parameterization]\label{def:parameterization}
An \emph{HMM parameterization} is a pair $(\Theta,\phi)$, where
$\Theta\subseteq\real^K$ is semialgebraic (see Section~\ref{sec:algebraic_geometry}), and $\phi:\Theta\to\hmmsn n$ is given by polynomials with rational coefficients.
Thus every entry of $\pi$, $A$, and $B$ is a rational polynomial in $\theta_1,\ldots,\theta_K$.
We call $\phi$ the \emph{parameterization}, $\theta\in\Theta$ a \emph{parameter value}, and $\phi(\Theta) := \{\phi(\theta) \mid \theta \in \Theta\}$ the corresponding \emph{family}.
\end{definition}

The domain $\Theta$ being semialgebraic (see Section~\ref{sec:algebraic_geometry}), we encode the defining formula and the polynomials explicitly, with rational coefficients and monomial exponents in binary. 
The input size includes these descriptions and the lists of parameters and HMM entries.

For instance, HMMs are most commonly provided by explicitly listing the entries of $\pi$, $A$, and $B$, which can be viewed as a trivial parameterization where each parameter bijectively maps to each entry.
The stochasticity constraints for $(\pi,A,B)$ may simply be included in the description of $\Theta$.

\subsection{Probabilistic automata}

We provide the explicit definition of probabilistic automata, as these appear in Section~\ref{sec:simple_hard}.

\begin{definition}[Probabilistic automaton; \citet{paz1971introduction}]
\label{def:pa}
A \emph{probabilistic automaton} (PA) is a tuple
$\mathcal A=(Q,\Sigma,\alpha,\{M_a\}_{a\in\Sigma},\eta)$,
where $Q$ is a finite state set of size $n$ (with a fixed ordering), $\Sigma$ is a finite alphabet,
$\alpha\in[0,1]^n$ is a stochastic row vector, every
$M_a\in[0,1]^{n\times n}$ is row-stochastic, and
$\eta\in[0,1]^n$ is a column vector of final probabilities.
For a word $w=a_1\cdots a_\ell$, write
$M_w:=M_{a_1}\cdots M_{a_\ell}$ and $M_\varepsilon:=I$.
The acceptance probability of $w$ is $L_{\mathcal A}(w):=\alpha M_w\eta$.
Two PAs are \emph{equivalent} if and only if they have the same acceptance probability for every word.
\end{definition}

\subsection{The existential theory of the reals}\label{sec:etr}

The existential theory of the reals (ETR; see \citet{schaefer2026existential} for a survey), is the set of all true formulas of the form ``$\exists x_1, \dots, x_k \in \real$, $P(x_1, \dots, x_k) = 0$'', where $P$ is a quantifier-free system of polynomials with rational coefficients.
The decision problem for ETR is, given such a formula, to determine whether the formula holds or not.
The complexity class $\etr$ is defined as the downward closure of the existential theory of the reals under polynomial-time many-one (Karp) reductions.
The complement of $\etr$ is denoted $\coetr$.

Classical ETR decision algorithms place
$\etr \subseteq \pspace$~\citep{canny1988some,renegar1992computational}, and
$\np \subseteq \etr$ by encoding Boolean satisfiability with variables constrained by $x (1-x)=0$, negation by $\neg x = 1 - x$, and conjunction by $x y = 1$~(see for instance \cite{hillar2013most}).
The decision problem of determining if a quadratic ETR formula holds is denoted $\quadetr$, and it is easy to show that $\quadetr$ is $\etr$-complete (see \cite{schaefer2026existential}), since any polynomial can be transformed into a (polynomially-sized) system of quadratic equations by introducing helper variables (for instance, $x^3 = 0$ becomes $x^2 - y = 0, xy = 0$).

Allowing inequalities, disequalities, and Boolean combinations does not change the class.
For instance, $x \ge 0$ can be written in purely equational form as $\exists y \in \real, x - y^2 = 0$, disequality $x \ne 0$ can be written as $\exists y \in \real, xy - 1 = 0$, and $x > 0$ can be written as $\exists y \in \real, x y^2 - 1 = 0$.
Finite Boolean disjunction can be encoded by selector variables; for instance, $x - a = 0 \, \vee \, y - b = 0$ becomes $(x - a)z = 0, (y - b)(1-z) = 0, z(1-z)=0$.
Logical negation is then treated as an abbreviation for negating the comparison relation (i.e.\ $\neg (x = 0)$ is just $x \neq 0$), and implication is an abbreviation for disjunction (i.e.\ $x = 0 \to y = 0$ becomes $x \neq 0 \vee y = 0$).

We will repeatedly use \emph{$\etr$ certificates}.
To prove that a problem is in $\etr$, we describe a polynomial-size existential real formula whose satisfying assignments are exactly the desired witnesses.
We use such formulas for non-identifiability: a satisfying assignment consists of a rival HMM together with auxiliary variables certifying output equivalence.

\subsection{Basic definitions from algebraic geometry}
\label{sec:algebraic_geometry}

Here we provide basic definitions and refer to \citet{hartshorne2013algebraic} for details.
A set $S\subseteq\real^d$ is \emph{algebraic} if it is the common zero set of finitely many real polynomials with rational coefficients. 
It is \emph{semialgebraic} if membership in $S$ can be expressed by a finite Boolean combination of equations and inequalities of rational polynomials. 
Equivalently, a semialgebraic set is a finite union of sets, each defined by finitely many simultaneous polynomial equations and inequalities. 
In inputs to our computational decision problems, all coefficients are rational.

For a nonempty semialgebraic set $S\subseteq\real^d$, its \emph{real dimension}, denoted $\dim S$, is the largest $r$ such that projecting $S$ onto some $r$ of its coordinates gives a set with nonempty interior in $\real^r$. 
Thus a nonempty finite set has dimension $0$, a line or parabola has dimension $1$, a nonempty open subset of $\real^d$ has dimension $d$, and so on. 
In Section~\ref{sec:generic_identifiability}, we use the property that existentially quantifying some variables in a formula of polynomial equations and inequalities still defines a semialgebraic set in the remaining variables~\citep{basu2006algorithms}.

\section{Polynomial-size certificates of HMM output equivalence}\label{sec:hmm_equivalence_certificate}

This section constructs the output-equivalence certificate used in our later complexity upper bounds.
For HMMs with $n$ and $n'$ hidden states, the classical theorem by \citet{paz1971introduction} says that, if the models are different in their output probabilities, then some word of length at most $n+n'-1$ witnesses the difference.
Thus equivalence could be checked by comparing all words up to this length, but that list is exponential in the number of hidden states.

The so-called forward procedure gives a more compact encoding.
After reading a word $w$, the probability that an HMM is in a given hidden state is given by the row vector $\pi\transpose T_w$. 
Although there are exponentially many words, all these vectors live in an $n$-dimensional vector space, so at most $n$ such can be linearly independent.
Tzeng's algorithm exploits this to decide probabilistic automaton equivalence in polynomial time by keeping only a basis of the vectors generated by words, and \citet{kiefer2014stability} show how the condition that this basis remains closed after reading another symbol can be expressed in $\etr$.
We extend the same idea to HMMs and apply it in parallel to the target and rival HMM.
The resulting certificate replaces exponentially many word equalities by polynomially many real variables and equations.

\subsection{$\etr$-certificates for HMM output equivalence}

First, we recall a well-established idea from classical literature known as the forward procedure, which we later use in our $\etr$ certificate constructions. 
The forward procedure evaluates a word by a linear pass of matrix multiplications instead of explicitly summing over all hidden-state paths like done in Definition~\ref{def:hmm_output_probability}.
We refer the reader to Section~III.A in \citet{rabiner1989tutorial} for an instructive tutorial.

\begin{remark}[Forward procedure]\label{rem:forward_backward}
For $y\in\abc$, let $D(y):=\diag(B_{1y},\ldots,B_{ny})$, $T_y:=D(y)A$.
For $w=w_1\cdots w_\ell$, let $T_w:=T_{w_1}\cdots T_{w_\ell}$ and
$T_\varepsilon:=I_n$.
Then, $\prob_\model(w)=\pi\transpose T_w\one_n$.
Note that the final multiplication by $A$ is harmless because $A\one_n=\one_n$.
\end{remark}

In the theorem below we provide an $\etr$ certificate $(F, \{\vv*{T}{y}\}_{y\in\abc})$ for HMM output equivalence.
The rows of $F$ provide the basis of the \emph{forward space}, $V := \linspan \{(\pi\transpose,-{\pi'}\transpose)T_w^+ \mid w \in \abc^*\}$, 
which is a technique we borrow from the probabilistic automaton equivalence algorithm by~\citet{tzeng1992polynomial} and is also used in the $\etr$ certificate of probabilistic automaton non-minimality of \citet{kiefer2014stability}.
The equation $FT_y^+=\vv*{T}{y}F$ says that, after appending one more observation $y$, every such vector is still a linear combination of the rows of $F$.
In Section 3.1 of \citet{kiefer2014stability}, a similar equation is defined for probabilistic automaton non-minimality, which they present as a Brzozowski-like minimization (as in \citet{brzozowski1962canonical}).

\begin{theorem}[$\exists\mathbb{R}$ certificate for HMM output equivalence]
\label{thm:kw-certificate}
Let $\model=(n,\pi,A,B)$ and $\model'=(n',\pi',A',B')$ be HMMs over the same alphabet.
Then, $\model\equiv\model'$ if and only if there exist
$F\in\real^{(n+n')\times(n+n')}$ and
$\vv*{T}{y}\in\real^{(n+n')\times(n+n')}$ for $y\in\abc$ such that
\begin{align}
F_{1,*}&=(\pi\transpose,-{\pi'}\transpose), \label{eq:equiv-cert-firstrow}\\
F\one_{n+n'}&=0, \label{eq:equiv-cert-orthogonal}\\
FT_y^+&=\vv*{T}{y}F
\qquad\text{for every }y\in\abc, \label{eq:equiv-cert-invariance}
\end{align}
where we let $T_y$ and $T'_y$ be the matrices from Remark~\ref{rem:forward_backward} to define
\[
T_y^+:=
\begin{pmatrix}
T_y&0\\
0&T'_y
\end{pmatrix}
\in\real^{(n+n')\times(n+n')}.
\]
\end{theorem}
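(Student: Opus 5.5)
The plan is to prove the two directions separately, working throughout with the combined vector $u := (\pi\transpose,-{\pi'}\transpose)$ and the block matrices $T_w^+ := T^+_{w_1}\cdots T^+_{w_\ell}$ (with $T_\varepsilon^+ = I$). By Remark~\ref{rem:forward_backward}, for every word $w$ we have
\[
u\,T_w^+\,\one_{n+n'} = \prob_\model(w) - \prob_{\model'}(w),
\]
because the block-diagonal structure splits the product into the two forward computations. Hence $\model\equiv\model'$ if and only if every vector of the forward space $V=\linspan\{uT_w^+ \mid w\in\abc^*\}$ is orthogonal to $\one_{n+n'}$. Both directions reduce to this reformulation.

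\emph{Soundness} (a certificate implies equivalence). Assume $F$ and $\{\vv*{T}{y}\}$ satisfy \eqref{eq:equiv-cert-firstrow}--\eqref{eq:equiv-cert-invariance}. By induction on $|w|$, every vector $xF T_w^+$ lies in the row space of $F$: indeed $xFT_y^+ = (x\vv*{T}{y})F$. Equivalently, $FT_w^+ = \vv*{T}{w}F$, where $\vv*{T}{w}$ denotes the corresponding product of the $\vv*{T}{y}$. Taking the first row and applying \eqref{eq:equiv-cert-firstrow} gives $uT_w^+ = e_1\transpose \vv*{T}{w} F$. Multiplying on the right by $\one_{n+n'}$ and using \eqref{eq:equiv-cert-orthogonal} shows that this is $0$. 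So $\prob_\model(w)=\prob_{\model'}(w)$ for all $w$.

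\emph{Completeness} (equivalence implies a certificate exists). Assume $\model\equiv\model'$. Then $V$ is a subspace of $\real^{n+n'}$ of some dimension $d\le n+n'$, and $V$ is invariant under right multiplication by every $T_y^+$, since $uT_w^+T_y^+ = uT_{wy}^+$. Choose a basis $v_1=u, v_2,\dots,v_d$ of $V$ consisting of vectors $uT_w^+$; this is possible by Tzeng-style closure, taking $v_1 = uT_\varepsilon^+$. The degenerate case $u=0$ is handled separately: $u=0$ cannot occur, since $\pi$ is a distribution, so $u\neq0$. Let $F$ have rows $v_1,\dots,v_d$ followed by $n+n'-d$ zero rows. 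Then \eqref{eq:equiv-cert-firstrow} holds by construction, and \eqref{eq:equiv-cert-orthogonal} holds because every vector of $V$ is orthogonal to $\one_{n+n'}$ by equivalence. For \eqref{eq:equiv-cert-invariance}, each row $v_iT_y^+$ lies in $V$ and so equals $\sum_j c^{(y)}_{ij}v_j$. We define $\vv*{T}{y}$ to have entries $c^{(y)}_{ij}$ in its top-left $d\times d$ block and zeros elsewhere. The zero rows of $F$ map to zero on both sides, so the identity holds.

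The main obstacle, mild as it is, lies in the completeness direction. The matrix $F$ is required to be square of size $n+n'$, whereas $V$ may have smaller dimension. The padding-by-zero-rows trick resolves this, but it must be checked carefully that it respects \eqref{eq:equiv-cert-invariance}: the padded rows of $\vv*{T}{y}F$ must vanish, which is why the corresponding rows of $\vv*{T}{y}$ are set to zero.
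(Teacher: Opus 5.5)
Your proof is correct and follows essentially the same route as the paper. It uses the same reformulation via $(\pi\transpose,-{\pi'}\transpose)T_w^+\one_{n+n'}=\prob_\model(w)-\prob_{\model'}(w)$, the same induction keeping these vectors in the row span of $F$ for soundness, and the same zero-padded basis of the forward space $V$ with $(\pi\transpose,-{\pi'}\transpose)$ as first row for completeness. Your explicit checks that $(\pi\transpose,-{\pi'}\transpose)\neq 0$ and that the padded rows of $\vv*{T}{y}$ are zero fill in small details the paper leaves implicit.
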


\begin{proof}
For $w=w_1\cdots w_\ell$, we write $T_w^+:=T_{w_1}^+\cdots T_{w_\ell}^+$ and $T_\varepsilon^+:=I_{n+n'}$.
Then,
\begin{equation}\label{eq:cert-to-prob}
(\pi\transpose,-{\pi'}\transpose)T_w^+\one_{n+n'}
=
\prob_\model(w)-\prob_{\model'}(w).
\end{equation}
Therefore, $\model \equiv \model'$ if and only if $(\pi\transpose,-{\pi'}\transpose)T_w^+\one_{n+n'} = 0$ for all $w \in \abc^*$.

$(\Longleftarrow)$ direction: if matrices $F$ and $\{\vv*{T}{y}\}_{y\in\abc}$ satisfying \eqref{eq:equiv-cert-firstrow}--\eqref{eq:equiv-cert-invariance} exist, then $\model \equiv \model'$.

Suppose that such $F$ and $\{\vv*{T}{y}\}_{y\in\abc}$ exist.
We claim that, for every word $w\in\abc^*$, the row vector $(\pi\transpose,-{\pi'}\transpose)T^{+}_w$ belongs to the row span of $F$ (i.e., is a linear combination of the rows in $F$), which we prove by induction on $w$.
This is immediate for $w=\varepsilon$ by~\eqref{eq:equiv-cert-firstrow}.
In the inductive case, if $(\pi\transpose,-{\pi'}\transpose)T_w^+=uF$, then for all $y\in\abc$,
\[
(\pi\transpose,-{\pi'}\transpose)T_{wy}^+
=
uFT_y^+
=
u\vv*{T}{y}F,
\]
so $(\pi\transpose,-{\pi'}\transpose)T_{wy}^+$ is still in the row span of $F$.
By~\eqref{eq:equiv-cert-orthogonal}, every vector $v$ in the row span of $F$ satisfies $v \one_{n+n'} = 0$, and equation~\eqref{eq:cert-to-prob} thus gives $\prob_\model(w)=\prob_{\model'}(w)$ for every word $w$.

$(\Longrightarrow)$ direction: if $\model\equiv\model'$ then matrices $F$ and $\{\vv*{T}{y}\}_{y\in\abc}$ satisfying \eqref{eq:equiv-cert-firstrow}--\eqref{eq:equiv-cert-invariance} exist.

As in \citet{tzeng1992polynomial}, consider the forward space
\[
V:=
\linspan
\left\{
(\pi\transpose,-{\pi'}\transpose)T_w^+
\mid w\in\abc^*
\right\}.
\]
Since $V$ has dimension $d \leq n + n'$, choose a basis $b_1,\ldots,b_d$ of $V$ such that $b_1=(\pi\transpose,-{\pi'}\transpose)$.
Then, construct matrix $F$ by letting the first $d$ rows be this basis and padding the remaining with zeros.
By construction of $F$, its row-span equals to $V$.
Equation \eqref{eq:equiv-cert-firstrow} thus holds immediately.
Moreover, because $\model \equiv \model'$ and by \eqref{eq:cert-to-prob}, we have that $(\pi\transpose,-{\pi'}\transpose)T_w^+\one_{n+n'} = 0$ for every word $w \in \abc^*$.
Hence every vector in $V$ is orthogonal to $\one_{n+n'}$, and we conclude $F \one_{n+n'} = 0$, i.e.~\eqref{eq:equiv-cert-orthogonal} is satisfied.

It remains to construct matrices $\vv*{T}{y}$ satisfying \eqref{eq:equiv-cert-invariance}.
For each $y\in\abc$, the space $V$ is invariant under right multiplication by $T^{+}_y$; that is, every vector $(\pi\transpose, -{\pi'}\transpose)T^{+}_w$ of $V$ satisfies
\[
  ((\pi\transpose,-{\pi'}\transpose)T^{+}_w)T^{+}_y = (\pi\transpose,-{\pi'}\transpose)T^{+}_{wy}\in V.
\]
Therefore, each row of $FT^{+}_y$ lies in the row-span of $F$.
For every row of $FT^{+}_y$, choose coefficients expressing that row as a linear combination of the rows of $F$. 
Then, placing these coefficient rows into a matrix gives $\vv*{T}{y}\in\real^{(n+n')\times (n+n')}$ with $FT^{+}_y=\vv*{T}{y}F$.
Thus equations \eqref{eq:equiv-cert-firstrow}--\eqref{eq:equiv-cert-invariance} are satisfied by $F$ and $\{\vv*{T}{y}\}_{y \in \abc}$.
\end{proof}

\begin{remark}
The equations above are polynomial in the HMM entries and in the entries of the certificate matrices.
There are $O(m(n+n')^2)$ auxiliary variables and polynomially many equations.
Thus Theorem~\ref{thm:kw-certificate} gives a polynomial-size $\etr$ certificate for HMM output equivalence, exactly in the form needed below when the rival HMM is itself existentially quantified.
\end{remark}

\section{Deterministic identifiability up to equality}
\label{sec:deterministic_hmm_equality}

The target model in our decision problems is always fully specified. 

\begin{problem}\label{problem:iddeteq}
Let $\iddeteq$ denote the following problem.
The input is an HMM parameterization $\phi:\Theta\to\hmmsn n$, and a rational parameter value $\theta^\star\in\Theta$ defining the target $\target:=\phi(\theta^\star)$.
The decision problem is whether, for every $\theta'\in\Theta$, if $\target\equiv\phi(\theta')$ then $\target=\phi(\theta')$.
\end{problem}

We write $\coiddeteq:=\overline{\iddeteq}$ for the corresponding non-identifiability problem.
Thus, $(\phi,\theta^\star)\in\coiddeteq$ iff there exists $\theta'\in\Theta$ such that
$\phi(\theta')\equiv\target$ but $\phi(\theta')\ne\target$.
We use the same naming convention for the complements of the other identification problems throughout.

Once a target $\phi(\theta^\star)$ is fixed, we call any other HMM $\phi(\theta')$ allowed by the parameterization $\Theta$ a \emph{rival HMM}.
Here ``rival'' only means a candidate alternative to the target and it does not presuppose that the two HMMs are distinct or output equivalent.

\subsection{Deterministic identifiability up to equality is in $\coetr$}

\begin{theorem}
\label{thm:det-eq-etr-certificate}
$\coiddeteq$ is in $\etr$, and therefore $\iddeteq$ is in $\coetr\subseteq\pspace$.
\end{theorem}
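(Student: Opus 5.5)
To show $\coiddeteq\in\etr$, I will build, in polynomial time from the input $(\phi,\theta^\star)$, an existential formula over the reals that is satisfiable exactly when a rival parameter exists. The formula is
\[
\exists \theta'\in\real^K,\ \exists F,\ \exists \{\vv*{T}{y}\}_{y\in\abc}:\quad \theta'\in\Theta \;\wedge\; \mathrm{Cert}(\phi(\theta^\star),\phi(\theta'),F,\{\vv*{T}{y}\}) \;\wedge\; \phi(\theta')\neq\target .
\]
Its three conjuncts are as follows.

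\emph{Membership $\theta'\in\Theta$.} This is the given quantifier-free semialgebraic description of $\Theta$ with $\theta'$ substituted. It is a Boolean combination of polynomial (in)equalities, which Section~\ref{sec:etr} shows is allowed inside $\etr$. The definition of $\Theta$ may contain existential quantifiers of its own; those are simply added to the outer block.

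\emph{Equivalence certificate.} This is the system \eqref{eq:equiv-cert-firstrow}--\eqref{eq:equiv-cert-invariance} of Theorem~\ref{thm:kw-certificate}, with $\model=\target$ and $\model'=\phi(\theta')$. The target's entries are rational constants, obtained by evaluating $\phi$ at the rational $\theta^\star$. The rival's entries $\pi',A',B'$ are the polynomials $\phi(\theta')$. The entries of $T'_y=D'(y)A'$ are products $B'_{iy}A'_{ij}$, so they are again polynomials in $\theta'$, and the block matrix $T_y^+$ has polynomial entries. Each equation $FT_y^+=\vv*{T}{y}F$ is then a polynomial identity, entry by entry, in the variables $\theta',F,\vv*{T}{y}$. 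The number of variables and equations is $O(m(2n)^2)$, plus $K$ for $\theta'$.

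\emph{Distinctness $\phi(\theta')\neq\target$.} This is the disjunction, over all entries of $(\pi,A,B)$, of $\phi(\theta')_e-\target_e\neq0$. Using the encodings of Section~\ref{sec:etr}, each disequality becomes $x\cdot z_e=1$, and the disjunction is handled with selector variables. Equivalently, I can use a single constraint $\big(\sum_e(\phi(\theta')_e-\target_e)^2\big)\cdot z=1$.

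Correctness follows directly from Theorem~\ref{thm:kw-certificate}. If the formula is satisfied, the certificate forces $\phi(\theta')\equiv\target$, so $\theta'$ is a rival that is equivalent but distinct. Conversely, if such a $\theta'$ exists, Theorem~\ref{thm:kw-certificate} supplies $F$ and $\vv*{T}{y}$. The rest of the argument is bookkeeping. Since $\iddeteq$ is the complement of $\coiddeteq$, it lies in $\coetr$. It is in $\pspace$ because $\etr\subseteq\pspace$ and $\pspace$ is closed under complement.

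The one point needing care is that the reduction really is polynomial. The polynomials of $\phi$ are encoded with exponents in binary, so expanding them naively could blow up. I will avoid this by introducing a fresh variable for each HMM entry, $x_e=\phi_e(\theta')$, and computing each monomial by repeated squaring with auxiliary variables. This gives a formula whose size is polynomial in the input encoding. The same trick gives the rational constants $\target$ without computing them explicitly: add variables constrained to equal $\phi(\theta^\star)$, or evaluate them directly when that is cheap.
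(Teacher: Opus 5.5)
Your proposal is correct and follows the paper's proof essentially verbatim. You existentially guess $\theta'$ together with the certificate matrices of Theorem~\ref{thm:kw-certificate}, conjoin membership in $\Theta$, the certificate equations, and distinctness of $\phi(\theta')$ from $\target$, and get both directions of correctness from Theorem~\ref{thm:kw-certificate}; your handling of binary-encoded exponents (auxiliary variables with repeated squaring instead of expanding $\phi(\theta^\star)$ explicitly) is a sound detail that the paper leaves implicit.
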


\begin{proof}
Existentially guess a rival parameter value $\theta'\in\real^K$ together with the matrices
$F$ and $\{\vv*{T}{y}\}_{y\in\abc}$ from Theorem~\ref{thm:kw-certificate}.
The formula contains:
\begin{enumerate}[label=(\roman*)]
    \item the polynomial constraints defining $\theta'\in\Theta$;
    \item equations~\eqref{eq:equiv-cert-firstrow}--\eqref{eq:equiv-cert-invariance}, with the rival entries given by $\phi(\theta')$;
    \item the non-equality of the two HMMs $\|\phi(\theta')-\phi(\theta^\star)\|_2^2>0$.
\end{enumerate}
All entries of $\phi(\theta')$ are polynomial in $\theta'$, so every constraint above is polynomial.
If the formula is satisfiable, $\phi(\theta')$ is exactly a distinct output-equivalent rival.
Conversely, any such rival supplies $\theta'$, and Theorem~\ref{thm:kw-certificate} supplies the equivalence matrices.
Hence $\coiddeteq\in\etr$.
\end{proof}

\subsection{Deterministic identifiability up to equality is $\coetr$-hard}\label{sec:simple_hard}

Equivalently, we prove the $\etr$-hardness of $\coiddeteq$, which requires encoding arbitrary polynomial equations as a non-identifiability problem.
For fully general parameterizations, as in Definition~\ref{def:parameterization}, $\etr$-hardness follows trivially: the entries are already arbitrary polynomials, so the equations can be encoded directly.
It is therefore more productive to consider the computational hardness of restricted classes of parameterizations.

This section covers \emph{simple} parameterizations in the sense of  \citet{junges2018finite} and \citet{winkler2019complexity}, where every entry is a rational number, a variable, or $1$ minus a variable.
Another kind of restricted class, known as masked or rectangular parameterizations, is covered later in Section~\ref{sec:mask_hard}.
We start with the definition of simple parameterizations:

\begin{definition}[Simple parameterizations; \citet{junges2018finite}]
A parameterization $\phi:\Theta\to\hmmsn n$ is \emph{simple} if every HMM entry is either a rational constant, a parameter $\theta_i$, or its complement $1-\theta_i$, and
\[
    \Theta := \{\theta\in[0,1]^K\mid \phi(\theta)\text{ is stochastic}\}.
\]
\end{definition}

Our approach is to first establish the $\coetr$-hardness of the identifiability of probabilistic automata (PA; Definition~\ref{def:pa}) with simple parameterizations, and then reduce to HMM identifiability with simple parameterizations. 
We use the same meaning of \emph{simple} for PAs: every entry of
$\alpha$, the matrices $M_a$, and $\eta$ is either a rational constant, a parameter $\theta_i$, or its complement $1-\theta_i$. 

The following example explains how an $\etr$ question becomes a question about whether a simply parameterized PA has a distinct equivalent PA. 
The approach is to extend the given system of polynomial equations by adding a dummy solution, and then encode it as a PA with simple parameters where the dummy solution is the target PA, and any distinct rival equivalent to the target corresponds precisely to a solution to the original system of equations.

\begin{example}[$\exists\mathbb{R}$ to PA identifiability]
\label{ex:pa_identifiability}
Consider the system
\begin{equation}\label{eq:example_pa_hard}
xy=\tfrac14,\qquad x+y=1,\qquad x,y\in[0,1].
\end{equation}
To add a dummy solution, introduce a selector variable $z\in[0,1]$ and extend~\eqref{eq:example_pa_hard} as follows
\begin{equation}\label{eq:example_pa_hard_extended}
\begin{gathered}
 zxy = z/4, \quad z(x+y) = z, \quad z(1-z)=0, \quad (1-z)x = 0, \quad (1-z)y = 0.
\end{gathered}
\end{equation}
The equation $z(1-z)=0$ forces $z\in\{0,1\}$. If $z=0$, the last two equations force $x=y=0$, giving the dummy solution. If $z=1$, the remaining constraints reduce to the original system~\eqref{eq:example_pa_hard}.

Next, we encode system~\eqref{eq:example_pa_hard_extended} as a PA with simple parameterizations:
initially split into a target branch and a rival branch (Figure~\ref{fig:example_pa_hard}~\subref{fig:gadget_a}), chain transitions to express products (Figure~\ref{fig:example_pa_hard}~\subref{fig:gadget_b}), and merge branches to compute weighted sums~(Figure~\ref{fig:example_pa_hard}~\subref{fig:gadget_c}).
Find the full diagram in Appendix~\ref{app:example_pa_hard}.
Its only words that can have nonzero acceptance probability are
\begin{align*}
 L_{\mathcal A}(\sigma_1\sigma_p\sigma_1\sigma_1)&=zxy+(1-z)/4, &
 L_{\mathcal A}(\sigma_1\sigma_s\sigma_1)&=z(x+y)/2+(1-z)/2,\\
 L_{\mathcal A}(\sigma_1\sigma_x\sigma_1)&=(1-z)x, &
 L_{\mathcal A}(\sigma_1\sigma_y\sigma_1)&=(1-z)y,\\
 L_{\mathcal A}(\sigma_1\sigma_z)&=z(1-z).
\end{align*}
By taking $x=y=z=0$ as the target PA, 
there exists a distinct equivalent rival (within this family of PAs) if and only if~\eqref{eq:example_pa_hard} has a solution. 
Here, that rival is $x=y=1/2, z = 1$.
\end{example}

\tikzset{
  gadget/.style={
    ->, >=latex, auto, node distance=3cm,
    transform shape,      %
    every state/.style={minimum size=0.9cm, inner sep=1pt},
    every node/.append style={font=\small},  %
  }
}
\newcommand{\gadgetscale}{0.8}
\begin{figure}
    \centering
    \begin{subfigure}[b]{0.2\linewidth}
        \centering
        \begin{tikzpicture}[gadget,scale=\gadgetscale]
            \node[state] at (0, 0)  (q0) {$q_\text{init}$};
            \node[state] at (2.5, 1)  (z)  {$q_z$};
            \node[state] at (2.5, -1) (nz) {$q_{\bar z}$};

            \draw[->] (0, 1) -- node[right] {$1$} (q0);
            \draw
            (q0) edge node[above,sloped] {$\sigma_1, z$}   (z)
            (q0) edge node[below,sloped] {$\sigma_1, 1-z$} (nz)
            ;
        \end{tikzpicture}
        \caption{Initial gadget.}\label{fig:gadget_a}
    \end{subfigure}
    \hfill
    \begin{subfigure}[b]{0.4\linewidth}
        \centering
        \begin{tikzpicture}[gadget,scale=\gadgetscale]
            \node[state] at (0, 0)    (z)   {$q_z$};
            \node[state] at (0, -2)   (nz)  {$q_{\bar z}$};
            \node[state] at (2, 0)  (xy)  {$q_{xy}$};
            \node[state] at (4, 0)    (y)   {$q_y$};
            \node[state] at (2, -2) (c2)  {$q_{c_2}$};
            \node[state] at (4, -2)   (c1)  {$q_{c_1}$};
            \node[state,accepting] at (6, -1) (acc) {$q_\text{acc}$};

            \draw
            (z)  edge node[above] {$\sigma_p, 1$}   (xy)
            (xy) edge node[above] {$\sigma_1, x$}   (y)
            (y)  edge node[above,sloped] {$\sigma_p, y$}   (acc)
            (nz) edge node[above] {$\sigma_p, 1/4$} (c2)
            (c2) edge node[above] {$\sigma_1, 1$}   (c1)
            (c1) edge node[below,sloped] {$\sigma_1, 1$}   (acc)
            ;
        \end{tikzpicture}
        \caption{Product gadget.}\label{fig:gadget_b}
    \end{subfigure}
    \hfill
    \begin{subfigure}[b]{0.35\linewidth}
        \centering
        \begin{tikzpicture}[gadget,scale=\gadgetscale]
            \node[state] at (0, 0)   (z)  {$q_z$};
            \node[state] at (0, -2)  (nz) {$q_{\bar z}$};
            \node[state] at (2.5, 0)  (x)  {$q_{x}$};
            \node[state] at (2.5, -1)   (y)  {$q_y$};
            \node[state] at (2.5, -2)  (c1) {$q_{c_1}$};
            \node[state,accepting] at (5, -1) (acc) {$q_\text{acc}$};

            \draw
            (z)  edge node[above] {$\sigma_s, 1/2$} (x)
            (z)  edge node[below,sloped] {$\sigma_s, 1/2$} (y)
            (x)  edge[bend left] node[above] {$\sigma_1, x$}   (acc)
            (y)  edge node[above] {$\sigma_1, y$}   (acc)
            (nz) edge node[above] {$\sigma_s, 1/2$} (c1)
            (c1) edge[bend right] node[above] {$\sigma_1, 1$}   (acc)
            ;
        \end{tikzpicture}
        \caption{Sum gadget.}\label{fig:gadget_c}
    \end{subfigure}

    \caption{Gadgets in a simply parameterized PA encoding \eqref{eq:example_pa_hard_extended}. Transitions are labeled with \emph{letter, probability} pairs, and the alphabet is $\{\sigma_1, \sigma_s, \sigma_p, \sigma_x, \sigma_y, \sigma_z\}$.}
    \label{fig:example_pa_hard}
\end{figure}

Example~\ref{ex:pa_identifiability} extends to general systems of polynomial equations. 
Encoding polynomial equations in PA acceptance probabilities already appears in \citet[Theorem~5.6]{lenhardt2009probabilistic}, \citet[Remark~7]{chonev2017reachability}, and \citet[Lemma~20]{winkler2019complexity}.
Our proof of Theorem~\ref{thm:pa_identifiability_hard} takes inspiration from this prior work to prove the $\etr$-hardness of PA identifiability, and later in Theorem~\ref{thm:iddeteq_simple_etrhard} the result is extended to HMMs with simple parameters.

To construct a PA for a general system of polynomial equations, we move negative terms to the opposite side and divide both sides by a sufficiently large constant.
Then, let the PA assign a branch to each term, where the branch probability supplies the coefficient, and transitions multiply the variables. 
All accepting paths merge after reading the same word, so their probabilities sum to the value of the polynomial.
The challenge is that if both sides of the equation depend on (unknown) rival parameters, their values are unknown in advance and cannot be encoded as a target acceptance probability.
We overcome this by using one word per side and multiplying both acceptance probabilities by a shared scalar parameter.%
If the two sides are equal, this scalar parameter can make both probabilities match a given positive target acceptance probability.
Find the full proof in Appendix~\ref{app:pa_dentifiability_hard}.

Formally, the input to \emph{PA non-identifiability up to equality} is a simply parameterized PA $\mathcal A(\theta)$ and a rational parameter value $\theta^\star$ specifying its target. 
The question is whether the same family contains a PA equivalent to $\mathcal A(\theta^\star)$ but different in at least one numerical entry, with state labels fixed.

\begin{restatable}{theorem}{paetrhard}\label{thm:pa_identifiability_hard}
Probabilistic automaton non-identifiability under simple parameterizations is $\etr$-hard.
\end{restatable}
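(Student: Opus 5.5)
We reduce from a suitable $\etr$-complete variant of polynomial feasibility, generalizing Example~\ref{ex:pa_identifiability}. The starting point is feasibility of a system $p_1=\dots=p_r=0$ in variables $x\in[0,1]^k$. We may assume each $p_j$ has degree at most $2$: the quadratic reduction of Section~\ref{sec:etr} applies, and bounded-domain variants of $\quadetr$ are $\etr$-complete by the standard rescaling arguments surveyed by \citet{schaefer2026existential}.

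\textbf{Dummy solution.} We add a selector $z\in[0,1]$ and form the extended system
\[
z\,p_j(x)=0 \ (j\le r),\qquad z(1-z)=0,\qquad (1-z)x_i=0 \ (i\le k).
\]
It always has the dummy solution $x=0,\ z=0$, and its solutions with $z=1$ are exactly the solutions of the original system. The target is $\theta^\star=(x,z)=0$. We must make sure the target PA differs from every rival with $z=1$. This holds because $z$ itself appears as an entry, namely the probability of the first branching.

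\textbf{Split and rescale.} For each equation we move the negative monomials to the right-hand side, writing $z\,P_j^+(x)=z\,P_j^-(x)$ with nonnegative rational coefficients. We then divide by a common constant $N$, chosen large enough that the coefficients of each side, together with a slack coefficient, sum to at most $1$. This is the obstacle flagged in the text: both sides depend on the rival, so neither can be compared with a fixed target value.

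\textbf{Balancing scalar.} I would use the trick the text suggests. We introduce a shared parameter $s$ and gadgets that realize, on two distinct words $u_j^+$ and $u_j^-$, the acceptance probabilities
\[
c\bigl(z\,P_j^\pm(x)/N+(1-z)\,\kappa\bigr)\cdot s,
\]
where $c$ is a fixed constant. The constant $\kappa>0$ comes from the $\bar z$ branch, so that the target value $c\kappa s^\star$ is positive and identical for both words. Each monomial is realized as in the product gadget: a branch whose first transition carries its rational coefficient, followed by a chain of transitions labeled $x_{i_1},x_{i_2}$, then a transition of weight $s$, and finally merging into $q_\text{acc}$. The sum over branches comes from the sum gadget. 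Further words realize $z(1-z)$ and $(1-z)x_i$, with target value $0$.

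\textbf{Correctness.} Equivalence with the target requires both words of each pair to equal the same positive constant, which forces the two sides to be equal: $z P_j^+ = z P_j^-$. Conversely, given a solution $x$ with $z=1$, we need one $s$ that makes every pair hit its target value. This requires the common value $P_j^+(x)/N$ to be a prescribed constant for every $j$, which need not hold across different $j$. I would therefore give each equation its own scalar $s_j$, or add a slack monomial branch that pads each side to a common value. If this is still problematic, the fallback is to replace the system by the single equation $\sum_j p_j^2=0$, so that only one pair is needed.

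\textbf{Well-definedness and verification.} Finally we ensure every matrix $M_a$ is row-stochastic. Missing probability mass is sent to a non-accepting sink, and $1-\theta$ entries are used where a parameter needs a complement. The encoding is polynomial because there are polynomially many monomials, each with a constant-length chain. The remaining step is to check that all other words have acceptance probability zero, or that their values are independent of the parameters, so that equivalence reduces exactly to the listed equalities.
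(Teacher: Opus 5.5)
\textbf{Gap: the converse direction fails.} Your forward direction is sound: an equivalent distinct rival yields a solution. The converse fails, because the positive constant $\kappa$ in your comparison words sits only on the $\bar z$ branch. For $z=1$ the words $u_j^\pm$ then have probabilities $c\,s\,P_j^\pm(x)/N$ with no offset. The target value $c\kappa s^\star$ is positive, so a rival would need $s=\kappa s^\star N/P_j^+(x)$. This is impossible when $P_j^+(x)=0$, and lies outside $[0,1]$ when $P_j^+(x)<\kappa s^\star N$.

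A concrete failure is the satisfiable one-equation system $x_1=0$. Here $P^-\equiv 0$, so $u^-$ has probability $c(1-z)\kappa s$, which vanishes at $z=1$ for every $x$ and $s$. Hence no rival with $z=1$ exists, and the target is identifiable although the instance is a yes-instance. The common value of the two sides at a solution is unknown and may be arbitrarily small, so no fixed choice of $c,\kappa,s^\star$ helps. Your fallbacks do not address this either. Separate scalars $s_j$ hit the same obstruction equation by equation. The single equation $\sum_j p_j^2=0$ can still have an identically zero side, e.g.\ $p=x_1$ gives $p^2=x_1^2$. Your ``slack monomial'' remark targets the cross-equation mismatch, not this positivity issue.

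\textbf{The missing idea.} Each comparison word needs a constant contribution that does not depend on the selector. Give both words a branch of weight $\kappa$ that is taken regardless of $z$. You can do this by testing $z$ only inside the monomial branches, or by attaching the same constant branch to both the $z$ and the $\bar z$ side. The words then have probabilities $c\,s\,(\kappa+zP_j^\pm(x)/N)$, and the argument goes through:
\begin{itemize}
\item At $z=0$ this still forces $s=s^\star$, hence equality with the target.
\item At $z=1$, positivity of the target still forces $s>0$ and $P_j^+=P_j^-$.
\item A solution with common value $v$ is now matched by $s=s^\star\kappa/(\kappa+v)\in(0,s^\star]$, which is always a valid parameter.
\end{itemize}

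This is exactly the role of the constant $1$ in the paper's comparison words $\frac{\lambda}{4}(1+zp(x))$. There the target has $\lambda^\star=1/2$ and a rival uses $\lambda=1/(2(1+p(x)))\in[1/4,1/2]$. With this offset, your per-equation scalars $s_j$ would make the system-based version work. The paper avoids multiple equations altogether by reducing from a single degree-$4$ equation on $[0,1]^k$ \citep[Lemma~29]{winkler2019complexity}, which needs only one shared $\lambda$.
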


We next turn the PA into an HMM that randomly chooses a word, simulates the PA on that word, and emits a symbol indicating acceptance or rejection. 
Conversions between automata and HMMs are well studied; see \citet[Proposition~9]{casacuberta1990some} and \citet[Section~3]{dupont2005links}. 
These transformations combine parameters such that the parameterizations are no longer simple. 
Moreover, our PA assigns an acceptance probability to each supplied word, whereas the HMM must generate its observations. 
We therefore choose the input letters with fixed probabilities and record both the chosen word and the PA's acceptance decision in the observations.

For our reduction, %
each PA entry is copied directly into an HMM entry, so the resulting parameterization remains simple and distinct PAs give distinct HMMs. 
The probability of observing a word followed by acceptance is the PA's acceptance probability multiplied by a known positive factor. 
This correspondence also preserves equivalence, as verified in Appendix~\ref{app:hmm_identifiability_simple}, and yields the following theorem.

\begin{restatable}{theorem}{hmmetrhard}\label{thm:iddeteq_simple_etrhard}
$\coiddeteq$ is $\etr$-hard (and thus complete) even under simple parameterizations.
Equivalently, $\iddeteq$ is $\coetr$-hard under simple parameterizations.
\end{restatable}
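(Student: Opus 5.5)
We reduce from PA non-identifiability under simple parameterizations, which is $\etr$-hard by Theorem~\ref{thm:pa_identifiability_hard}; membership in $\etr$ is Theorem~\ref{thm:det-eq-etr-certificate}.

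\textbf{Construction.} Given a simply parameterized PA $\mathcal A(\theta)=(Q,\Sigma,\alpha,\{M_a\},\eta)$ with target $\theta^\star$, we build a simply parameterized HMM $\phi(\theta)$ with the same parameter vector. It has output alphabet $\Sigma\cup\{\mathsf{acc},\mathsf{rej},\mathsf{end}\}$.
\begin{itemize}
\item A start state holds the pair ``current PA state, chosen next letter''. Concretely, the hidden states are the pairs $(q,a)\in Q\times\Sigma$, plus states $(q,\mathsf{stop})$, an accept state, a reject state and an absorbing end state.
\item State $(q,a)$ deterministically emits $a$. It then moves to $(q',b)$ with probability $(M_a)_{q,q'}\cdot c_b$, where the $c_b$ are fixed rational letter-choice probabilities.
\item The products $(M_a)_{q,q'}c_b$ would break simplicity. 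We therefore split each move into two hops through an intermediate state $(q',\ast)$. The first hop carries $(M_a)_{q,q'}$ and emits a fixed dummy symbol; the second hop carries the constant $c_b$. Now every HMM entry is a PA entry, a rational constant, or $1$.
\item The initial distribution is handled the same way. We use the entries $\alpha_q$ into states $(q,\ast)$, emitting a dummy first symbol.
\item From $(q,\mathsf{stop})$ the HMM moves to the accept state with probability $\eta_q$ and to the reject state with probability $1-\eta_q$. Both entries are simple provided $\eta_q$ is a constant or a parameter $\theta_i$; if $\eta_q=1-\theta_i$ then $1-\eta_q=\theta_i$, which is also simple. These states emit $\mathsf{acc}$ or $\mathsf{rej}$ and then go to the end state.
\end{itemize}
Stochasticity of $\phi(\theta)$ is then equivalent to stochasticity of $\mathcal A(\theta)$. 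So $\Theta$ coincides with the PA domain, possibly intersected with $[0,1]^K$, which simple parameterizations already impose.

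\textbf{Key identity.} By the forward procedure (Remark~\ref{rem:forward_backward}), the probability of the observation sequence encoding $w=a_1\cdots a_\ell$ followed by $\mathsf{acc}$ has the form
\[
\prob_{\phi(\theta)}(\text{enc}(w)\,\mathsf{acc}) = \kappa_w\, L_{\mathcal A(\theta)}(w),
\]
where $\kappa_w>0$ is the product of the letter-choice constants and does not depend on $\theta$. The $\mathsf{rej}$ word gives $\kappa_w(1-L_{\mathcal A(\theta)}(w))$, since the PA matrices are row-stochastic. Observation sequences not of this shape are either impossible for every $\theta$, or are prefixes whose probabilities are determined by sums of the above or by constants.

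Because the observations determine the letters read and the state structure is deterministic apart from the PA's own randomness, we get the equivalence
\[
\phi(\theta)\equiv\phi(\theta') \iff \mathcal A(\theta)\equiv\mathcal A(\theta').
\]
The two directions are as follows.
\begin{itemize}
\item ($\Leftarrow$) This follows from the identity, after checking prefix words. The probability of a prefix ending mid-run is a constant times $\alpha M_{w}\one=1$ times constants, so it is parameter-independent.
\item ($\Rightarrow$) Divide by $\kappa_w$.
\end{itemize}

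Finally, since entries are copied, $\phi(\theta)=\phi(\theta')$ holds exactly when $\mathcal A(\theta)=\mathcal A(\theta')$. Hence the target $(\mathcal A,\theta^\star)$ has a distinct equivalent rival if and only if $(\phi,\theta^\star)\in\coiddeteq$. The reduction is polynomial: there are $O(|Q||\Sigma|)$ states.

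\textbf{Main obstacle.} The hard part is keeping the parameterization simple while making the HMM generate words. Products of a PA entry with a letter-choice probability must never appear in a single entry. The intermediate-state splitting is what handles this. We also need to confirm that the emit-then-transition convention does not tie parameter-dependent emissions to parameter-dependent transitions, so we keep all emissions deterministic and constant. The remaining check is that the unnormalized quantity $\eta$ becomes a proper stochastic row via $\eta_q,1-\eta_q$ without losing simplicity.
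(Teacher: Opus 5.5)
Your proposal is correct and takes essentially the same approach as the paper. Your intermediate state $(q',\ast)$ is the paper's control state $q_{j,\mathrm c}$, which emits $\#$ and chooses the next letter or the stop step with fixed probabilities. Your $(q,a)$ states play the role of the letter states that copy $M_a$. The one difference is cosmetic: you realize $\eta_q$ and $1-\eta_q$ as transitions from $(q,\mathsf{stop})$ to shared accept/reject states, whereas the paper uses them as emission probabilities of the decision state $q_{i,\mathrm s}$. Both choices keep the parameterization simple. You should also give $(q,\mathsf{stop})$ an explicit fixed emission symbol.
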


Appendix~\ref{app:pa-to-hmm-simple} provides an example of this PA-to-HMM transformation.

\subsection{Deterministic identifiability up to equality with masks is $\conp$-hard}\label{sec:mask_hard}

This section proves the $\np$-hardness of \emph{masked} parameterizations, sometimes deemed \emph{rectangular}~\citep{wiesemann2013robust,jansen2022parameter}, which further restrict simple parameterizations by requiring every open entry to be an independent parameter.
In contrast, simple parameterizations as in Section~\ref{sec:simple_hard} allow the same parameter to occur in several HMM entries.
Another way to view masks is that, given a target HMM $\target$, it specifies which entries of the rival have to match the target exactly, and which are ``fully open'', in that they are allowed to take any value.

\begin{definition}[Mask parameterizations]
For a rational target $\target=(n,\pi^\star,A^\star,B^\star)$, a \emph{mask} $\mu=(\mu_\pi,\mu_A,\mu_B)$ assigns $0$ or $1$ to every entry of $(\pi,A,B)$.
Entries with mask value $1$ are fixed to the corresponding target value, while every entry with mask value $0$ receives its own distinct parameter.
The parameter domain consists exactly of the values for which the resulting entries are stochastic.
We denote the induced family by $\masked(\target,\mu)$.
Thus masks are rectangular simple parameterizations: apart from the row-sum constraints required for stochasticity, distinct open entries are independent.
\end{definition}

To prove hardness under masks, we encode a Boolean assignment in the emission probabilities of an HMM. For each variable, two states represent its positive and negative literals. Requiring certain observation probabilities to match those of the target forces these states to represent opposite truth values. Further observation probabilities enforce the clauses of a Boolean formula. Different clauses refer to the same literal states, so the construction does not need to assign the same parameter to different open entries.

Starting from a $\mathsf{3SAT}$ formula, we first add a Boolean variable to obtain a new formula with a known satisfying assignment. Every other satisfying assignment of this new formula encodes a solution of the original formula. We use the known assignment to specify the target HMM. The construction ensures that each satisfying assignment of the new formula determines exactly one equivalent HMM, including all open transition and emission probabilities. Consequently, the target has a distinct equivalent rival exactly when the original formula is satisfiable. Appendix~\ref{app:mask_nphard} gives the full proof.

\begin{restatable}{theorem}{masknphard}\label{thm:iddeteq_mask_nphard}
$\coiddeteq$ is $\np$-hard even under mask parameterizations.
Equivalently, $\iddeteq$ is $\conp$-hard under mask parameterizations.
\end{restatable}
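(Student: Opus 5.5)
I will reduce from $\mathsf{3SAT}$ to $\coiddeteq$ with mask parameterizations. Let $\psi$ be a 3CNF over $x_1,\dots,x_k$ with clauses $C_1,\dots,C_r$. First I pad it to $\psi' := (x_0 \vee \psi)$, distributed clause-wise as $C_j' := C_j \vee x_0$, with a fresh variable $x_0$. Then $\psi'$ has the known satisfying assignment $x_0=1$ with all $x_i=0$. I will rather use the reversed convention so that the trivial solution is $x_0 = 1$, $x_i$ fixed, and any other satisfying assignment of $\psi'$ will be converted into one with $x_0=0$, which satisfies $\psi$. To make ``distinct rival'' coincide with ``$x_0=0$'', I will force $x_i=0$ whenever $x_0=1$, using extra 2-clauses $(\neg x_0\vee\neg x_i)$. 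The resulting formula $\psi''$ has exactly one assignment with $x_0=1$, namely the all-zero one on the other variables. Every other satisfying assignment has $x_0=0$ and restricts to a model of $\psi$; conversely, models of $\psi$ extend with $x_0=0$. Clauses now have width at most $4$, which is harmless.

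Next I build the target HMM $\target$ and mask $\mu$. There is an initial state $s$ with fixed uniform transitions to one ``test'' gadget per constraint. For each variable $x_i$ there are two literal states $p_i,n_i$ with open emission rows over a sub-alphabet $\{\top,\bot\}$, plus some fixed padding symbols. All transitions are fixed, so only these emission entries are parameters. Each literal state has its own single open parameter $b$ (its $\top$-probability), with $\bot$ receiving $1-b$ through the stochasticity row constraint; in a mask the row constraint alone couples them. The first obstacle is Booleanity: masks give independent real values in $[0,1]$, not bits. I would force integrality with two kinds of fixed-probability tests.

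\begin{enumerate*}[label=(\roman*)]
\item A path $s\to p_i$ then $s\to n_i$ in two consecutive emissions gives the probability of observing $\top$ from $p_i$ and then from $n_i$, namely $b_{p_i}b_{n_i}$. Matching the target value $0$ forces one of the two to be $0$.
\item A single visit to $p_i$ or $n_i$, chosen uniformly, gives probability $(b_{p_i}+b_{n_i})/2$. Matching $1/2$ forces the sum to be $1$.
\end{enumerate*}

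Together these give $\{b_{p_i},b_{n_i}\}=\{0,1\}$. Distinct tests are separated by distinct marker symbols emitted by fixed intermediate states, so their probabilities can be read off from separate words.

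For each clause, I use a gadget that visits its (at most four) negated-literal states in sequence, so that the word ``all $\top$'' has probability $\prod_{\ell\in C}b_{\bar\ell}$, scaled by a fixed constant. Given Boolean $b$'s, this equals $0$ iff the clause is satisfied, and the target has value $0$. Only Boolean solutions are relevant because the integrality tests are separate words. I then need to show that output equivalence with $\target$ holds iff all these test words match. I will argue this directly: the HMM's output distribution is a fixed mixture, with known weights, over gadget-specific word sets distinguished by the marker symbols. Each gadget's word probabilities are polynomials in the $b$'s that are determined by, and determine, exactly the test quantities above. Once the $b$'s are Boolean with the integrality constraints, the gadget distributions depend only on which literals are true, so equality of all test-word probabilities implies equality for all words.

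Finally, a distinct equivalent rival differs in some open entry, hence encodes a different satisfying assignment of $\psi''$, hence a model of $\psi$. Conversely, a model of $\psi$ yields a Boolean setting of the $b$'s that passes every test and differs from the target, so it is a distinct equivalent rival. The construction is polynomial in size. The main obstacle I anticipate is making every repeated occurrence of a literal state use the \emph{same} state, since masks forbid shared parameters. Routing from gadget-specific fixed states into shared literal states breaks the clean word separation, because a literal state's successor must be fixed. I would handle this by letting each literal state transition uniformly back to a hub and encoding ``which gadget'' in a fixed prefix word, then check that the cross-gadget words contribute only fixed, assignment-independent mass.
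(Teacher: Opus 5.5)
Your proposal has a genuine gap. Output equivalence requires equality on \emph{every} observation word, but you only check selected test words, and your gadgets reveal much more than those quantities. Recall that $\prob_\model(w)$ is the probability that the first $|w|$ symbols equal $w$. A gadget's words therefore include every $\top/\bot$ pattern and every prefix, not just the all-$\top$ word.

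\begin{itemize}
\item In test (i), with the route identified by markers, the word ending in $\top\bot$ has probability proportional to $b_{p_i}(1-b_{n_i})$. For Boolean complementary values this equals the truth value of $x_i$. So this gadget alone forces every equivalent rival to coincide with the target, whatever $\psi$ is.
\item A clause gadget likewise pins each $b_{\bar\ell}$: already the one-letter prefix after the marker has probability proportional to $b_{\bar\ell_1}$. Even with a hidden, symmetrized visiting order, it still fixes the number of true literals in the clause.
\item Concretely, take the clause $\neg x_0\vee\neg x_i$. Under the target, its negated-literal states $p_0,p_i$ emit exactly one $\top$. The rival encoding a model of $\psi$ with $x_i$ false emits none. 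So the direction ``model of $\psi$ $\Rightarrow$ distinct equivalent rival'' fails.
\end{itemize}

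Your sentence ``once the $b$'s are Boolean, the gadget distributions depend only on which literals are true'' is exactly where the argument breaks, because that dependence is what separates the rival from the target. What you need is that each test's \emph{whole} observable distribution is identical for all satisfying assignments. The paper gets this in two ways. For clauses, it adds binary slack variables with $\sum_{\ell\in C_j}\indicator_\ell+\sum_r 2^r w_{j,r}=L_j$, so the clause test is a fixed-weight average whose $\mathsf t$-probability is constant. For variables, it adds comparison states that emit the same identifying symbols as the literal states, which hides which state was visited.

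The obstacle you flag at the end is also real, and the hub does not remove it. A shared literal state has a single fixed successor row, so no gadget can prescribe which literal comes next. Suppose every literal state returns to a common hub and rounds are delimited by marker prefixes. Then the process restarts at the hub after each literal visit. Every word probability factors into per-round terms, each linear in the open emission entries. Equivalence then reduces to linear equations in the $b$'s over a box. That solution set is convex, so it cannot encode the choice $\{b_{p_i},b_{n_i}\}=\{0,1\}$; deciding whether it contains a point other than the target is even a linear-programming question.

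The paper sidesteps both issues:
\begin{itemize}
\item Every test has length two: a test state emits its private identifier, moves to one literal or comparison state, and everything is then absorbed in $q_\#$. No routing through shared states is ever needed.
\item Nonconvexity comes from \emph{open transitions} out of the test states multiplying open emission entries. For example, $\gamma_+(\epsilon-r_v^+)+\gamma_-(\epsilon-r_v^-)=0$ together with $\gamma_++\gamma_-=1/2$ forces one of $r_v^\pm$ to equal $\epsilon$.
\end{itemize}
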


\section{Deterministic identifiability up to state renaming and bisimilarity}

A target HMM may be unidentifiable for the sole reason that a trivial output-equivalent rival exists, for instance by renaming the hidden states of the target HMM to obtain the rival.
This phenomenon is well documented in the literature, and thus it is common to discard rivals that are ``trivially equivalent'' to the target~\citep{leroux1992maximum,tune2013hidden}.
We study two notions of ``trivially equivalent'' here: 1) equivalence up to hidden state renaming in Section~\ref{sec:permutation}, as often done in the identifiability literature; and 2) bisimilarity in Section~\ref{sec:bisimilarity}, a fundamental concept in theoretical computer science.

\subsection{Deterministic identifiability up to state renaming is in $\forall \etr$}
\label{sec:permutation}

We next broaden our definition of identifiability to align with the practice of discarding rivals obtained by state-renaming.
A common way to remove state-renaming ambiguity in statistical applications is to impose an ordering convention on some state-specific parameter~\citep{leroux1992maximum,cole2019parameter}.
We instead treat permutation equivalence explicitly in the decision problem.

\begin{definition}[Permutation equivalence]
Let $\model=(n,\pi,A,B)$ and $\model'=(n,\pi',A',B')$.
They are \emph{permutation equivalent}, written $\model\similarperm\model'$, if there exists a permutation matrix $\Pi\in\{0,1\}^{n\times n}$ (i.e.\ exactly one 1 in each row and column, and all other entries set to 0) such that 
\[
\pi'=\Pi\pi, \qquad A'=\Pi A\Pi\transpose, \qquad B'=\Pi B.
\]
\end{definition}

\begin{problem}
Let $\iddetperm$ denote the following problem.
The input is an HMM parameterization $\phi:\Theta\to\hmmsn n$, and a rational parameter value $\theta^\star\in\Theta$ defining the target $\target := \phi(\theta^\star)$. 
The decision problem is whether, for every $\theta'\in\Theta$, if $\target \equiv\phi(\theta')$ then $\target \similarperm\phi(\theta')$.
\end{problem}

Even when both HMMs are fully specified, checking whether one is obtained from the other by state permutation is not known to have a polynomial-time algorithm.
The reason is that it reduces from the classical graph isomorphism problem ($\gi$), which asks whether two finite graphs become identical after permuting their vertices~\citep{arora2009computational}.
See Appendix~\ref{app:permutation_GI_complete} for a proof.

\begin{restatable}{lemma}{thmpermequivhard}\label{thm:perm_equiv_hard}
Deciding whether two rational HMMs are permutation equivalent is $\gi$-complete.
\end{restatable}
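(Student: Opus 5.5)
The lemma needs two directions: membership of HMM permutation equivalence in $\gi$, and $\gi$-hardness via a polynomial-time many-one reduction from graph isomorphism. Throughout, $\gi$ denotes the class of problems polynomial-time reducible to graph isomorphism.

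\emph{Membership.} Given rational HMMs $\model=(n,\pi,A,B)$ and $\model'=(n,\pi',A',B')$ over alphabet $\{1,\dots,m\}$, I will build two vertex- and edge-colored complete directed graphs, with loops, on vertex set $\{1,\dots,n\}$.
\begin{itemize}
    \item Vertex $i$ is colored by the tuple $(\pi_i,B_{i,1},\ldots,B_{i,m})$.
    \item The arc $(i,j)$ is colored by $A_{ij}$.
\end{itemize}
Colors are encoded as canonical rational strings, so equal numbers receive equal colors. A color-preserving isomorphism is exactly a permutation $\Pi$ with $\pi'=\Pi\pi$, $B'=\Pi B$ and $A'=\Pi A\Pi\transpose$. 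Isomorphism of colored directed graphs reduces to plain graph isomorphism by standard gadgets. One replaces each arc by a subdivision path whose attached pendant structure encodes the color index, and one attaches to each vertex a distinguishing tree encoding its color class. After sorting all occurring colors, only polynomially many color indices arise. Hence the problem lies in $\gi$.

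\emph{Hardness.} Given graphs $G,H$ on $n$ vertices, I may assume the same vertex count, since otherwise I output a fixed pair of non-equivalent HMMs.
\begin{itemize}
    \item Let $\deg^+(i)$ be the out-degree of vertex $i$ in the directed version of $G$, where each undirected edge is replaced by both arcs. Define $A_{ij}=1/n$ if $ij\in E(G)$; $A_{ij}=0$ if $i\ne j$ and $ij\notin E(G)$; and $A_{ii}=1-\deg^+(i)/n$ when $i\notin E$ loops. Since $\deg^+(i)\le n-1$, the diagonal entry is positive and every row is stochastic.
    \item Use the uniform initial distribution $\pi=\one_n/n$.
    \item Use a constant emission matrix $B$, for instance with every row equal to $(1,0,\dots)$.
\end{itemize}
Build $H$'s HMM in the same way. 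Because $\pi$ and $B$ are permutation invariant, $\model_G\similarperm\model_H$ iff there is a $\Pi$ with $A_H=\Pi A_G\Pi\transpose$. Off the diagonal, the entries equal $1/n$ exactly on edges, so the off-diagonal pattern recovers the edge set. Such a $\Pi$ is therefore a graph isomorphism. Conversely, any isomorphism preserves degrees and hence also preserves the diagonal entries. The construction runs in polynomial time.

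The main obstacle is the membership direction. One must make sure that the colored-digraph-to-graph reduction is polynomial and faithful. In particular, the gadgets must not create spurious automorphisms that mix the original vertices with gadget vertices, and arc colors must be preserved even though the digraph contains loops and two arcs of different colors between each pair $i,j$. I would handle this by citing the standard result that isomorphism of colored directed graphs is $\gi$-complete, rather than writing out the gadgets.
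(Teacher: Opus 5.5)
Your proposal is correct and follows essentially the same route as the paper. For hardness, the paper likewise uses a uniform initial distribution, a single deterministic emission, and off-diagonal weight $c$ on edges with $1-c\deg(u)$ on the diagonal; it takes $c=1/(n+1)$, and your $1/n$ works equally well because conjugation by a permutation maps diagonal entries to diagonal entries. For membership, the paper also labels each state by $(\pi_i,B_{i,*})$ and each arc by $A_{ij}$, then cites the reduction from colored digraph isomorphism to $\gi$; keeping the complete digraph with loops rather than only the nonzero arcs, and handling mismatched vertex counts explicitly, are cosmetic differences.
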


Thus we do not assume that permutation equivalence itself can be checked in polynomial time.
For the formula below we encode ``the rival is different from every permutation of the target'' as an explicit statement, which introduces an additional universal quantifier.

\begin{proposition}
$\coiddetperm$ is in $\exists\forall\real$. Thus $\iddetperm$ is in $\forall\exists\real\subseteq\pspace$.
\end{proposition}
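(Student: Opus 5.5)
We will follow the template of Theorem~\ref{thm:det-eq-etr-certificate}, replacing the inequality $\phi(\theta')\ne\target$ by the statement ``$\phi(\theta')$ is not permutation equivalent to $\target$''. Unfolding the definition, $(\phi,\theta^\star)\in\coiddetperm$ iff there exists $\theta'\in\Theta$ with $\phi(\theta')\equiv\target$ such that for every permutation matrix $\Pi$, $\phi(\theta')\ne(\Pi\pi^\star,\Pi A^\star\Pi\transpose,\Pi B^\star)$. The existential block will be exactly the one from Theorem~\ref{thm:det-eq-etr-certificate}: the rival parameter $\theta'\in\real^K$, the semialgebraic constraints for $\theta'\in\Theta$, and the certificate matrices $F,\{\vv*{T}{y}\}_{y\in\abc}$ of Theorem~\ref{thm:kw-certificate} with equations~\eqref{eq:equiv-cert-firstrow}--\eqref{eq:equiv-cert-invariance}. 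By Theorem~\ref{thm:kw-certificate}, these are satisfiable for a given $\theta'$ iff $\phi(\theta')\equiv\target$.

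For the universal block we quantify over a matrix $P\in\real^{n\times n}$ and write an implication: if $P$ is a permutation matrix, then $\|\phi(\theta')-(P\pi^\star,PA^\star P\transpose,PB^\star)\|_2^2>0$. The permutation constraint is polynomial: $P_{ij}(1-P_{ij})=0$ for all $i,j$, together with $P\one_n=\one_n$ and $\one_n\transpose P=\one_n\transpose$. A $0/1$ matrix with all row and column sums equal to one is exactly a permutation matrix. The implication becomes the disjunction ``some constraint fails, or the distance is positive''. Disequalities and disjunctions are allowed in the quantifier-free matrix, as discussed in Section~\ref{sec:etr}, and a quantifier-free formula can always be written with free Boolean connectives before prenexing. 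All terms are polynomial in $(\theta',P)$ with rational coefficients, and the formula has size polynomial in the input.

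The resulting sentence has the form $\exists(\theta',F,\vv*{T}{y})\,\forall P\,\psi$ with $\psi$ quantifier-free, so $\coiddetperm\in\exists\forall\real$. Correctness follows in both directions. A satisfying witness gives an equivalent rival that differs from every permutation of the target. Conversely, a non-identifiable instance supplies $\theta'$ and, via Theorem~\ref{thm:kw-certificate}, the certificate matrices, and every $P$ then satisfies $\psi$. Complementing gives $\iddetperm\in\forall\exists\real$. Membership in $\pspace$ follows because the first-order theory of the reals with a fixed number of quantifier alternations is decidable in polynomial space~\citep{renegar1992computational}.

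The only delicate point is the universal part. We should not try to enumerate the $n!$ permutations explicitly, and we should not assume permutation equivalence can be tested efficiently, since by Lemma~\ref{thm:perm_equiv_hard} that test is $\gi$-complete. The polynomial characterization of permutation matrices together with the real universal quantifier handles this. We must also make sure that the guard does not accidentally admit non-permutation $P$, for example doubly stochastic matrices. The constraint $P_{ij}(1-P_{ij})=0$ excludes these.
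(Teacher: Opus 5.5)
Your proposal is correct and follows the paper's proof: you existentially guess $\theta'$ together with the output-equivalence certificate of Theorem~\ref{thm:kw-certificate}, then universally quantify a real matrix that must either violate the polynomial permutation-matrix constraints ($P_{ij}(1-P_{ij})=0$, unit row and column sums) or give a permuted target at positive squared distance from the rival. The only cosmetic difference is that the paper also keeps the disequality (iii) from Theorem~\ref{thm:det-eq-etr-certificate}, which your formula already implies through the case $P=I_n$.
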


\begin{proof}
    Existentially guess $\theta'\in\Theta$ and the output-equivalence certificate from Theorem~\ref{thm:kw-certificate}.
    Then universally quantify a real matrix $\Pi\in\real^{n\times n}$ and impose
    \begin{enumerate}[label={}]
        \item[(i)--(iii)] as in Theorem~\ref{thm:det-eq-etr-certificate}.
        \item[(iv)] Either $\Pi$ is not a valid permutation matrix, 
        \begin{align*}
        \Big(\bigvee_{ij} (\Pi_{ij}(1-\Pi_{ij}) \neq 0)\Big) \vee 
        \Big(\bigvee_{j}\sum_{i}\Pi_{ij} \neq 1\Big) \vee
        \Big(\bigvee_{i}\sum_{j}\Pi_{ij} \neq 1\Big),
        \end{align*}
        or the permuted rival is different from the target,
        \begin{align*}
        \Big( \sum_i(\pi'_i-(\pi\Pi\transpose)_i)^2+
        \sum_{i,j}(A'_{ij}-(\Pi A\Pi\transpose)_{ij})^2+
        \sum_{i,y}(B'_{iy}-(\Pi B)_{iy})^2>0 \Big).
        \end{align*}
    \end{enumerate}
    To introduce constraint (iv) we use disjunction ($\vee$) encoded in $\etr$ as explained in Section~\ref{sec:etr}.
\end{proof}

\subsection{Deterministic identifiability up to bisimilarity is in $\coetr$}
\label{sec:bisimilarity}

With a motivation similar to that in Section~\ref{sec:permutation}, we discard rivals that are bisimilar to the target.
The term ``bisimulation'' is used for two different notions in the literature relevant here.
\citet{doyen2008equivalence} define a relation on state distributions that coincides with output equivalence.
If we used that relation as our notion of uniqueness, every output-equivalent rival would automatically count as equivalent to the target, so the identifiability problem would be trivial.
We therefore use the classical state-based probabilistic bisimulation: hidden states may be grouped together only when they have the same observation distribution and send the same total transition probability to every group.

\begin{definition}[Classical HMM bisimulation, \citet{larsen1989bisimulation}]
Let $\model=(n,\pi,A,B)$ and $\model'=(n',\pi',A',B')$, and consider the disjoint union of their states, $Q := \{1, \dots, n\} \uplus \{1, \dots, n'\}$.
An equivalence relation $\mathcal R \subseteq Q \times Q$ is a \emph{bisimulation} iff, $i\,\mathcal R\,j$ implies
\begin{enumerate}[label=(\roman*)]
    \item the observation distributions agree, i.e. $B_{iy}=B'_{jy}$ for $y = 1, \dots, m$.
    \item for every equivalence class $C$ of $\mathcal R$, the transition probability from $i$ to $C$ equal the transition probability from $j$ to $C$.
\end{enumerate}
The two HMMs are \emph{bisimilar}, written $\model\sim\model'$, if their initial distributions assign the same total mass to every class of the coarsest such relation, i.e. the relation whose equivalence classes are as large as possible while satisfying the two conditions above.
This is the usual state-based probabilistic bisimulation specialized to HMMs~\citep{larsen1989bisimulation,hansson1994time}.
\end{definition}

\begin{problem}
Let $\iddetbisim$ denote the following problem.
The input is an HMM parameterization $\phi:\Theta\to\hmmsn n$, and a rational parameter value $\theta^\star\in\Theta$ defining the target $\target := \phi(\theta^\star)$. 
The decision problem is whether, for every $\theta'\in\Theta$, if $\target \equiv\phi(\theta')$ then $\target \sim \phi(\theta')$.
\end{problem}

For two explicit finite HMMs, classical bisimilarity can be checked by a standard partition-refinement algorithm.
Start by grouping states that have the same observation distribution.
Then repeatedly split a group whenever two states in it send different total transition probability to some current group.
When no further split is possible, the resulting groups are exactly the classical bisimulation classes~\citep{larsen1989bisimulation,hansson1994time}.
The proof below encodes these polynomially many refinement steps directly in an $\etr$ formula. 
Find the full certificate in Appendix~\ref{app:bisim-etr}.

\begin{restatable}{proposition}{thmbisimetr}\label{thm:bisim-etr}
$\coiddetbisim$ is in $\etr$, and therefore $\iddetbisim$ is in $\coetr$.
\end{restatable}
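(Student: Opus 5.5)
We will prove that $\coiddetbisim\in\etr$ by writing down a polynomial-size existential formula whose satisfying assignments are exactly rivals $\theta'\in\Theta$ with $\phi(\theta')\equiv\target$ and $\phi(\theta')\not\sim\target$. The formula begins as in Theorem~\ref{thm:det-eq-etr-certificate}. We existentially guess $\theta'$, add the constraints defining $\Theta$, and add the output-equivalence certificate $(F,\{\vv*{T}{y}\}_{y\in\abc})$ of Theorem~\ref{thm:kw-certificate}. It remains to express non-bisimilarity existentially.

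The key step is to replace ``for all bisimulations'' by an existential certificate of the coarsest bisimulation. Because the rival's entries are unknown reals, we cannot run partition refinement outside the formula. Instead, we simulate it inside the formula. On the combined state set $Q$ of size $N:=2n$, partition refinement stabilizes after at most $N$ rounds, since every non-final round strictly increases the number of classes. For each round $r=0,\dots,N$ we existentially guess Boolean variables $E^{(r)}_{ij}$ with $E^{(r)}_{ij}(1-E^{(r)}_{ij})=0$, meant to encode ``$i$ and $j$ are in the same block at round $r$''. We constrain them with the following polynomial conditions.
\begin{itemize}
\item Round $0$: $E^{(0)}_{ij}=1$ iff $B_{i*}=B_{j*}$. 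The rival rows are polynomials in $\theta'$, so this is a disjunction of equalities and disequalities, encoded with selector variables as in Section~\ref{sec:etr}.
\item Round $r+1$: $E^{(r+1)}_{ij}=1$ iff $E^{(r)}_{ij}=1$ and, for every $k$, $\sum_{l} A_{il}E^{(r)}_{lk}=\sum_l A_{jl}E^{(r)}_{lk}$. Here $\sum_l A_{il}E^{(r)}_{lk}$ is exactly the mass from $i$ to the round-$r$ block of $k$, and the combined transition matrix is block diagonal.
\end{itemize}
Both directions of each ``iff'' are written as disjunctions, so each $E^{(r)}$ is uniquely determined by the rival. Since refinement is monotone and stabilizes within $N$ rounds, $E^{(N)}$ is the coarsest bisimulation. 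This is the standard correctness of partition refinement, which we cite from \citet{larsen1989bisimulation}.

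Finally, we existentially assert non-bisimilarity. We require that some class of $E^{(N)}$ receives different initial mass from $\target$ and from the rival: $\bigvee_{k}\bigl(\sum_{l\le n}\pi^\star_l E^{(N)}_{lk}-\sum_{l>n}\pi'_{l-n}E^{(N)}_{lk}\neq0\bigr)$. The formula has $O(n^3)$ Boolean variables and polynomially many constraints of polynomial degree, so it is polynomial size.

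Soundness and completeness then follow. Any satisfying assignment yields an output-equivalent rival that is not bisimilar. Conversely, for any such rival, the uniquely determined refinement matrices, together with the certificate of Theorem~\ref{thm:kw-certificate}, satisfy the formula. The main obstacle is to make sure that each ``iff'' is encoded in both directions, so that the $E^{(r)}$ cannot be chosen spuriously, for instance too fine to falsely witness non-bisimilarity. Without this, the existential quantifier could pick a non-coarsest relation. We handle it by forcing each $E^{(r+1)}_{ij}$ to equal the truth value of the refinement predicate exactly, using the selector-variable encoding of disjunctions and of negated equalities.
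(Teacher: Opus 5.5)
Your proposal is correct and takes essentially the same approach as the paper. Both guess the rival together with the output-equivalence certificate, simulate partition refinement inside the formula using $2n+1$ Boolean block matrices whose entries are forced by two-sided ``iff'' constraints, and then require some block of the final partition to receive different initial mass; the only differences are that the paper also imposes the equivalence-relation axioms and the distinctness constraint (iii) explicitly, and both are redundant in your version because each $E^{(r)}$ is uniquely determined and non-bisimilarity already implies $\phi(\theta')\neq\target$.
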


\subsection{Hardness of deterministic identifiability up to state renaming and bisimilarity}
\label{sec:perm_bisim_hard}

The reduction for simple parameterizations from Theorem~\ref{thm:pa_identifiability_hard} can also enforce that the rival be disequal from the target up to bisimilarity. 
We add two new branches, one for $z=0$ and another for $z=1$ using fresh output letters, where both branches are always output-equivalent yet they are non-bisimilar.
This extension survives the PA-to-HMM conversion, as proved in Appendix~\ref{app:bisim_simple_hardness}.

\begin{restatable}{corollary}{corbisimetrhard}\label{cor:iddetbisim_simple_etrhard}
$\coiddetbisim$ is $\etr$-hard (and thus complete) even under simple parameterizations.
Equivalently, $\iddetbisim$ is $\coetr$-hard under simple parameterizations.
\end{restatable}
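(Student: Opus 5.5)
We prove Corollary~\ref{cor:iddetbisim_simple_etrhard} by adapting the reduction behind Theorems~\ref{thm:pa_identifiability_hard} and~\ref{thm:iddeteq_simple_etrhard}. Membership in $\etr$ is Proposition~\ref{thm:bisim-etr}, so only hardness remains.

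Take an $\etr$ instance and build the simply parameterized PA $\mathcal A(\theta)$ of Theorem~\ref{thm:pa_identifiability_hard}. Its selector parameter $z$ has the dummy target at $z=0$, and every distinct equivalent rival has $z=1$ and encodes a genuine solution.

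\textbf{Step 1: add a non-bisimilar gadget.} From the initial state, reach with a fresh letter $\sigma_b$ a small gadget whose states depend on $z$ in two ways:
\begin{itemize}
\item a branch of weight $1-z$ leads to a state $p$ that moves deterministically to $q_{\mathrm{acc}}$ on letter $\sigma_c$ with probability $1/2$;
\item a branch of weight $z$ leads to a state $r$ that splits with probabilities $1/2,1/2$ into two states $r_1,r_2$; on $\sigma_c$, $r_1$ accepts with probability $1$ and $r_2$ with probability $0$.
\end{itemize}
Both branches give acceptance probability $1/2$ on $\sigma_b\sigma_c$, so the gadget never affects output equivalence. However, for $z=0$ the reachable part contains $p$, whose successor distribution is not matched by the $z=1$ structure (states $r_1,r_2$ with degenerate acceptance). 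All new entries are constants, $z$, or $1-z$, so the parameterization stays simple.

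\textbf{Step 2: convert to an HMM.} Apply the construction of Theorem~\ref{thm:iddeteq_simple_etrhard}. It copies PA entries verbatim into HMM entries, so the gadget becomes HMM states with fixed emissions recording the chosen letter and the accept/reject outcome.

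\textbf{Step 3: correctness.} If the system has a solution, the corresponding $z=1$ rival is output equivalent to the target by Theorem~\ref{thm:iddeteq_simple_etrhard}, and it remains so because the gadget preserves outputs. It is not bisimilar to the target. Under $z=0$, the HMM state simulating $r$ is unreachable (its initial or transition mass is $0$) while $p$ carries positive mass; under $z=1$ the roles swap. The state simulating $p$ emits the accept symbol on $\sigma_c$ with probability $1/2$ next step, whereas the states simulating $r_1,r_2$ emit it with probability $1$ or $0$. Partition refinement therefore separates $p$ from $r_1$ and $r_2$, and the initial mass reaching the class of $p$ differs ($1-z$ versus $z$, up to a known positive factor). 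Hence the initial distributions disagree on some bisimulation class.

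Conversely, any output-equivalent rival that is not bisimilar is in particular distinct from the target. By Theorem~\ref{thm:iddeteq_simple_etrhard}, and since the gadget adds no constraints, it yields a solution.

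\textbf{Main obstacle.} The hard part is verifying non-bisimilarity rigorously through the HMM conversion. Extra states introduced by the encoding, such as word-choice and acceptance-recording states, might accidentally merge classes. I would first try to argue it via a distinguishing bisimulation invariant, namely one-step emission distributions conditioned on letter $\sigma_c$. This separates the class of $p$ from the classes of $r_1,r_2$ in any bisimulation, and then comparing the initial class masses settles the claim.
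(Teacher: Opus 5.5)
Your plan follows the paper's route: a fresh-letter gadget hung off $q_{\mathrm{init}}$ with branch weights $1-z$ and $z$, the unchanged PA-to-HMM conversion, and the converse via ``the gadget adds no constraints''. The gap is the gadget itself, which as you specify it does not work; this gadget is the entire content of the corollary.

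For both branches to accept exactly $\sigma_b\sigma_c$ with simple entries, $r$ must split on $\sigma_c$ itself into an accepting $r_1$ and a non-accepting $r_2$. (Putting $z/2$ directly on $q_{\mathrm{init}}\to r_1,r_2$ is not a simple entry.) After the HMM conversion, $r_1$ is then bisimilar to $q_{\mathrm{acc}}$ and $r_2$ to $q_{\mathrm{rej}}$: they have the same emissions, and every letter leads to the rejecting sink. So the HMM states simulating $p$ and $r$ send equal mass to every class and are bisimilar; the early-versus-late choice you want is not present. If instead $r$ splits on some letter before $r_1,r_2$ read $\sigma_c$, the $r$-branch accepts a longer word. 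Then $\sigma_b\sigma_c$ gets probability $(1-z)/2$, and equivalence with the target forces $z=0$, which eliminates every rival. Your Step~3 also compares the wrong states. Separating $p$ from $r_1,r_2$, which sit at different depths, proves nothing; what must be shown is $p\not\sim r$.

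The repair is the paper's gadget: both branches must read a common word of length three ($\sigma_{\not\sim}t^2$ in the paper). In your notation, on the middle letter $r$ splits $\tfrac12,\tfrac12$ into $r_1,r_2$, which go to $q_{\mathrm{acc}}$, respectively $q_{\mathrm{rej}}$, on $\sigma_c$. Meanwhile $p$ moves with probability $1$ to an intermediate state $p'$ that reaches $q_{\mathrm{acc}}$ on $\sigma_c$ with probability $\tfrac12$. In the HMM, the control states of $p'$, $r_1$, $r_2$ give the word $\#\sigma_c\#\cmark$ probabilities $\tfrac12\kappa^2$, $\kappa^2$, $0$. Hence $p'$ lies in neither class of $r_1,r_2$, and $p\not\sim r$.

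To conclude, replace your informal ``initial mass reaching the class of $p$'' by the paper's local chain. Both initial distributions are point masses on $q_{\mathrm{init},\mathrm c}$, so bisimilarity would relate the two initial control states. By emissions, it would then relate their unique $\sigma_b$-emitting successors. Those successors would have to send equal mass into the class of $p_{\mathrm c}$, but the target's sends $1$ and the rival's sends $0$.
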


\begin{remark}\label{rem:hardness_carry_over}
A permutation of hidden states always gives a bisimilar HMM, as $\similarperm \subsetneq \sim$. 
Thus the non-bisimilar rival constructed for Corollary~\ref{cor:iddetbisim_simple_etrhard} cannot be a permutation of the target. 
When the input polynomial system is unsatisfiable, the same reduction leaves the target as the only observationally equivalent HMM in the family. These two properties establish the following corollary using the same reduction.

\end{remark}

\begin{corollary}\label{cor:iddetperm_simple_etrhard}
$\coiddetperm$ is $\etr$-hard (and thus complete) even under simple parameterizations.
Equivalently, $\iddetperm$ is $\coetr$-hard under simple parameterizations.
\end{corollary}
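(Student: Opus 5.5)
The plan is to reuse, without modification, the reduction from Corollary~\ref{cor:iddetbisim_simple_etrhard}: given a polynomial system $S$, it produces a simply parameterized HMM family $\phi$ and a rational target $\theta^\star$. We show that the same pair $(\phi,\theta^\star)$ is a yes-instance of $\coiddetperm$ exactly when $S$ has a real solution. Membership in $\etr$ is not needed for hardness. The ``(thus complete)'' part should be read against the $\exists\forall\real$ upper bound from the preceding proposition, or, more precisely, against whatever upper bound the paper intends for this problem.

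\emph{Soundness direction.} Suppose $S$ is satisfiable. By Corollary~\ref{cor:iddetbisim_simple_etrhard}, there is a parameter $\theta'$ with $\phi(\theta')\equiv\target$ and $\phi(\theta')\not\sim\target$. Any permutation $\Pi$ with $\phi(\theta')=\Pi\cdot\target$ would make the relation pairing $i$ with $\Pi(i)$ (closed to an equivalence on the disjoint union) a bisimulation. Its classes receive equal initial mass, so $\phi(\theta')\sim\target$, a contradiction. Hence $\phi(\theta')\not\similarperm\target$, and the instance lies in $\coiddetperm$. The inclusion $\similarperm\subseteq\sim$ should be stated as a small lemma, checked directly from the definitions (observation rows are permuted consistently, and transition mass into each class is preserved).

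\emph{Completeness direction.} Suppose $S$ is unsatisfiable. We need every output-equivalent rival in the family to be a permutation of the target. The strongest available fact is the one from the proof of Theorem~\ref{thm:iddeteq_simple_etrhard}, extended to the gadget of Corollary~\ref{cor:iddetbisim_simple_etrhard}: the only $\theta'$ with $\phi(\theta')\equiv\target$ is $\theta^\star$ itself. I would re-derive this. Equivalence forces the probability of the word encoding $z(1-z)$ to equal its target value $0$, so $z\in\{0,1\}$. The case $z=1$ yields a solution of $S$, which is impossible, so $z=0$. The dummy-solution constraints then pin every remaining parameter, including those of the two added branches, to its target value. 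Hence $\phi(\theta')=\target$, and the identity permutation witnesses $\target\similarperm\phi(\theta')$, so the instance is not in $\coiddetperm$.

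\emph{Main obstacle.} The delicate step is the completeness direction for the augmented construction. One must confirm that the two extra branches added for the bisimulation corollary do not create new free parameters at $z=0$ that could vary while preserving output equivalence. If they did, a rival equal to the target up to bisimilarity but not up to permutation could appear. I would handle this by inspecting the branches: each fresh output letter's branch is entered with probability $z$ or $1-z$ and contains only constants or parameters whose values are fixed by observable word probabilities, so they are determined once $z=0$. If some parameter there turns out to be unconstrained, the fix is to replace it with a constant in the gadget. This change does not affect the soundness direction, since non-bisimilarity there comes from the structure and not from that parameter's value.
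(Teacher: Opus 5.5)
Your argument is the paper's own (Remark~\ref{rem:hardness_carry_over}). You reuse the bisimulation reduction, derive that the rival is not a permutation of the target from $\similarperm\subseteq\sim$ in the satisfiable case, and use that the target is the unique output-equivalent member when the system is unsatisfiable. The obstacle you anticipate does not arise, because the added $\sigma_{\not\sim}$ branches use only $z$, $1-z$ and rational constants, so no new free parameters appear. Your caution about ``(thus complete)'' is justified: the paper only shows $\coiddetperm\in\exists\forall\real$, so neither your argument nor the paper's supplies the $\etr$ upper bound needed for completeness; what is actually established is $\etr$-hardness.
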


For masks, the reduction of Theorem~\ref{thm:iddeteq_mask_nphard} already ensures that every distinct observationally equivalent rival is non-bisimilar to the target. 
Such a rival changes the Boolean variable $z$ from false to true. 
The reachable state representing the positive literal $z$ then has an emission distribution that no target state has: its identifying observation distinguishes it from the other literal states, and the comparison states have additional observations in their emission distributions. 
Bisimulation cannot match this reachable state to any target state. 
Appendix~\ref{app:bisim_mask_hardness} gives the details.

\begin{restatable}{corollary}{corbisimnphard}\label{cor:iddetbisim_mask_nphard}
$\coiddetbisim$ is $\np$-hard even under mask parameterizations.
Equivalently, $\iddetbisim$ is $\conp$-hard under mask parameterizations.
\end{restatable}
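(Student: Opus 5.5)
The plan is to reuse the masked reduction behind Theorem~\ref{thm:iddeteq_mask_nphard} unchanged, and to show that every distinct output-equivalent rival it admits is in fact non-bisimilar to the target. That reduction maps a $\mathsf{3SAT}$ formula $\varphi$ to a target $\target$ and a mask $\mu$. The target encodes the known satisfying assignment of the extended formula, in which the added variable $z$ is false. Every rival in $\masked(\target,\mu)$ that is output equivalent to $\target$ corresponds to exactly one satisfying assignment of the extended formula. So if $\varphi$ is unsatisfiable, the target is the only output-equivalent member of the family. It is then trivially bisimilar to itself, and the instance is a no-instance of $\coiddetbisim$. It remains to handle the case where $\varphi$ is satisfiable: there the reduction yields a rival in which $z$ is true, and we must show this rival is not bisimilar to $\target$.

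For the satisfiable case I would first pin down the relevant state of the rival. In the rival, the state $q_z$ representing the positive literal of $z$ has an emission distribution that puts positive mass on the observation identifying $z$ and also carries the ``true'' value. In the target, that same state carries the ``false'' value. Every bisimulation class must agree on emission rows (condition~(i)), so any target state bisimilar to the rival's $q_z$ must have exactly the same emission row. I would then go through the target's states in three groups. Other literal states lack $z$'s identifying observation. The comparison and clause states carry extra observations in their emission rows that the rival's $q_z$ lacks. The target's own $q_z$ (and the states for $\neg z$) carry the wrong truth value. So no target state matches.

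Next I would argue that this mismatch breaks bisimilarity as a whole, not just locally. The rival's $q_z$ is reachable with positive probability from the rival's initial distribution, through the fixed (masked) transitions shared with the target. Consider the coarsest bisimulation on the disjoint union. The class of the rival's $q_z$ contains no target state. A reachable class that contains only rival states forces a mismatch in the distribution over classes along some path. Concretely, I would propagate backwards: a predecessor state sends positive probability into a class with no target mass, whereas any candidate target partner sends zero probability into it. Condition~(ii) then separates them, and repeating this back to the initial states gives a class on which $\pi$ and $\pi'$ place different total mass. Hence $\target\not\sim\phi(\theta')$, and a satisfiable $\varphi$ gives a yes-instance of $\coiddetbisim$.

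The main obstacle is the second step: checking, against the concrete gadget of Appendix~\ref{app:mask_nphard}, that the rival's $q_z$ emission row really differs from every target row. The open emission entries are free parameters, so I must rule out that the equivalence constraints let some target state's row coincide with it by accident. I would first try to read off the emission rows forced by output equivalence and compare supports only. That requires no numerical comparison, and differing supports suffice for condition~(i) to fail.
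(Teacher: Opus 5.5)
Your proposal is correct and follows the paper's proof. You reuse the mask reduction unchanged and observe that the rival's $q_z^+$ has emission row $(y_z^+:\epsilon,\ \mathsf g:1-2\epsilon,\ \mathsf t:\epsilon,\ \mathsf f:0)$, which matches no target state, and you then get a contradiction from transition mass into a class containing no target state. The paper does this in one step from the two copies of $u_z^{\mathsf T}$, which must be related because of their unique identifying symbol and initial mass $1/R$; your backward propagation is the general version of the same idea. One correction to your case split: the target's $q_z^-$ does not carry the wrong truth value, since $\neg z$ is true in the target and $q_z^-$ therefore also emits $\mathsf t$ with probability $\epsilon$. It is excluded only because it emits $y_z^-$ rather than $y_z^+$, which your support comparison will catch.
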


The same reduction also proves hardness up to state permutation, similarly to Remark~\ref{rem:hardness_carry_over}. 

\begin{corollary}\label{cor:iddetperm_mask_nphard}
$\coiddetperm$ is $\np$-hard even under mask parameterizations.
Equivalently, $\iddetperm$ is $\conp$-hard under mask parameterizations.
\end{corollary}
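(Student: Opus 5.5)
We reuse the mask reduction of Theorem~\ref{thm:iddeteq_mask_nphard} unchanged and argue exactly as in Remark~\ref{rem:hardness_carry_over}. Given a $\mathsf{3SAT}$ formula $\varphi$, that reduction outputs in polynomial time a target $\target$ and a mask $\mu$ with two properties. \textbf{Completeness:} if $\varphi$ is satisfiable, then $\masked(\target,\mu)$ contains a rival $\rival\equiv\target$ with $\rival\neq\target$. \textbf{Soundness:} if $\varphi$ is unsatisfiable, then the only HMM in $\masked(\target,\mu)$ output equivalent to $\target$ is $\target$ itself, because each satisfying assignment of the extended formula determines exactly one equivalent HMM, and only the known assignment remains. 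We claim that the identity map on instances is also a reduction from $\mathsf{3SAT}$ to $\coiddetperm$ under masks.

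For the unsatisfiable case, every output-equivalent member of the family equals $\target$. Since $\target\similarperm\target$ (take $\Pi=I$), every output-equivalent rival is permutation equivalent to the target. Hence the instance lies in $\iddetperm$, i.e.\ not in $\coiddetperm$.

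For the satisfiable case, take the distinct equivalent rival $\rival$ supplied by the reduction. By Corollary~\ref{cor:iddetbisim_mask_nphard} (more precisely, the argument of Appendix~\ref{app:bisim_mask_hardness}), $\rival$ is not bisimilar to $\target$. Its reachable positive-literal state for $z$ has an emission distribution that no target state has.

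The remaining step is the inclusion ${\similarperm}\subseteq{\sim}$, which we would verify directly. Given $\pi'=\Pi\pi$, $A'=\Pi A\Pi\transpose$, $B'=\Pi B$ with $\Pi$ induced by a bijection $\rho$, consider the equivalence relation on the disjoint union generated by pairing state $i$ of one model with $\rho(i)$ of the other. Emissions agree by $B'=\Pi B$. Transition masses into each class agree by $A'=\Pi A\Pi\transpose$, since each class contains exactly one copy of each matched state. Initial masses agree by $\pi'=\Pi\pi$. The coarsest bisimulation is coarser than this relation, so initial masses per class still agree, and therefore $\model\similarperm\model'$ implies $\model\sim\model'$.

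Taking the contrapositive, $\rival\not\sim\target$ gives $\rival\not\similarperm\target$. So $\rival$ witnesses membership in $\coiddetperm$, which completes the reduction and gives $\np$-hardness. The $\conp$-hardness of $\iddetperm$ follows by complementation. The only point needing real care is the non-bisimilarity of the rival, and that is imported from Corollary~\ref{cor:iddetbisim_mask_nphard}.
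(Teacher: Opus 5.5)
Your proposal is correct and follows the paper's argument: reuse the mask reduction of Theorem~\ref{thm:iddeteq_mask_nphard} unchanged. In the unsatisfiable case only the target itself is output equivalent. In the satisfiable case the non-bisimilar rival from Corollary~\ref{cor:iddetbisim_mask_nphard} cannot be permutation equivalent, because ${\similarperm}\subseteq{\sim}$; you verify that inclusion explicitly and correctly, whereas the paper only asserts it in Remark~\ref{rem:hardness_carry_over}.
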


\section{Binary-alphabet, local, generic, and family-wise identifiability}\label{sec:variants}

We consider various identifiability subproblems in this section.
Section~\ref{sec:binary-alphabet} restricts the deterministic identifiability problem to binary alphabets and shows that computational hardness is preserved.
Section~\ref{sec:local_identifiability} proves that local identifiability is in $\coetr$, Section~\ref{sec:generic_identifiability} proves that generic identifiability is in $\exists \coetr \cap \forall \etr$, and Section~\ref{sec:family-wise} proves that family-wise identifiability is in $\coetr$.

\subsection{Binary-alphabet subproblems}\label{sec:binary-alphabet}

The HMMs constructed for proving $\etr$- and $\np$-hardness in Theorems~\ref{thm:iddeteq_simple_etrhard} and~\ref{thm:iddeteq_mask_nphard} have alphabets that grow in size linearly with respect to the input.
Therefore, the computational hardness of the deterministic identification of HMMs with a fixed alphabet size remains open.
In Proposition~\ref{thm:binalphabet}, we transform HMMs with arbitrary alphabets into binary alphabets ($\abc = \{0,1\}$) in polynomial time such that simple and mask parameterizations are preserved, meaning that the hardness theorems carry over.
We start by defining the binary alphabet identifiability problem.

\begin{problem}
Let $\biniddeteq$ denote the restriction of $\iddeteq$ to a binary alphabet.
\end{problem}

Problem $\biniddeteq$ is already an instance of $\iddeteq$, so the nontrivial direction is to reduce in the converse direction.
The reduction is done via a routine binary encoding of the alphabet symbols, the main detail to take care of being that simple and mask parameterizations remain so after the transformation.
The full proof is given in Appendix~\ref{app:alphabet_independent}.

\begin{restatable}{proposition}{binalphabet}\label{thm:binalphabet}
$\biniddeteq$ and $\iddeteq$ are polynomial-time many-one equivalent.
Moreover, the reduction to a binary alphabet preserves simple and mask parameterizations.
\end{restatable}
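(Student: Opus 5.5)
The direction from $\biniddeteq$ to $\iddeteq$ is the identity map, since a binary-alphabet instance is already an instance of $\iddeteq$. For the converse I would encode each symbol of $\abc=\{1,\dots,m\}$ as a binary codeword of fixed length $L:=\lceil\log_2 m\rceil+1$, with a leading marker bit, and simulate each original hidden state by a chain of $L$ hidden states emitting that codeword one bit at a time. Concretely, given $\phi:\Theta\to\hmmsn n$, I build $\phi_2$ with state set $\{1,\dots,n\}\times\{1,\dots,m\}\times\{1,\dots,L\}$ (or a tree-shaped variant). State $(i,y,1)$ is reached with the ``probability of emitting $y$ from $i$''. The chain $(i,y,1)\to(i,y,2)\to\cdots\to(i,y,L)$ is deterministic with probability $1$, and each $(i,y,\ell)$ deterministically emits the $\ell$-th bit of $\mathrm{code}(y)$. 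From $(i,y,L)$ the model moves to $(j,y',1)$ with probability $A_{ij}B_{jy'}$.

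That last product breaks simplicity, so instead I would use a two-stage gadget. Each original state $i$ gets a ``choice'' state that emits a padding bit and then moves to the branch $(i,y,\cdot)$ with probability $B_{iy}$, a single entry. The end of a branch moves to the choice state of $j$ with probability $A_{ij}$, again a single entry. Initial probabilities are $\pi_i$ on choice states. Every entry of $\phi_2$ is then a constant $0$ or $1$ or a verbatim copy of one entry of $\phi$, so simple parameterizations stay simple (same parameter reused exactly where it was), and masks stay masks: copied entries inherit their mask bit, and the new constant entries are fixed.

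Set $\theta_2^\star:=\theta^\star$ and $\Theta_2:=\Theta$ (row-stochasticity transfers automatically). The map $\model\mapsto\model_2$ is injective on HMMs, since each entry of $\model$ appears in $\model_2$. It therefore remains to show $\model\equiv\model'$ iff $\model_2\equiv\model'_2$. The forward direction is immediate. Words of length a multiple of $L+1$ in the image of the encoding have probability $\prob_{\model_2}(\mathrm{enc}(w))=\prob_\model(w)$, and non-codeword blocks have probability $0$. For general binary words, which may be prefixes cut mid-block, I expect the probability to be a sum over completions of codeword probabilities, since emissions are deterministic within blocks. Hence it is a linear function of $\prob_\model$ on words, and equivalence transfers. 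The converse follows by evaluating on $\mathrm{enc}(w)$.

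I expect the main obstacle to be the mask case. There one must also check that new rivals are not created: the fixed structural entries of $\phi_2$ must not be open, and the image of $\Theta$ must be exactly the stochastic rival family $\masked(\target_2,\mu_2)$. With rows of the gadget all fixed except the copied $B_{i,\cdot}$ and $A_{i,\cdot}$ rows, the stochasticity constraints coincide with those of the original family, so this should go through.
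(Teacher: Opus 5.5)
\textbf{Verdict: genuine gap in the mask case.} Several parts of your proposal are fine and essentially match the paper: the trivial direction, the choice-state gadget that copies each $\pi_i$ and $B_{iy}$ into a single entry, the identity $\prob_{\model_2}(\mathrm{enc}(w))=\prob_\model(w)$, the completion argument for words cut mid-block, and the simple-parameterization case.

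The gap is the mask case, which the statement explicitly claims. In your final gadget, each of the $m$ branch ends $(i,y,L)$ carries its own copy of the row $A_{i,*}$, so every open entry $A_{ij}$ is replicated $m$ times. A simple parameterization may reuse one parameter in several entries, but a mask cannot tie entries: every open entry of $\mu_2$ receives its own independent parameter. Hence $\masked(\target_2,\mu_2)$ is strictly larger than the image of the original family. A rival may pick different rows out of $(i,1,L),\dots,(i,m,L)$, i.e.\ transitions that depend on the symbol just emitted. Your closing claim that ``the stochasticity constraints coincide with those of the original family'' overlooks exactly this duplication.

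This extra freedom does create spurious rivals. Whenever $B^\star_{iy}=0$, the branch $(i,y,\cdot)$ is never entered, so the open entries of the row out of $(i,y,L)$ can be changed without affecting any word probability. For a concrete case, take $\abc=\{1,2,3\}$, $n=2$, with $\pi=(1,0)$, $B_{1,*}=(1,0,0)$, $B_{2,*}=(0,1,0)$, $A_{2,*}=(0,1)$ all fixed, and $A_{1,*}=(a,1-a)$ open with target $a=1/2$. This instance is identifiable, since $\prob(11)=a$. Its image under your map is not identifiable: changing the row out of $(1,2,L)$ alone gives a distinct equivalent rival.

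The paper's own mask reduction has exactly this shape: test states emit only their identifying symbol and have open outgoing transitions. Your map would therefore send every such instance, satisfiable or not, to a non-identifiable one, so the $\conp$-hardness would not transfer to binary alphabets.

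The repair is the paper's exit state. Let every branch of module $i$ move with fixed probability $1$ into a single exit state $e_i$ that emits a padding bit, and place the unique copy of $A_{i,*}$ on $e_i$. Then each of $\pi_i$, $B_{iy}$, $A_{ii'}$ appears in exactly one entry, all other entries are fixed constants, and the mask family coincides with the image of the original one. With that change, your stochasticity argument goes through.
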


As a direct consequence of Proposition~\ref{thm:binalphabet}, the computational hardness properties of $\iddeteq$ carry over to $\biniddeteq$, and the same holds for deterministic identifiability up to state-renaming and bisimilarity when restricted to binary alphabets, given that non-permutation-equivalent and non-bisimilar HMMs remain so after transformation.

\begin{corollary}
    Even when restricted to binary alphabets, $\iddeteq$, $\iddetperm$, and $\iddetbisim$ are $\coetr$-hard with simple parameterizations, and $\conp$-hard with mask parameterizations.
\end{corollary}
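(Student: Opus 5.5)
The statement to prove is the final corollary: the hardness results for $\iddeteq$, $\iddetperm$ and $\iddetbisim$ survive the restriction to binary alphabets. The plan is to compose the reductions already given with the binary-encoding reduction of Proposition~\ref{thm:binalphabet}.

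For $\iddeteq$ the argument is immediate. Theorem~\ref{thm:iddeteq_simple_etrhard} reduces $\etr$ to $\coiddeteq$ under simple parameterizations. Theorem~\ref{thm:iddeteq_mask_nphard} reduces $\mathsf{3SAT}$ to $\coiddeteq$ under masks. Proposition~\ref{thm:binalphabet} then maps $\iddeteq$ to $\biniddeteq$ in polynomial time and preserves simple and mask parameterizations. Composing these many-one reductions gives $\coetr$-hardness of $\biniddeteq$ under simple parameterizations and $\conp$-hardness under masks.

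For $\iddetperm$ and $\iddetbisim$ we cannot just cite the proposition, because it is stated only for equality. I would instead apply the concrete binary transformation $\phi \mapsto \phi^{\mathrm{bin}}$ from Appendix~\ref{app:alphabet_independent} to the instances produced by Corollaries~\ref{cor:iddetbisim_simple_etrhard}, \ref{cor:iddetperm_simple_etrhard}, \ref{cor:iddetbisim_mask_nphard} and~\ref{cor:iddetperm_mask_nphard}. Two facts about this transformation need to be checked.

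\begin{enumerate}[label=(\alph*)]
\item \emph{Output equivalence is preserved and reflected.} Each original symbol is encoded by a fixed-length, prefix-free binary codeword emitted by a chain of auxiliary states. The binary word probabilities on codeword concatenations then equal the original word probabilities, and all other words' probabilities are determined by these (prefixes of codewords are emitted deterministically). Hence $\model \equiv \model'$ if and only if $\model^{\mathrm{bin}} \equiv \model'^{\mathrm{bin}}$. This is presumably already established in the appendix proof of Proposition~\ref{thm:binalphabet}.
\item \emph{Non-bisimilarity, and hence non-permutation-equivalence, is preserved.} Suppose some bisimulation relates $\target^{\mathrm{bin}}$ and $\phi^{\mathrm{bin}}(\theta')$. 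I would use it to build a bisimulation between $\target$ and $\phi(\theta')$ by restricting the relation to the "original" states.
\end{enumerate}

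For (b), the original states need to be recognizable from their bisimulation class: each original state must only ever be related to other original states. There are two natural ways to arrange this. One is to observe that original states are exactly those whose successors start a codeword chain at a fixed depth. The cleaner way is to use the length of the codeword chain, or a dedicated leading bit, to separate them.

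Given that separation, an original state's emission distribution over symbols is recovered as the transition masses into the first-level chain classes, followed by the deterministic bit sequence. Transition masses between original-state classes are recovered by composing with the chain. So conditions (i) and (ii) of the bisimulation definition hold for the restricted relation, and the initial masses agree as well. Taking the contrapositive, a non-bisimilar rival stays non-bisimilar after encoding. Since $\similarperm \subseteq \sim$, it is also non-permutation-equivalent.

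In the unsatisfiable case, (a) shows the target remains unique up to equality, so it is trivially unique up to permutation and bisimilarity. Combining both cases yields the binary-alphabet hardness for $\iddetperm$ and $\iddetbisim$ under both parameterization classes.

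The main obstacle is step (b): showing that the auxiliary chain states cannot be merged with original states in a bisimulation. If the chain states emit bits through the intermediate layers in a way that lets an original state look bisimilar to a chain state, I would modify the encoding. Concretely, I would add a marker bit so that original states emit a distinguished distribution, which makes their classes disjoint from the chain classes. I would then verify that this extra emission entry is a rational constant, so that simplicity and the mask structure are still preserved.
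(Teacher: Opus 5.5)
Your proposal follows the paper's route: compose the existing hardness reductions with the binary encoding of Proposition~\ref{thm:binalphabet}, and use that distinct equivalent rivals remain non-bisimilar (hence, via $\similarperm\subseteq\sim$, non-permutation-equivalent) after encoding, which the paper only asserts and you sketch. The separation property you worry about in (b) is unnecessary, and it can in fact fail for the paper's encoding (for a self-looping state such as $q_\#$ whose symbol has codeword $0^b$, the choice, chain and exit states are all bisimilar). Your restriction argument works without it, so no marker bit is needed: the initial distribution and exit transitions only touch choice states, distinct codewords make first-chain states of different symbols non-bisimilar so the masses out of $q_i$ pin down $B_{i,*}$, and relatedness of $q_i$ and $q'_j$ propagates along the deterministic chains to $e_i$ and $e'_j$.
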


\subsection{Deterministic local identifiability}
\label{sec:local_identifiability}

Local identifiability asks whether an output equivalent rival exists within a radius around the target HMM, no matter how small said radius is.
We consider local identifiability with a fixed target HMM, as done so far in this paper, even though local identifiability literature often considers it family-wise rather than target-wise~\citep{catchpole1997detecting,cole2019parameter}.
We cover family-wise identifiability later in Section~\ref{sec:family-wise}.

\begin{problem}\label{problem:local-ident}
Let $\idlocal$ denote the following problem.
Given rational target HMM $\target\in\hmmsn n$, decide whether $\exists \delta>0$ such that, $\forall \rival\in\hmmsn n$, if
$\|\target-\rival\|_2\le\delta$ and
$\target\equiv\rival$, then $\target=\rival$.
\end{problem}

The local non-identifiability problem $\coidlocal$ says that for every radius there is a distinct equivalent rival inside that radius, i.e.,
$\forall \delta > 0$, $\exists \rival \in \hmmsn{n}$,
$0<\|\target-\rival\|_2^2\le \delta^2$ and 
$\target\equiv\rival$.
Therefore, as a trivial consequence of Theorem~\ref{thm:det-eq-etr-certificate}, $\coidlocal$ is in $\forall\exists\real$, and $\idlocal$ is in $\exists\forall\real$, which is contained in $\pspace$ by \citet{canny1988some} and \citet{renegar1992computational}.

We can tighten the bound down to $\etr$ by eliminating a quantifier.
The next proposition uses the bound of
\citet[Theorem~1]{jeronimo2013minimum} to replace the quantification over radius $\delta$ by one fixed (yet extremely small) radius whose description is nevertheless polynomial-size.
The full proof is found in Appendix~\ref{app:local_etr}.

\begin{restatable}{proposition}{thmsmallradius}\label{thm:small-radius-local-id}
For every rational target $\target\in\hmmsn n$, one can compute in polynomial time an integer $L_\target$, polynomially bounded in the input size, such that, if $\target$ is locally identifiable, every distinct equivalent rival satisfies
$\|\target-\rival\|_2^2>2^{-2^{L_\target}}$.
\end{restatable}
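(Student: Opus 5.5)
The plan is to view the set of distinct output-equivalent rivals as a semialgebraic set defined by a polynomial-size formula, and then apply a minimum-distance bound for such sets. By Theorem~\ref{thm:kw-certificate}, the set
\[
S_\target := \{(\rival,F,\{\vv*{T}{y}\}_{y\in\abc}) \mid \rival\in\hmmsn n,\ \text{\eqref{eq:equiv-cert-firstrow}--\eqref{eq:equiv-cert-invariance} hold}\}
\]
projects onto exactly the set $E_\target:=\{\rival\in\hmmsn n \mid \rival\equiv\target\}$. This set is described by polynomially many polynomial equations and inequalities. The number of variables is $O(n^2+nm+m n^2)$, the degree is at most $3$, and the coefficients are rationals whose bitsize is bounded by the encoding of $\target$. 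The quantity of interest is
\[
r_\target := \inf\{\|\target-\rival\|_2^2 \mid \rival\in E_\target,\ \rival\neq\target\}.
\]
Local identifiability of $\target$ means exactly that $\target$ is not an accumulation point of $E_\target\setminus\{\target\}$, i.e., $r_\target>0$ (if the set is empty, the claim is vacuous).

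The key step is to express $r_\target$ as an optimum of a polynomial optimization problem, so that \citet[Theorem~1]{jeronimo2013minimum} applies. That theorem lower-bounds the minimum of a polynomial over a compact basic closed semialgebraic set, when the minimum is positive, by a bound of the form $2^{-(\text{bitsize})\,d^{O(N)}}$, where $N$ is the number of variables and $d$ the degree. The non-closed condition $\rival\neq\target$ is an obstacle. I would handle it as follows. Since $\target$ is isolated in $E_\target$, there is some radius at which the sphere-shell $\{\rival\in E_\target : \|\rival-\target\|_2^2 = s\}$ is empty for all small $s$. Equivalently, $r_\target$ equals the minimum of $g(\rival):=\|\rival-\target\|_2^2$ over the closed set $E_\target\setminus U$ for a neighbourhood $U$ of $\target$ with no rivals. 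To avoid needing $U$, I would instead use the standard trick of minimizing $g$ over the closed set
\[
\{(\rival,t)\mid \rival\in E_\target,\ t\,g(\rival)=1\},
\]
which is not compact. An alternative is to consider the semialgebraic function $s\mapsto$ (existence of a rival at squared distance exactly $s$) and bound the smallest positive element of the projection of $S_\target\cap\{g=s\}$ to the $s$-coordinate. This projection is a semialgebraic subset of $\real$ given by a formula with one free variable, and its smallest positive boundary point is a root of a univariate polynomial whose degree and height are bounded singly-exponentially in $N$ by quantifier-elimination bounds \citep{basu2006algorithms}. Root separation from $0$ then gives $s\ge 2^{-(\text{bitsize})\,d^{O(N)}}$.

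The auxiliary certificate variables $F,\vv*{T}{y}$ are unbounded, which threatens compactness. I would bound them: $F$ can be chosen with rows in the forward space, normalized to entries in $[-1,1]$, and $\vv*{T}{y}$ then in a bounded set after choosing $F$ in reduced form. Alternatively, I would project them out via the quantifier-elimination route above, which does not need compactness. With $N$ and the bitsize polynomial in the input, the bound $2^{-\mathrm{poly}\cdot 2^{O(N\log d)}}$ is of the form $2^{-2^{L_\target}}$ with $L_\target$ polynomial and computable in polynomial time from $n$, $m$, and the bitsize of $\target$. Taking strict inequality costs only incrementing $L_\target$.

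The main obstacle I expect is the second step. Matching the hypotheses of the cited minimum bound, namely compactness, the handling of the strict condition $\rival\neq\target$, and the unbounded certificate variables, requires care. The fallback is the univariate projection argument with explicit degree and height bounds from quantifier elimination.
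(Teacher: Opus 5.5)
Your argument is correct, but only along the route you call the fallback, and that route differs from the paper's.

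The paper does not use the certificate of Theorem~\ref{thm:kw-certificate} here. It works instead with the exhaustive-summary polynomial $f_\target(\rival)=\sum_{w\in\abc^{\le 2n-1}}(\prob_\target(w)-\prob_\rival(w))^2$, whose zero set on stochastic rivals is exactly $E_\target$ by Paz's length bound. Despite its exponentially many terms, this polynomial has only the $K=n+n^2+nm$ HMM entries as variables, degree $O(n)$, and a height of polynomial bit length. That is all \citet[Theorem~1]{jeronimo2013minimum} depends on, and its zero set $Z_\target$ is compact with no auxiliary variables. The paper handles $\rival\neq\target$ by connected components rather than by a neighbourhood $U$. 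Isolatedness makes $\{\target\}$ a component of $Z_\target$, and every other component is compact and avoids $\target$. Hence $\|\target-\rival\|_2^2$ has a positive minimum on each of them, and JPT bounds all these minima uniformly with explicit constants.

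You instead keep the polynomial-size certificate and eliminate quantifiers down to the compact set $G=\{\|\target-\rival\|_2^2\mid\rival\in E_\target\}\subseteq\real$. Since $G\cap(0,r_\target)=\emptyset$, the value $r_\target$ is a boundary point of $G$. It is therefore a positive root of a nonzero polynomial from the quantifier-free description, of degree $3^{O(N)}$ and bitsize $\tau 3^{O(N)}$, where $\tau$ is the input coefficient bitsize. Note that it is the bitsize, not the height, that is singly exponential. Cauchy's bound on nonzero roots then gives $-\log_2 r_\target\le\tau 3^{O(N)}$, which is what you need.

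The trade-off is as follows. Your route needs neither compactness nor connected components, so it would extend verbatim to noncompact or existentially described families. The paper's route yields an explicit formula for $L_\target$. Yours inherits the hidden constants of the quantifier-elimination bounds, which must be made explicit before $L_\target$ can actually be computed.

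Your first idea, applying JPT to the certificate set, does fail as you feared. Since $F\one_{2n}=0$, $F$ is always singular, so each row of $\vv*{T}{y}$ can be shifted by any $u$ with $uF=0$. That set is therefore always unbounded, and the normalization you sketch would need extra constraints you have not specified. Replacing the certificate by $f_\target$ removes this obstruction. After that swap, even your $tg=1$ set is compact under the hypothesis, because $E_\target\setminus\{\target\}$ is closed when $\target$ is isolated.
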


We use Proposition~\ref{thm:small-radius-local-id} to show that $\idlocal$ is in $\coetr$ by taking care that the doubly exponentially small radius is encoded in a polynomially-sized $\etr$ formula.

\begin{theorem}
$\coidlocal$ is in $\etr$, and therefore $\idlocal$ is in $\coetr$.
\end{theorem}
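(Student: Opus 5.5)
The plan is to combine the output-equivalence certificate of Theorem~\ref{thm:kw-certificate} with the fixed radius of Proposition~\ref{thm:small-radius-local-id}. That proposition replaces the universally quantified $\delta$ by one explicit, doubly exponentially small radius $r:=2^{-2^{L_\target}}$, with $L_\target$ polynomial in the input size. The resulting $\etr$ formula existentially guesses a rival $\rival=(n,\pi',A',B')\in\hmmsn n$ together with the certificate matrices $F$ and $\{\vv*{T}{y}\}_{y\in\abc}$. It then imposes four groups of constraints: (i) stochasticity of $(\pi',A',B')$; (ii) equations \eqref{eq:equiv-cert-firstrow}--\eqref{eq:equiv-cert-invariance} for the pair $(\target,\rival)$; (iii) $\|\target-\rival\|_2^2>0$; and (iv) $\|\target-\rival\|_2^2\le r$.

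The constant $r$ cannot be written in binary in polynomial size, so it is encoded by repeated squaring. Introduce fresh variables $e_0,\dots,e_{L}$ with $L=L_\target$, and impose
\[
  e_0=\tfrac12,\qquad e_{k+1}=e_k^2\quad(0\le k<L).
\]
This forces $e_L=2^{-2^{L}}$ using only $L+1$ variables and quadratic equations, and constraint (iv) becomes $\|\target-\rival\|_2^2\le e_L$. Every constraint is polynomial, and the total size is polynomial because $L_\target$ is computable in polynomial time and polynomially bounded.

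Correctness goes in two directions.

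\emph{If $\target$ is locally identifiable}, Proposition~\ref{thm:small-radius-local-id} says every distinct equivalent rival has $\|\target-\rival\|_2^2>e_L$. Constraints (iii) and (iv) together then leave no admissible rival, so the formula is false.

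\emph{If $\target$ is not locally identifiable}, then for every $\delta>0$ there is a distinct equivalent rival within distance $\delta$. Taking $\delta^2=e_L$ gives a rival satisfying (i), (iii) and (iv), and the ($\Longrightarrow$) direction of Theorem~\ref{thm:kw-certificate} supplies matrices $F,\vv*{T}{y}$ satisfying (ii). So the formula is true.

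Hence the formula is true exactly when $\target\in\coidlocal$, which places $\coidlocal$ in $\etr$ and $\idlocal$ in $\coetr$.

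The substantive difficulty sits in Proposition~\ref{thm:small-radius-local-id}, which we may assume. What remains to check is that both directions use the same radius. The fixed bound only matters in the identifiable case; in the non-identifiable case any radius works, in particular $e_L$. Finally, the strict inequality in (iii) is expressed in $\etr$ form as in Section~\ref{sec:etr}.
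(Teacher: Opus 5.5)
Your proposal is correct and follows the paper's proof essentially step for step. You fix the radius $2^{-2^{L_\target}}$ from Proposition~\ref{thm:small-radius-local-id}, encode it with $L_\target+1$ repeated-squaring variables, and existentially guess a stochastic rival with $0<\|\target-\rival\|_2^2\le e_L$ together with the certificate of Theorem~\ref{thm:kw-certificate}. You also argue both directions as the paper does: the contrapositive of the proposition for one, and local non-identifiability instantiated at radius $\sqrt{e_L}$ for the other.
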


\begin{proof}
Let $r_\target:=2^{-2^{L_\target}}$, where $L_\target$ is given by Proposition~\ref{thm:small-radius-local-id}.
We claim that $\target$ is locally non-identifiable iff there exists
$\rival\in\hmmsn n$ such that
\begin{equation}\label{eq:local_claim}
    0<\|\target-\rival\|_2^2\le r_\target
    \qquad\text{and}\qquad
    \target\equiv\rival.
\end{equation}
If~\eqref{eq:local_claim} holds, the contrapositive of Proposition~\ref{thm:small-radius-local-id} says that $\target$ cannot be locally identifiable.
Conversely, if $\target$ is locally non-identifiable, distinct equivalent rivals occur inside every positive radius, hence one satisfies~\eqref{eq:local_claim}.

It remains to encode $r_\target$ without writing its enormous denominator.
Existentially introduce variables $z_0,\ldots,z_{L_\target} \in \real$, and impose constraints
$2z_0=1$ and $z_{\ell+1}=z_\ell^2$ for $\ell=0,\ldots,L_\target-1$.
Then $z_{L_\target}=2^{-2^{L_\target}}=r_\target$.
Together with the stochasticity constraints for $\rival$, the inequality $0<\|\target-\rival\|_2^2\le z_{L_\target}$, and the equivalence certificate of Theorem~\ref{thm:kw-certificate}, this is a polynomial-size existential real formula. Hence $\coidlocal\in\etr$.
\end{proof}

\subsection{Generic identifiability}
\label{sec:generic_identifiability}

The deterministic identifiability studied so far concerns one target HMM. 
Generic identifiability instead asks whether the set of parameter values specifying non-identifiable HMMs form a lower-dimensional subset of the parameter domain $\Theta$, using the parameterizations $(\Theta,\phi)$ from Definition~\ref{def:parameterization},

For generic identifiability up to equality, the parameter values specifying non-identifiable HMMs form the set
\begin{equation}
 \mathsf{Bad}_{\phi}
 :=\bigl\{\theta\in\Theta \mid \exists\theta'\in\Theta,
       \ \phi(\theta')\equiv\phi(\theta)
       \text{ and } \phi(\theta')\ne\phi(\theta)\bigr\}.
 \label{eq:generic-bad-set}
\end{equation}
Observe that two parameter values that give exactly the same entries of $(\pi, A, B)$ do not count toward non-identifiability.
A stronger notion of identifiability can be provided through notions of equivalence different from strict equality as discussed in Remark~\ref{rem:generic_other_equality}.
We next introduce the decision problem.

\begin{problem}[Generic identifiability]\label{prob:generic-identifiability}
Let $\idgeneric$ denote the following problem.
The input is an HMM parameterization $(\Theta, \phi)$.
The decision problem is whether $\dim\mathsf{Bad}_{\phi} < \dim\Theta$.
\end{problem}

\citet[Theorem~11]{dorfler2024complexity} provide a $\pspace$ bound for the generic identifiability of linear structural causal models (SCM).
Even though SCMs are fundamentally different from HMMs both in structure and semantics, we take inspiration from their theorem to provide a $\pspace$ bound on the generic identifiability problem for HMMs.
Our output-equivalence certificate from Theorem~\ref{thm:kw-certificate} gives a polynomial-size existential description of $\mathsf{Bad}_{\phi}$.
The dimension algorithm by \citet{koiran1999real} then expresses each question $\dim S \ge r$ by an existential real formula, for $S = \Theta$ and $S = \mathsf{Bad}_{\phi}$. 
Following \citet[Theorem~11]{dorfler2024complexity}, we combine the candidate values for $r$, the dimension threshold, into one formula. 
For each threshold, the assertion that $\Theta$ reaches that dimension is existential, while the assertion that $\mathsf{Bad}_{\phi}$ does not is universal. 
Since these two quantifiers are independent, we get the bound of $\exists \coetr \cap \forall \etr$.
Appendix~\ref{app:generic-upper-bound} provides the full formula and proof.

\begin{theorem}\label{thm:generic-identifiability-upper}
\label{thm:generic-identifiability-pspace}
$\idgeneric$ is in $\exists \coetr \cap \forall \etr \subseteq \pspace$.
\end{theorem}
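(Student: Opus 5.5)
The plan follows the strategy sketched before the statement, adapting \citet[Theorem~11]{dorfler2024complexity}.

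\textbf{Step 1: existential description of the bad set.} Write $\mathsf{Bad}_\phi$ as the projection onto $\theta$ of the semialgebraic set of tuples $(\theta,\theta',F,\{\vv*{T}{y}\}_y)$ satisfying four conditions: $\theta,\theta'\in\Theta$; equations \eqref{eq:equiv-cert-firstrow}--\eqref{eq:equiv-cert-invariance} for the pair $\phi(\theta),\phi(\theta')$; and $\|\phi(\theta)-\phi(\theta')\|_2^2>0$. By Theorem~\ref{thm:kw-certificate} this projection is exactly $\mathsf{Bad}_\phi$. The resulting existential formula $\beta(\theta)$ has polynomial size. Likewise $\Theta$ is given by its input formula $\tau(\theta)$.

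\textbf{Step 2: dimension tests.} Invoke \citet{koiran1999real}: for a semialgebraic set given by an existential formula $\psi(\theta)$ with $\theta\in\real^K$, and an integer $r\le K$, the statement $\dim\{\theta\mid\psi(\theta)\}\ge r$ is expressible by a polynomial-size existential formula $D_r[\psi]$. The proof would use Koiran's characterization: there is a coordinate subset $J$ of size $r$ whose projection has nonempty interior. Guessing $J$ can be done with Boolean selector variables, and nonempty interior can be certified existentially by a point and radius together with a suitable witness. I expect to cite this characterization rather than reprove it. Since $\beta$ is itself existential, with auxiliary variables, I must check that Koiran's construction applies to projections. This is exactly the main obstacle, as the interior condition contains an implicit universal over the ball. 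The cleanest route is to use Koiran's result as a black box for sets defined by existential formulas, which is how D\"orfler et al.\ use it.

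\textbf{Step 3: combine the thresholds.} We have $\dim\mathsf{Bad}_\phi<\dim\Theta$ if and only if there exists $r\in\{0,\dots,K\}$ with $D_r[\tau]$ true and $D_r[\beta]$ false. (Take $r=\dim\Theta$. Also note that $\mathsf{Bad}_\phi=\emptyset$ gives dimension $-\infty$, handled by allowing $r=0$ with $\Theta\ne\emptyset$.)

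For the $\exists\coetr$ bound, encode $r$ by binary selector variables. The formula $\exists(\text{selectors},\text{witnesses for }D_r[\tau])\ \forall(\text{variables of }D_r[\beta])\ \neg(\ldots)$ has the right shape, because the existential block for $\tau$ and the universal negation for $\beta$ share only the selectors. The selectors go in the outer existential block.

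For the $\forall\etr$ bound, use the equivalent form: for all $r$, the statement $\dim\Theta\ge r$ implies $\dim\mathsf{Bad}_\phi< r$ fails to be the only possibility, or more precisely $\forall r\ \exists\ (D_{r}[\tau]\wedge\neg D_{r+1}[\tau] \to \neg D_r[\beta])$. Here $r=\dim\Theta$ is pinned down and then the bad set is required to be small. In this form the negations must be placed so that the outer quantifier is universal and the inner block is existential: the universal block contains $r$ and the witness variables for $D_r[\tau]$ and $\neg D_r[\beta]$'s complement, and the existential block contains the witnesses for $D_{r+1}[\tau]$. Bookkeeping the polarity of each of the four dimension subformulas so that each lands in a single quantifier block is the delicate part of this step.

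\textbf{Step 4: conclusion.} Containment in $\pspace$ then follows from \citet{canny1988some} and \citet{renegar1992computational}, since both classes are bounded-alternation fragments of the first-order theory of the reals.
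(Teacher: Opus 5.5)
Your proof is correct. Steps 1, 2, 4 and the $\exists\forall$ half of Step 3 are the paper's argument: an existential description of $\mathsf{Bad}_\phi$ via the output-equivalence certificate, Koiran's dimension test used as a black box for existentially defined sets (as in D\"orfler et al.), and a disjunction over thresholds $r$ of $(\dim\Theta\ge r)\wedge\neg(\dim\mathsf{Bad}_\phi\ge r)$. Your selector encoding of $r$ is interchangeable with the paper's use of fresh variables per disjunct.

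\textbf{Where you differ: the $\forall\exists$ bound.} You switch to a second characterization, $\forall r\,[\dim\Theta=r\Rightarrow\dim\mathsf{Bad}_\phi<r]$. This works, since each of $\neg D_r[\tau]$, $D_{r+1}[\tau]$, $\neg D_r[\beta]$ lands in a single block. However, it costs an extra dimension test per threshold and the polarity bookkeeping you flag as delicate.

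The paper gets both bounds from the same formula. Write $D_r[\tau]=\exists w\,a_r(w)$ and $D_r[\beta]=\exists u\,b_r(u)$ with $w,u$ disjoint. Then $\exists w\,\forall u\,(a_r(w)\wedge\neg b_r(u))$ and $\forall u\,\exists w\,(a_r(w)\wedge\neg b_r(u))$ are equivalent. A disjunction of variable-disjoint $\forall\exists$ formulas is again $\forall\exists$. So no exact-dimension pinning is needed.

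\textbf{Empty $\Theta$.} Your two formulas disagree on the edge case $\Theta=\varnothing$. The $\exists\forall$ one rejects it, because no $r$ has $D_r[\tau]$ true. The $\forall\exists$ one accepts it vacuously. As written, they therefore do not decide the same language. The paper fixes a convention, accepting via the disjunct $\neg(\dim\Theta\ge0)$. Whichever convention you choose, you can repair this while preserving both quantifier shapes: either add the disjunct $\neg D_0[\tau]$ to the first formula, or conjoin $D_0[\tau]$ to the second.

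\textbf{Minor point on Step 2.} Your sketch of certifying nonempty interior ``existentially by a point and radius'' is not existential as stated. Containment of a ball in a projection is a $\forall\exists$ statement. It is Koiran's theorem, cited as a black box exactly as the paper does, that carries that step.
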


\begin{remark}\label{rem:generic_other_equality}
    The bound in Theorem~\ref{thm:generic-identifiability-upper} holds for generic identifiability modulo another equivalence relation $\othereq$, provided the condition $\phi(\theta')\not\othereq\phi(\theta)$ has a quantifier-free real description constructible in polynomial time, because its negation would still then be quantifier free. 
    If $\othereq$ requires quantification, then the bound may shift upward in the real hierarchy~\citep{schaefer2026existential}.
\end{remark}

\subsection{Family-wise identifiability}\label{sec:family-wise}

Our deterministic identifiability problems fix one target HMM and ask whether it has a distinct output-equivalent rival. 
We now ask whether this uniqueness property holds for every HMM in a given family, which is a prevalent question in the parametric identifiability literature~\citep{catchpole1997detecting,cole2019parameter}. 
We next introduce deterministic family-wise identifiability up to equality as a decision problem.

\begin{problem}[Family-wise identifiability]
Let $\iddetfam$ denote the following problem.
The input is an HMM parameterization $(\Theta, \phi)$.
The decision problem is whether, for every $\theta, \theta'\in\Theta$, if $\phi(\theta) \equiv\phi(\theta')$ then $\phi(\theta) = \phi(\theta')$.
\end{problem}

The bound from Theorem~\ref{thm:det-eq-etr-certificate} directly carries over to family-wise identifiability: existentially guess both $\theta$ and $\theta'$, and use constraints (i)--(iii) in Theorem~\ref{thm:det-eq-etr-certificate} to enforce (non-)identifiability.

\begin{lemma}\label{thm:familywise-coetr}
$\iddetfam$ is in $\coetr$.
\end{lemma}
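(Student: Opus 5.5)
We prove the equivalent statement that the complement $\overline{\iddetfam}$ lies in $\etr$, mirroring the proof of Theorem~\ref{thm:det-eq-etr-certificate}. An instance $(\Theta,\phi)$ is a no-instance of $\iddetfam$ exactly when there exist $\theta,\theta'\in\Theta$ with $\phi(\theta)\equiv\phi(\theta')$ and $\phi(\theta)\neq\phi(\theta')$. We write a single existential real formula that guesses both parameter values together with an output-equivalence certificate.

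The formula existentially quantifies $\theta,\theta'\in\real^K$, the matrix $F\in\real^{2n\times 2n}$, and the matrices $\vv*{T}{y}\in\real^{2n\times 2n}$ for $y\in\abc$. It imposes three groups of constraints:
\begin{enumerate}[label=(\roman*)]
    \item the semialgebraic description of $\Theta$ applied to both $\theta$ and $\theta'$;
    \item equations \eqref{eq:equiv-cert-firstrow}--\eqref{eq:equiv-cert-invariance} of Theorem~\ref{thm:kw-certificate}, instantiated with $\model=\phi(\theta)$ and $\model'=\phi(\theta')$;
    \item the disequality $\|\phi(\theta)-\phi(\theta')\|_2^2>0$.
\end{enumerate}
The only difference from Theorem~\ref{thm:det-eq-etr-certificate} is that the target entries are no longer rational constants but polynomials in the variables $\theta$. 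In \eqref{eq:equiv-cert-invariance} the entries of $T_y=D(y)A$ are products of entries of $\phi(\theta)$, and each product $FT_y^+$ is bilinear in $F$ and those entries. Hence every constraint is still a polynomial with rational coefficients, and its size is polynomial in the input. Disjunctions and strict inequalities coming from the description of $\Theta$ are handled as in Section~\ref{sec:etr}.

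Correctness follows directly from Theorem~\ref{thm:kw-certificate}. Given a satisfying assignment, the certificate ensures $\phi(\theta)\equiv\phi(\theta')$, and (iii) ensures $\phi(\theta)\neq\phi(\theta')$, so the instance is not family-wise identifiable. Conversely, any witnessing pair $\theta,\theta'$ extends to a satisfying assignment, because the $(\Longrightarrow)$ direction of Theorem~\ref{thm:kw-certificate} supplies $F$ and the $\vv*{T}{y}$. Therefore $\overline{\iddetfam}\in\etr$, so $\iddetfam\in\coetr$.

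The only step that needs care is verifying that the formula stays polynomial in size once the target is symbolic. The polynomials of $\phi$ have their exponents given in binary, so substituting them into the bilinear products could in principle blow up. To avoid this, we introduce auxiliary variables for the entries of $\phi(\theta)$ and $\phi(\theta')$ and define each one by a single equation. Any high-degree monomials are computed by repeated squaring with helper variables, as in the construction of $z_{L_\target}$ in the local-identifiability proof. This keeps the whole formula polynomial in the input size.
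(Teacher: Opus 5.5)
Your proof is correct and is the paper's argument: existentially guess both $\theta$ and $\theta'$ together with the certificate of Theorem~\ref{thm:kw-certificate}, and impose constraints (i)--(iii) of Theorem~\ref{thm:det-eq-etr-certificate}, which gives $\overline{\iddetfam}\in\etr$. Your extra step of introducing auxiliary variables and using repeated squaring for binary-encoded exponents is a sound detail that the paper leaves implicit.
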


As in Remark~\ref{rem:generic_other_equality}, one can also require uniqueness up to other equivalence relations. 
One may also distinguish between local and global identifiability.
We do not explicitly consider all such variants due to combinatorial explosion, but expect Lemma~\ref{thm:familywise-coetr} to carry over as long as the variant does not introduce additional complexity.

\begin{remark}\label{rem:exhaustive_summary}
    \citet{catchpole1997detecting} and \citet{cole2019parameter} study local identifiability of parameterized families using an \emph{exhaustive summary}, $\kappa(\theta)$: a finite vector of observable quantities that determines the observation distribution.
    Differentiating this summary with respect to the parameters gives the derivative matrix $D(\theta)=\partial\kappa(\theta)/\partial\theta$.
    If the generic rank of $D$ is lower than the number of parameters, the model is parameter redundant and thus non-identifiable, whereas a full generic rank means that the model is at least locally identifiable.
    \citet{cole2019parameter} implements this method using a symbolic algebra system, but it requires exponential space if used as a general solution to the identifiability decision problem here.
\end{remark}

\section{Conclusions and future work}

\paragraph{Conclusions.} We give polynomial-size $\etr$ certificates for HMM output equivalence in Theorem~\ref{thm:kw-certificate} and use them throughout the paper.
Using them, we show deterministic identifiability up to equality is in $\coetr$. 
It is $\coetr$-complete for simple parameterizations and $\conp$-hard for mask (rectangular) parameterizations.
These lower bounds also hold for identifiability up to hidden-state renaming and for bisimilarity.
Identifiability up to hidden-state permutation lies in $\forall \etr$, while identifiability up to bisimilarity lies in $\coetr$.
The hardness results do not depend on the alphabet: identifying HMMs over a binary alphabet is as hard as over an arbitrary one.
We further show that local identifiability is in $\coetr$, generic identifiability is in $\exists\coetr \cap \forall \etr$, and that family-wise identifiability is in $\coetr$.
By \citet{canny1988some} and \citet{renegar1992computational}, all of these problems are decidable in $\pspace$.

\paragraph{Open questions and future work.} Deterministic global identifiability up to equality with mask parameterizations is shown to belong to $\coetr$ and be $\conp$ hard, thus $\coetr$-completeness is open (but unexpected).
The hardness of local and generic identifiability are left open.
Moreover, we do not explicitly handle some combinations of the variants introduced in Section~\ref{sec:variants}, but we believe most results carry over without particular difficulty.
Our results may motivate the development of efficient identifiability checking algorithms in $\pspace$ that may combine SAT solvers for the Boolean constraints like in Proposition~\ref{thm:bisim-etr} with $\etr$ solvers for the polynomial equations in Theorem~\ref{thm:det-eq-etr-certificate}.
These solvers could be used for the exhaustive classification of well-established models based on their identifiability, in a sense extending the work by \citet{cole2019parameter}.

\begin{ack}
    Partially funded by the ERC Starting Grant 101077178 (DEUCE).
\end{ack}

\bibliographystyle{apalike}
\bibliography{references}

\vfill
\pagebreak

\appendix

\section{Deterministic HMM identifiability up to equality with simple parameters}\label{app:pa_dentifiability_hard}

\newcommand{\proofstep}[2]{\vspace{0pt}~\\\emph{#1.\ #2}}

\subsection{Probabilistic automaton identifiability is $\coetr$-hard}\label{app:pa_dentifiability_hard}

\paetrhard*

\begin{proof}
We extend the construction of Example~\ref{ex:pa_identifiability} to general polynomial equations.
The selector variable again separates the dummy solution from solutions of the input equation. 
The main additional task is to compare two polynomial expressions whose common value is not fixed in advance.
We organize the proof into steps I.\ through V.\ for readability.

Our use of separate input words to enforce equations follows the arithmetic encoding in \citet{lenhardt2009probabilistic}, 
\citet{chonev2017reachability}, and \citet{winkler2019complexity}.
We use these ideas to construct a family with a specified rational target, requiring every transition probability to be a rational constant, a parameter, or its complement. 

\proofstep{I}{Preparing the polynomial equation}

We reduce from feasibility of a single equation
\begin{equation}
 P(x)=Q(x),\qquad x\in[0,1]^k,
 \label{eq:pa-source}
\end{equation}
where both polynomials are given explicitly as sums of terms, have degree at most four, and have nonnegative rational coefficients. 
\citet[Lemma~29]{winkler2019complexity} prove the $\etr$-hardness of finding a zero in $[0,1]^k$ of a polynomial $f$ of degree at most four that is nonnegative throughout this domain. 
Its coefficients may be either positive or negative. 
Moving the negative-coefficient terms in $f(x)=0$ to the other side gives an instance of~\eqref{eq:pa-source}, meaning that this problem is $\etr$-hard as well.

Let $C$ be the maximum of $1$, the sum of the coefficients of $P$, and the sum of the coefficients of $Q$. 
Scale down $p=P/C$ and $q=Q/C$, so that each of $p$ and $q$ has coefficient sum at most $1$ and therefore takes values in $[0,1]$ throughout the domain.
Introduce a selector variable $z\in[0,1]$. 
Just as in Example~\ref{ex:pa_identifiability}, the equations
\begin{align}
 z(1-z) &= 0,\label{eq:pa_id_select_z}\\
 (1-z)x_i &= 0 \quad \text{for } i=1, \dots k,\label{eq:pa_id_select_x}\\
 zp(x) &= zq(x)\label{eq:pa_id_select_pq},
\end{align}
allow either the dummy solution $z=0,x_i=0$, or $z=1$ with $x$ solving~\eqref{eq:pa-source}.

In Example~\ref{ex:pa_identifiability}, the system equates positive polynomials to (fixed) constants.
Here in the general case, both $p(x)$ and $q(x)$ may vary with $x$, meaning their common value is unknown.
To compare them without knowing this value, introduce a parameter $\lambda\in[0,1]$. 
We will give two input words the acceptance probabilities
\begin{equation}
 \frac{\lambda}{4}\bigl(1+zp(x)\bigr),\qquad
 \frac{\lambda}{4}\bigl(1+zq(x)\bigr),
 \label{eq:pa-comparison-plan}
\end{equation}
respectively, and require both to match the target value $1/8$.
The constant $1$ in these expressions is important: if $z=1$ and
$p(x)=q(x)=v$, choosing $\lambda=1/(2(1+v))$ makes both probabilities $1/8$. This choice lies in $[1/4,1/2]$ because $v\in[0,1]$. Conversely, matching the positive target value forces $\lambda>0$, so equality of the two probabilities forces $zp(x)=zq(x)$.

\proofstep{II}{Constructing the PA}

We now construct the PA that realizes these equations.
The parameter vector is
\[
 \theta=(z,x_1,\ldots,x_k,\lambda)\in[0,1]^{k+2}.
\]
Use an internal letter $t$ and distinct query letters
\[
 \Sigma=\{t,\sigma_z,\sigma_L,\sigma_R\}\cup\{\sigma_i:1\le i\le k\}.
\]
Each query letter ``tests'' an equation, which gets read first, and then subsequent copies of $t$ carry out that test.
The PA starts in $q_{\mathrm{init}}$ with probability $1$ and has a single accepting state $q_{\mathrm{acc}}$, with final probability $1$. 
Every other state has final probability $0$. 
There is also a rejecting sink $q_{\mathrm{rej}}$.
Different tests use disjoint intermediate states and share only these three states.

We next write $u\xrightarrow[b]{a}v$ for a transition from $u$ to $v$ on letter $b$ with probability $a$.
To test the selector variable by~\eqref{eq:pa_id_select_z}, introduce states $q_{z,0},q_{z,1}$ and transitions
\[
 q_{\mathrm{init}}\xrightarrow[\sigma_z]{1}q_{z,0}
 \xrightarrow[t]{z}q_{z,1}
 \xrightarrow[t]{1-z}q_{\mathrm{acc}}.
\]
To test variable $x_i$ by~\eqref{eq:pa_id_select_x}, introduce states $q_{i,0},q_{i,1}$ and transitions
\[
 q_{\mathrm{init}}\xrightarrow[\sigma_i]{1}q_{i,0}
 \xrightarrow[t]{1-z}q_{i,1}
 \xrightarrow[t]{x_i}q_{\mathrm{acc}}.
\]
At each of these transitions, send the remaining probability to $q_{\mathrm{rej}}$. These paths give acceptance probabilities $z(1-z)$ for $\sigma_z t^2$ and $(1-z)x_i$ for $\sigma_i t^2$.

The tests for equation~\eqref{eq:pa_id_select_pq} are more involved.
We give the construction once for a polynomial $R$: use it with $R=p$ and query letter $\sigma_L$, and then with $R=q$ and query letter $\sigma_R$. The two copies have different states but share $\lambda$.

Because $R$ has degree at most four, we can write it as
\begin{equation}
 R(x)=\sum_{r=1}^{N_R}a_r\prod_{h=1}^{4}v_{r,h},
 \qquad a_r>0,\quad \sum_r a_r\le1,\quad
 v_{r,h}\in\{1,x_1,\ldots,x_k\}.
 \label{eq:pa-terms}
\end{equation}
Here we encode terms (i.e.\ monomials) of degree less than four via factors equal to $1$. 
A constant term has four such factors.
We implement each product by a chain and add the terms by branching, using the polynomial encoding in \citet[Remark~7]{chonev2017reachability} and \citet[Lemma~20]{winkler2019complexity}.
Figure~\ref{fig:example_pa_hard} provides good intuition on how this encoding is achieved.
Unlike their treatment of negative coefficients through complementary factors, we have already placed negative terms on the opposite side of the equation.

For each copy, we do as follows.
Introduce an entry state $q_e$, a branching state $q_b$, five states $q_{d,0},\ldots,q_{d,4}$ for a constant contribution, and five states $q_{r,0},\ldots,q_{r,4}$ for each term $r$ of~\eqref{eq:pa-terms}.
These names are local to the copy being constructed. 
Let $\sigma \in \{\sigma_L, \sigma_R\}$ be its query letter, and set
\[
 q_{\mathrm{init}}\xrightarrow[\sigma]{1}q_e \xrightarrow[t]{\lambda}q_b,
\]
and direct the remaining probability mass, $1-\lambda$, to state $q_\mathrm{rej}$ as usual.
On the next $t$, split the probability at $q_b$ as follows:
\begin{align*}
q_b &\xrightarrow[t]{1/4} q_{d,0},&&\\
q_b &\xrightarrow[t]{a_r/4} q_{r,0} &\text{for } r &= 1, \dots, N_R,\\
q_b &\xrightarrow[t]{w} q_\mathrm{rej} &\text{where } w &= 3/4-\tfrac14\sum_r a_r.
\end{align*}
Thus one branch will supply the constant ($\nicefrac\lambda4$) contribution, and each remaining successful branch corresponds to one term of $R$. 
Observe that the rejecting probability is nonnegative.

To specify the rest of the paths, let
$q_{d,5}=q_{\mathrm{acc}}$ and $q_{r,5}=q_{\mathrm{acc}}$.
The constant branch uses five probability-$1$ transitions:
\[
 q_{d,h} \xrightarrow[t]{1} q_{d,h+1}\quad \text{for } h=0, \dots, 4.
\]

On the branch for term $r$, to test~\eqref{eq:pa_id_select_pq}, first test $z$ and then its four factors:
\begin{align*}
q_{r,0} &\xrightarrow[t]{z} q_{r,1},
&q_{r,0} &\xrightarrow[t]{1-z}q_{\mathrm{rej}},\\
q_{r,h} &\xrightarrow[t]{v_{r,h}}q_{r,h+1},
&q_{r,h} &\xrightarrow[t]{1-v_{r,h}} q_{\mathrm{rej}},\quad \text{for } h= 1, \dots, 4.
\end{align*}
Every successful path from $q_e$ to $q_{\mathrm{acc}}$ reads exactly seven copies of $t$: one for $\lambda$, one for the branching step, and five for the chosen branch. In particular, the constant contribution and all polynomial terms contribute to the acceptance probability of the same word $\sigma t^7$.

Every state-letter row specified above sums to $1$.
For any state-letter transition not yet specified, put probability $1$ on $q_{\mathrm{rej}}$. 
In particular, $q_{\mathrm{rej}}$ loops on every letter, every transition from $q_{\mathrm{acc}}$ goes to $q_{\mathrm{rej}}$, and a query letter read inside a test leads to rejection.

\proofstep{III}{Verifying the acceptance probabilities of the constructed PA}

In the PA just constructed, the constant branch contributes $\lambda/4$ to the probability of $\sigma t^7$. 
The branch for term $r$ contributes
$\lambda a_r z\prod_{h=1}^4v_{r,h}/4$.
Adding these contributions gives $\lambda(1+zR(x))/4$, as required. 
Consequently the complete list of words with potentially nonzero acceptance probability is
\begin{align}
 L_{\mathcal A(\theta)}(\sigma_z t^2)&=z(1-z),&\nonumber\\
 L_{\mathcal A(\theta)}(\sigma_i t^2)&=(1-z)x_i &\quad \text{for } i=1, \dots, k,\nonumber\\
 L_{\mathcal A(\theta)}(\sigma_L t^7)&=\tfrac{\lambda}{4}(1+zp(x)),\nonumber&\\
 L_{\mathcal A(\theta)}(\sigma_R t^7)&=\tfrac{\lambda}{4}(1+zq(x))&.
 \label{eq:pa-language}
\end{align}
To see that the list is exhaustive, an accepting path must start with exactly one query letter and then follow its test to $q_{\mathrm{acc}}$. Stopping earlier leaves the PA in a nonaccepting state, and reading an extra letter after reaching $q_{\mathrm{acc}}$ leads to $q_{\mathrm{rej}}$. Reading any wrong letter also leads to that sink. Thus all unlisted words have probability $0$, independently of the parameters.

\proofstep{IV}{Correctness of the reduction}

Choose the rational target parameter value $\theta^\star := (z^\star, x_i^\star (\text{for } i=1, \dots, k), \lambda^\star)$ as follows:
\[
 z^\star=0,\qquad x_i^\star=0\ \quad (\text{for } i=1,\dots,k),\qquad
 \lambda^\star=1/2.
\]
The target PA gives probability $0$ to the selector and variable words $\sigma_zt^2$ and $\sigma_it^2$, and $1/8$ to both words for comparing $p$ and $q$, that is $\sigma_Lt^7$ and $\sigma_Rt^7$.
This target is known without solving the input equation.

Suppose rival $\phi(\theta')$ is output equivalent to the target.
The word $\sigma_z t^2$ forces $z(1-z)=0$, so $z$ is either $0$ or $1$.
If $z=0$, each word $\sigma_i t^2$ forces $x_i=0$, and either word for comparing $p$ and $q$ forces $\lambda/4=1/8$, hence $\lambda=1/2$.
Therefore, $\phi(\theta') = \phi(\theta^\star)$, thus $\theta'$ can be discarded as identifiability requires that the rival must be distinct from the target.

It follows that a distinct output-equivalent rival must have $z=1$. 
The two comparison words require
\[
 \frac{\lambda}{4}(1+p(x))=\frac18
 =\frac{\lambda}{4}(1+q(x)).
\]
The positive middle value implies $\lambda>0$. Subtracting the two expressions and dividing by $\lambda/4$ yields $p(x)=q(x)$, which is equivalent to $P(x)=Q(x)$. 
Therefore every rival that is distinct from the target yet output equivalent supplies a solution of~\eqref{eq:pa-source}.

Conversely, let $x$ solve~\eqref{eq:pa-source}. Set $z=1$ and choose
\[
 \lambda=\frac{1}{2(1+p(x))}.
\]
Notice this is a valid parameter value in $[1/4,1/2]$.
The selector and variable words, $\sigma_zt^2$ and $\sigma_it^2$, have probability $0$, and both comparison words, $\sigma_Lt^7$ and $\sigma_Rt^7$, have probability $1/8$ because $p(x)=q(x)$. 
All other words have probability $0$ by~\eqref{eq:pa-language} and the argument on the exhaustive list of words above. 
Hence the resulting PA is output equivalent to the target, yet it is distinct as $z^\star=0$ and $z' = 1$.

Finally, we verify that the reduction in polynomial-time.
There are two intermediate states for each selector or variable test, and $2+5(1+N_R)$ intermediate states for each of the two polynomial tests, where $N_R$ is the number of terms of $R\in\{p,q\}$. 
The alphabet has $k+4$ letters. 
Since $P$ and $Q$ are given as explicit sums and have degree at most four, these quantities are polynomial in the input size. 
Sums and quotients of the input rational coefficients, including $C$, the normalized coefficients, and the fixed branch probabilities, have polynomial bit length. 
\end{proof}

\subsection{Example $\coetr$ to PA identifiability}\label{app:example_pa_hard}

Figure~\ref{fig:example_pa_hard_full} provides the full transition function of the PA described in Example~\ref{ex:pa_identifiability}.
The alphabet is $\{\sigma_1,\sigma_p,\sigma_s,\sigma_x,\sigma_y,\sigma_z\}$. 
An edge label is \emph{letter, probability}. 
State $q_{\mathrm{init}}$ is initial and $q_{\mathrm{acc}}$ is the only accepting state. 
For each state and letter, all remaining probability goes to an omitted rejecting sink state, which loops on every letter. 
In particular, every transition from $q_{\mathrm{acc}}$ goes to that sink. All parameters lie in $[0,1]$.

\begin{figure}
\centering
\begin{tikzpicture}[gadget]
\node[state,initial,initial text=] (q) at (0,0) {$q_{\mathrm{init}}$};
\node[state] (Z) at (3,1) {$q_z$};
\node[state] (N) at (3,-1) {$q_{\bar z}$};
\node[state] (P) at (6,2) {$q_{xy}$};
\node[state] (X) at (6,0) {$q_x$};
\node[state] (Y) at (9,2) {$q_y$};
\node[state] (D) at (6,-2) {$q_{c_2}$};
\node[state] (E) at (9,-2) {$q_{c_1}$};
\node[state,accepting] (F) at (12,0) {$q_{\mathrm{acc}}$};
\path (q) edge node[above,sloped] {$\sigma_1,z$} (Z)
(q) edge[->] node[below,sloped] {$\sigma_1,1-z$} (N)
(Z) edge[->] node[above,sloped] {$\sigma_p,1$} (P)
(P) edge[->] node[above] {$\sigma_1,x$} (Y)
(Y) edge[->] node[above,sloped] {$\sigma_1,y$} (F)
(Z) edge[->] node[above,sloped] {$\sigma_s,\tfrac12$} (X)
(Z) edge[->, bend left=42] node[above] {$\sigma_s,\tfrac12$} (Y)
(X) edge[->] node[above,sloped] {$\sigma_1,x$} (F)
(N) edge[->] node[above,sloped] {$\sigma_x,1$} (X)
(N) edge[->, bend right=30] node[below,sloped,pos=.35] {$\sigma_y,1$} (Y)
(N) edge[->] node[below,sloped] {$\sigma_p,\tfrac14$} (D)
(D) edge[->] node[below] {$\sigma_1,1$} (E)
(N) edge[->, bend right=38] node[below] {$\sigma_s,\tfrac12$} (E)
(E) edge[->] node[below,sloped] {$\sigma_1,1$} (F);
\draw[->] (Z) to[out=90,in=90,looseness=1.0] node[above] {$\sigma_z,1-z$} (F);
\end{tikzpicture}
\caption{A simply parameterized PA encoding polynomial system $xy=1/4; x+y=1$. The target parameters $x=y=z=0$ are given, and finding a distinct output-equivalent rival exactly corresponds to solving the system. In this case, such rival exists: $x=y=\tfrac12$ (and $z=1$).}
\label{fig:example_pa_hard_full}
\end{figure}

\section{HMM identifiability with simple parameters is $\coetr$-hard}\label{app:hmm_identifiability_simple}

\hmmetrhard*

\begin{proof}
We reduce PA non-identifiability up to equality to $\coiddeteq$. 
The input is a simply parameterized PA $\mathcal A(\theta^\star) = (Q,\Sigma,\alpha,\{M_a\}_{a\in\Sigma},\eta)$,
where $\theta^\star \in \Theta$ is a rational parameter value specifying the target PA. 
We construct an HMM family $\phi(\Theta)$ with the same parameter domain $\Theta$ and target parameter values $\theta^\star$. 
The construction will preserve both output equivalence and equality, so a distinct output-equivalent rival exists in one family exactly if and only if it exists in the other.

Correspondences between PAs and HMMs are well studied in the literature, where one adds states to replace transitions that emit observations by states that emit observations---see for instance \citet[Proposition~9]{casacuberta1990some} and \citet[Lemma~4]{dupont2005links}. 
The latter paper's Lemmas~3 through 5 and Proposition~8 establish polynomial-size transformations between PAs and HMMs. 
We apply this general idea in this proof; however, the existing reductions are insufficient as-is since they do not preserve simple parameterizations.
In particular, Lemma~3 by \citet{dupont2005links} forms sums and ratios of transition entries, while Proposition~9 by \citet{casacuberta1990some} and Lemma~4 by \citet{dupont2005links} can multiply initial and transition probabilities. 
The transformation provided below carefully ensures that the parameterization remains simple.

\proofstep{I}{Constructing the HMM}

Let $Q=\{q_1,\ldots,q_n\}$, $m=|\Sigma|$, and $\kappa=1/(m+1)$. 
For each PA state $q_i$, introduce a control state $q_{i,\mathrm c}$, one letter state $q_{i,a}$ for each $a\in\Sigma$, and a decision state $q_{i,\mathrm s}$. 
Note these three types of states are disjoint, because the subscripts $\mathrm c$ and $\mathrm s$ denote state types, not input letters. 
Add one further state $q_\bot$. 

Recall that we use the emit-then-transition convention (see Section~\ref{sec:hmm_defs}).
The observation alphabet is $\Gamma=\Sigma\cup\{\#,\cmark,\xmark,\bot\}$, where the four added symbols are fresh. 
The symbols $\cmark$ and $\xmark$ record acceptance and rejection, respectively, and $\bot$ is repeated after the decision.
A control state emits $\#$ and then chooses, with equal probability $\kappa$, either one of the $m$ letters or the decision step. 
If it chooses letter $a$, the corresponding letter state emits $a$ and simulates the PA transition $M_a$. 
If it chooses the decision step, the decision state emits $\cmark$ with the PA's final probability and $\xmark$ with the complementary probability, then moves to $q_\bot$.

Formally, let $A$ and $B$ be the HMM transition and emission matrices. 
Its initial distribution is $\pi(q_{i,\mathrm c})=\alpha(q_i)$ and zero on all other states. 
Its only possibly nonzero entries are
\begin{align*}
 A(q_{i,\mathrm c},q_{i,a})&=\kappa,
 &A(q_{i,\mathrm c},q_{i,\mathrm s})&=\kappa,\\
 A(q_{i,a},q_{j,\mathrm c})&=M_a(q_i,q_j),
 &A(q_{i,\mathrm s},q_\bot)&=1,\\
 A(q_\bot,q_\bot)&=1,\\[2pt]
 B(q_{i,\mathrm c},\#)&=1,
 &B(q_{i,a},a)&=1,\\
 B(q_{i,\mathrm s},\cmark)&=\eta(q_i),
 &B(q_{i,\mathrm s},\xmark)&=1-\eta(q_i),\\
 B(q_\bot,\bot)&=1.
\end{align*}
The equations range over all $i,j = 1, \dots, n$ and $a\in\Sigma$, and all unlisted entries are zero. 
Notice that each control row has $m+1$ entries equal to $\kappa$, each letter row copies a stochastic PA row, and all remaining transition and emission probabilities sum to $1$.

\proofstep{II}{Preserving output equivalence}

For a PA word $w=a_1\cdots a_\ell$, define the observation word
\[
 E(w)=\#a_1\#\cdots a_\ell\#,
 \qquad E(\varepsilon)=\#.
\]
This word records the simulated letters, with $\#$ marking each control step. 
Write $\prob_{\phi(\theta)}(u)$ for the probability that the HMM generates finite word $u$.
For every $r \ge 0$, we get
\begin{align}
 \prob_{\phi(\theta)}(E(w)\cmark\bot^r)
 &=\kappa^{\ell+1}L_{\mathcal A(\theta)}(w),\label{eq:hmm-simple-accept}\\
 \prob_{\phi(\theta)}(E(w)\xmark\bot^r)
 &=\kappa^{\ell+1}\bigl(1-L_{\mathcal A(\theta)}(w)\bigr).
 \label{eq:hmm-simple-reject}
\end{align}
Indeed, choosing the $\ell$ specified letters and then the decision step contributes the factor $\kappa^{\ell+1}$. 
Conditional on these choices, the simulated PA has state distribution $\alpha M_w$ when the decision is made. 
Acceptance therefore has probability $\alpha M_w\eta$, and rejection has probability $\alpha M_w(\one_n-\eta)=1-\alpha M_w\eta$. 
Once the decision symbol has been emitted, every subsequent observation is $\bot$.

We must also account for words that stop before the decision symbol. The remaining possible nonempty observation words have the forms $E(w)$ and $E(w)a$, for $a\in\Sigma$, with probabilities
\[
 \prob_{\phi(\theta)}(E(w))=\kappa^\ell,
 \qquad
 \prob_{\phi(\theta)}(E(w)a)=\kappa^{\ell+1}.
\]
These probabilities do not depend on $\theta$, because summing over the simulated PA states gives $\alpha M_w\one_n=1$. 
The empty word has probability $1$, and every word outside these forms has probability $0$.
Thus equivalent PAs give the same probability to every HMM observation word. 
Conversely, equivalent HMMs give the same probability to $E(w)\cmark$ for every $w$, and dividing~\eqref{eq:hmm-simple-accept} by the positive factor $\kappa^{|w|+1}$ gives us equality of all PA acceptance probabilities. 
Hence
\begin{equation}
 \mathcal A(\theta)\equiv\mathcal A(\theta')
 \quad\Longleftrightarrow\quad
 \phi(\theta)\equiv\phi(\theta').
 \label{eq:hmm-simple-equivalence}
\end{equation}

\proofstep{III}{Preserving distinctness and simple parameters}

Every PA entry can be read from a designated HMM entry:
\[
 \alpha(q_i)=\pi(q_{i,\mathrm c}),\qquad
 M_a(q_i,q_j)=A(q_{i,a},q_{j,\mathrm c}),\qquad
 \eta(q_i)=B(q_{i,\mathrm s},\cmark).
\]
Consequently, different numerical PAs give different numerical HMMs. 
Conversely, if two PAs have identical numerical entries, every formula defining their HMMs gives the same value. 
With state labels fixed, we therefore have
\begin{equation}
 \mathcal A(\theta)=\mathcal A(\theta')
 \quad\Longleftrightarrow\quad
 \phi(\theta)=\phi(\theta').
 \label{eq:hmm-simple-equality}
\end{equation}
This establishes distinctness of models (rather than only distinctness of parameter vectors).

The HMM family $\phi(\Theta)$ uses exactly the PA parameters, and its entries are copied PA entries, complements of final probabilities, or fixed rational constants. 
The complement of a simple entry is again simple, with $1-(1-\theta_i)=\theta_i$. 
Their product appears in path probabilities but never in a single HMM entry. 
The state set is fixed for the whole family, including parameter values at which some transitions vanish.
Therefore, $(\phi, \Theta)$ is a simple parameterization.

\proofstep{IV}{Correctness of the reduction.}

The HMM has $n(m+2)+1$ states and $m+4$ observation symbols. 
Its only new nontrivial rational constant is $\kappa$, which has $O(\log(m+1))$ bits. 
All matrices and the rational target $\phi(\theta^\star)$ can therefore be written down in polynomial time. 
No additional free entries are introduced.
By~\eqref{eq:hmm-simple-equivalence} and~\eqref{eq:hmm-simple-equality}, the target PA has a distinct equivalent member of its family if and only if the target HMM does. 
Therefore, Theorem~\ref{thm:pa_identifiability_hard} proves $\etr$-hardness of $\coiddeteq$ under simple parameterizations. 
Together with Theorem~\ref{thm:det-eq-etr-certificate}, this gives completeness. 
\end{proof}

\subsection{Example of the PA-to-HMM conversion}\label{app:pa-to-hmm-simple}

\begin{example}\label{ex:pa-to-hmm-simple}
Consider the PA in Figure~\ref{fig:pa-to-hmm-simple}~(a), with parameter $x\in[0,1]$, initial state $q_0$, and sole accepting state $q_0$. 
Its alphabet is $\{a\}$. 
On each $a$, it stays at $q_0$ with probability $x$ and otherwise moves to the absorbing rejecting state $q_1$. 
Thus
\[
 L_{\mathcal A(x)}(a^\ell)=x^\ell\quad(\ell\ge0).
\]

Figure~\ref{fig:pa-to-hmm-simple}~(b) shows the seven-state HMM produced by the transformation. 
Each control state chooses between simulating another $a$ and making the decision, with probability $1/2$ for each choice. 
The letter states copy the PA transition probabilities $x,1-x,1$ directly. 
Because the PA final probabilities are $1$ at $q_0$ and $0$ at $q_1$, all emissions in this example are deterministic. 
In particular,
\[
 \Pr_{\mathcal H(x)}(\#a\#\cmark)=x/4,
 \qquad
 \Pr_{\mathcal H(x)}(\#a\#\xmark)=(1-x)/4.
\]
The factor $1/4$ comes from choosing one letter and then the decision. 
More generally, the observation word $(\#a)^\ell\#\cmark$ has probability $2^{-(\ell+1)}x^\ell$. 
\end{example}

\begin{figure}
    \centering
    \begin{subfigure}[b]{0.3\linewidth}
        \centering
        \begin{tikzpicture}[gadget]
         \node[state,accepting] (q0) at (0,0) {$q_0$};
         \node[state] (q1) at (2.5,0) {$q_1$};
         \path (q0) edge[loop above] node {$a,x$} (q0)
         (q0) edge node[above] {$a,1-x$} (q1)
         (q1) edge[loop above] node {$a,1$} (q1);
         \draw[->] (-1, 0) -- node[above] {$1$} (q0);
        \end{tikzpicture}
        \caption{PA: edge labels are input letter and transition probability.}
        \label{fig:ex1-pa-to-hmm}
    \end{subfigure}
    \hfill
    \begin{subfigure}[b]{0.65\linewidth}
        \centering
        \begin{tikzpicture}[gadget]
             \node[state] (c0) at (0, 0) {$\begin{gathered}q_{0,\mathrm c}\\[-2pt]\#\end{gathered}$};
             \node[state] (a0) at (-2.5, 0) {$\begin{gathered}q_{0,a}\\[-2pt]a\end{gathered}$};
             \node[state] (c1) at (0, -2) {$\begin{gathered}q_{1,\mathrm c}\\[-2pt]\#\end{gathered}$};
             \node[state] (a1) at (-2.5, -2) {$\begin{gathered}q_{1,a}\\[-2pt]a\end{gathered}$};
             \node[state] (s0) at (2.5, 0) {$\begin{gathered}q_{0,\mathrm s}\\[-2pt]\cmark\end{gathered}$};
             \node[state] (s1) at (2.5, -2) {$\begin{gathered}q_{1,\mathrm s}\\[-2pt]\xmark\end{gathered}$};
             \node[state] (d) at (4, -1) {$\begin{gathered}q_\bot\\[-2pt]\bot\end{gathered}$};
             
             \draw[->] (0, 1) -- node[right] {$1$} (c0);
             \path (c0) edge[bend left=23] node[above] {$1/2$} (a0)
             (a0) edge[bend left=23] node[above] {$x$} (c0)
             (a0) edge node[above,sloped] {$1-x$} (c1)
             (c1) edge[bend left=23] node[above] {$1/2$} (a1)
             (a1) edge[bend left=23] node[above] {$1$} (c1)
             (c0) edge node[above] {$1/2$} (s0)
             (c1) edge node[above] {$1/2$} (s1)
             (s0) edge node[above] {$1$} (d)
             (s1) edge node[above] {$1$} (d)
             (d) edge[loop above] node {$1$} (d);
        \end{tikzpicture}
        \caption{HMM: edge labels are transition probabilities, and each state displays its name above the symbol it emits with probability $1$.}
        \label{fig:ex2-pa-to-hmm}
    \end{subfigure}
    \caption{The same parameter $x$ occurs in the PA and its HMM. The HMM first emits the symbol displayed in its current state and then takes a transition. The initial distributions give probability $1$ to $q_0$ and $q_{0,\mathrm c}$, respectively.}
    \label{fig:pa-to-hmm-simple}
\end{figure}

\section{HMM identifiability with masks is $\np$-hard}\label{app:mask_nphard}

\masknphard*

\begin{proof}
We reduce directly from $\mathsf{3SAT}$. 
We will construct a target HMM and a mask such that the target has a distinct output-equivalent rival exactly when the input formula is satisfiable. 

The reduction draws on the use of weighted averages to encode Boolean choices in Kiefer and Wachter's proof of NP-hardness of probabilistic-automaton minimization~\citep[Theorem~5, Propositions~6 and~8]{kiefer2014stability}. 
Our reduction must additionally specify a known target HMM and ensure that different rivals encode different Boolean assignments. %

\proofstep{I}{Preparing the formula}

Let $\varphi(x_1,\ldots,x_N)$ be the input $3$-CNF formula. 
Introduce a fresh Boolean ``selector'' variable $z$ and extend the formula as follows:
\begin{equation}
 \psi(z,x_1, \dots, x_N)=
 \bigwedge_{C\in\varphi}(\neg z\vee C)
 \ \wedge\ \bigwedge_{i=1}^N(z\vee\neg x_i).
 \label{eq:mask-dummy}
\end{equation}
The assignment $\sigma^\star$ with $z=x_1=\cdots=x_N=0$ satisfies $\psi$, and notice that this is the only satisfying assignment with $z=0$ due to the last $N$ clauses. 
When $z=1$, notice that the last $N$ clauses are automatically satisfied, and that the remaining clauses require exactly that $x$ satisfy $\varphi$.
Thus $\psi$ has a satisfying assignment other than $\sigma^\star$ if and only if $\varphi$ is satisfiable.
We will next encode $\psi$ as an HMM and provide assignment $\sigma^\star$ as the parameters of the target HMM, meaning that finding an output equivalent rival precisely maps to finding a satisfying assignment of $\varphi$.

Next, we express clause satisfaction via helper variables.
Write the clauses of $\psi$ as $C_1,\ldots,C_J$, and let $k_j$ be the number of literals in $C_j$. 
Also fix $h_j:=\lceil\log_2(k_j+1)\rceil$ and $L_j:=2^{h_j}$.

Introduce Boolean helper variables $w_{j,0},\ldots,w_{j,h_j-1}$ for each clause $j=1,\dots,J$. For a literal $\ell$, let $\indicator_\ell$ be $1$ when that literal is true and $0$ otherwise. We will enforce
\begin{equation}
 \sum_{\ell\in C_j}\indicator_\ell
 +\sum_{r=0}^{h_j-1}2^r w_{j,r}=L_j.
 \label{eq:mask-clause-bits}
\end{equation}
If clause $C_j$ has $s$ true literals, the added variables give the binary representation of $L_j-s$ by~\eqref{eq:mask-clause-bits}, because they represent precisely the integers from $0$ to $L_j-1$. 
Since $k_j < L_j$, equation~\eqref{eq:mask-clause-bits} has no solution when $s=0$, and has exactly one solution when $1\le s\le k_j$. 
Thus equation~\eqref{eq:mask-clause-bits} enforces clause satisfaction without allowing any additional choice once the original assignment is fixed.

Let $V$ contain $z,x_1,\ldots,x_N$ and the helper variables $w_{j,r}$. 
Extend $\sigma^\star$ to $V$ using the unique binary values given by~\eqref{eq:mask-clause-bits}.

\proofstep{II}{Encoding constraint satisfaction as observation ``checks''}

Recall we use the emit-then-transition convention (Section~\ref{sec:hmm_defs}).
We encode that all conditions to satisfy the Boolean formula are met via ``tests'' on the word probabilities of an HMM.
Each test begins in a \emph{test state}, which emits its own identifying observation and then moves to a state whose emission distribution is involved in the check. 
After this second observation, the HMM moves to a common state $q_\#$ that emits $\#$ forever. 
The initial distribution is fixed and uniform on all test states, and zero elsewhere. 
Denote the number of test states by $R$, and the identifying observation of a test state $u$ by $a_u$ (no other state emits $a_u$).

For any second observation $b$,
\begin{equation}
 \Pr(a_u b)=\frac1R\sum_q A(u,q)B(q,b).
 \label{eq:mask-mixture}
\end{equation}
The sum is a weighted average of the successor states' emission probabilities. 
This is the same way that \citet{kiefer2014stability} turn a convex combination into probabilities of two-letter words induced by a PA in their Proposition~6.
In our HMM setting, the transition probabilities provide the weights and the emission probabilities provide the quantities being averaged.

\proofstep{III}{Encoding literals in emission probabilities}

In the following constructions, all symbols introduced under different names are distinct, and all initial, transition, and emission entries are fixed by the mask unless explicitly declared open. 
In particular, every unlisted entry is fixed to zero by the mask.

For each variable $v\in V$, create two states $q_v^+$ and $q_v^-$, representing the literals $v$ and $\neg v$. 
Introduce observations $y_v^+$ and $y_v^-$ to distinguish these literals in the tests below. 
We fix $\epsilon=1/4$, let $b\in\{+,-\}$, and define the emission row of $q_v^b$ be
\begin{equation}
 B(q_v^b, y^b_v) = \epsilon, \qquad
 B(q_v^b, \mathsf{g}) = 1 - 2\epsilon, \qquad
 B(q_v^b, \mathsf{t}) = r_v^b, \qquad
 B(q_v^b, \mathsf{f}) = \bar r_v^b,
 \label{eq:mask-literal-row}
\end{equation}
and let both $r_v^b$ and $\bar r_v^b$ be open via the mask.
The symbols $\mathsf t,\mathsf f,\mathsf g$ are shared by all these states. Row stochasticity, rather than a shared parameter, gives
\begin{equation}
 r_v^b+\bar r_v^b=\epsilon,\qquad 0\le r_v^b\le\epsilon.
 \label{eq:mask-row-sum}
\end{equation}
To obtain the parameters of the target HMM, $\theta^\star$, take $r_v^b=\epsilon$ exactly when literal $v^b$ is true under $\sigma^\star$.

Next, we will force $r_v^b$ to equal either $0$ or $\epsilon$, interpreting $\epsilon$ as true. 
For each $v\in V$, create two test states $u_v^{\mathsf T}$ and $u_v^{\mathsf F}$. 
The first test will require a literal that emits no $\mathsf f$, and the second will require a literal that emits no $\mathsf t$. 
Introduce four \emph{comparison states} $q_{v,b}^{d}$, with $b\in\{+,-\}$ and $d\in\{\mathsf T,\mathsf F\}$. 
We provide their emission probabilities in the following table:
\begin{equation}
\begin{array}{c|cccccc}
 &c_v^{\mathsf T}&c_v^{\mathsf F}&y_v^b&\mathsf t&\mathsf f&\mathsf g\\\hline
 q_{v,b}^{\mathsf T}&\epsilon&0&\epsilon&\epsilon&0&1-3\epsilon\\
 q_{v,b}^{\mathsf F}&0&\epsilon&\epsilon&0&\epsilon&1-3\epsilon
\end{array}
\label{eq:mask-comparison-rows}
\end{equation}
The observation $c_v^d$ is used only by the two comparison states for that test. 
These fixed distributions let the two different truth assignments produce the same observed weighted average\footnote{The use of these weighted averages is inspired by the two representations of the variable point $p(x_i)$ in Figure~5 and proof of Proposition~8 by~\citet{kiefer2014stability}. 
}.

The only open transitions from $u_v^d$ are its four entries to
\[
 q_{v,+}^d,\quad q_{v,-}^d,\quad q_v^+,\quad q_v^-.
\]
In the target, the $\mathsf T$ test sends probability $1/2$ to the true literal state and $1/2$ to the comparison state for the false literal. The $\mathsf F$ test sends probability $1/2$ to the false literal state and $1/2$ to the comparison state for the true literal. All four entries are open, including those with target value zero.

For either truth assignment, these choices give the following conditional distributions of the second observation, given the identifying first observation of the test has been emitted:
\begin{equation}
\begin{array}{c|ccccccc}
 \text{test}&c_v^{\mathsf T}&c_v^{\mathsf F}&y_v^+&y_v^-&\mathsf t&\mathsf f&\mathsf g\\\hline
 \mathsf T&\epsilon/2&0&\epsilon/2&\epsilon/2&\epsilon&0&1-5\epsilon/2\\
 \mathsf F&0&\epsilon/2&\epsilon/2&\epsilon/2&0&\epsilon&1-5\epsilon/2
\end{array}
\label{eq:mask-test-targets}
\end{equation}
Since each test has fixed initial probability $1/R>0$, matching two-letter word probabilities is equivalent to matching these conditional distributions.

Consider a masked rival that is output-equivalent to the target. 
Write $\alpha_+,\alpha_-$ for its transitions from $u_v^{\mathsf T}$ to the two comparison states and $\gamma_+,\gamma_-$ for its transitions to the two literal states. 
The observations $c_v^{\mathsf T},y_v^+,y_v^-$ give
\begin{equation}
 \alpha_+ + \alpha_- = 1/2,\qquad
 \alpha_+ + \gamma_+ = 1/2,\qquad
 \alpha_- + \gamma_- = 1/2.
 \label{eq:mask-marker-constraints}
\end{equation}
Therefore, $\gamma_++\gamma_-=1/2$, and hence matching the target probability (zero) for $\mathsf f$ gives
\begin{equation}
 \gamma_+(\epsilon-r_v^+)+\gamma_-(\epsilon-r_v^-)=0.
 \label{eq:mask-true-test}
\end{equation}
Every term is nonnegative, and at least one $\gamma_b$ is positive, meaning at least one of $r_v^+,r_v^-$ equals $\epsilon$.

For the $\mathsf F$ test, write $\delta_+,\delta_-$ for the transitions to the literal states. The same three identifying observations give $\delta_++\delta_-=1/2$. Matching probability zero for $\mathsf t$ gives
\begin{equation}
 \delta_+r_v^++\delta_-r_v^-=0,
 \label{eq:mask-false-test}
\end{equation}
so at least one of $r_v^+,r_v^-$ equals $0$. 
Therefore,
\begin{equation}
 \{r_v^+,r_v^-\}=\{0,\epsilon\}.
 \label{eq:mask-boolean}
\end{equation}

Define $v$ to be true exactly when $r_v^+=\epsilon$. 
Once this truth value is fixed, all eight open test transitions are fixed as well. 
For example, if $v$ is true, equation~\eqref{eq:mask-true-test} forces $\gamma_-=0$, so $\gamma_+=1/2$, and then~\eqref{eq:mask-marker-constraints} gives $\alpha_+=0,\alpha_-=1/2$. 
Equation~\eqref{eq:mask-false-test} similarly gives $\delta_+=0,\delta_-=1/2$, and the two $\mathsf F$ comparison weights are $1/2,0$. 
If $v$ is false, the signs are exchanged. 
These are precisely the choices used to obtain~\eqref{eq:mask-test-targets}, meaning that the truth value determines the entire pair of tests, and either truth value gives the target distributions for both tests.

\proofstep{IV}{Enforcing the clauses}

For each $C_j$, add a test state $u_j$ and fix $W_j := k_j+L_j-1$. 
Its transitions are fixed as follows: assign weight $1/W_j$ to the literal state for each occurrence in $C_j$, and weight $2^r/W_j$ to $q_{w_{j,r}}^+$ for $0\le r<h_j$. 
If a literal occurs more than once, add its weights in the corresponding transition entry, and all these weights sum to $1$.

Let $r_\ell$ denote the $\mathsf t$-emission probability of the state for literal $\ell$. 
The conditional probability (on the identifying first observation of $u_j$) of $\mathsf t$ is
\begin{equation}
 \frac1{W_j}\left(\sum_{\ell\in C_j}r_\ell+
                  \sum_{r=0}^{h_j-1}2^r r_{w_{j,r}}^+\right).
 \label{eq:mask-clause-probability}
\end{equation}
For the assignment of the target HMM, this equals $\epsilon L_j/W_j$. 
Once~\eqref{eq:mask-boolean} holds, matching this value is exactly equation~\eqref{eq:mask-clause-bits} after dividing by $\epsilon$.

The clause test must also match all other second observations. 
Its probabilities for every $y_v^b$ are fixed, because its transition weights and the $y_v^b$ emission entries are fixed. 
Its probability for $\mathsf g$ is $1-2\epsilon$. 
Its probability for $\mathsf t$ and $\mathsf f$ is in $\{0, \epsilon\}$, and one is $\epsilon$ minus the other by~\eqref{eq:mask-row-sum}. 
All remaining second observations have probability zero. 
Therefore, matching~\eqref{eq:mask-clause-probability} suffices to match the whole distribution for this test.

\proofstep{V}{Correctness of the reduction}

The test states are the two variable tests for every $v\in V$ and the $J$ clause tests, so $R=2|V|+J$. 
We fix their emissions to their distinct identifying symbols, and their initial probabilities to $1/R$. 
All literal and comparison states transition to $q_\#$ with probability $1$. 
Fix $B(q_\#,\#)=A(q_\#,q_\#)=1$. 
These rules, together with the already specified masked entries, complete the HMM and mask. 
The target uses the extended assignment $\sigma^\star$ and the transitions of Step~III. 

Notice that every observation word with positive probability is either 1) the empty word, which has probability 1; 2) a single symbol $a_u$ that identifies a test, which has probability $1/R$; or 3) a word of the form $a_u b \#^t$ with $t \ge 0$, which has the same probability as $a_u b$ in~\eqref{eq:mask-mixture}.
Every other word has probability $0$. 
The tests above therefore characterize equivalence for all observation words, not just the selected two-letter words used to derive the constraints.

Every equivalent masked HMM yields, by Step~III., a Boolean value for every variable in $V$. 
Step~IV.\ then forces the clause equations, so the values of $z,x_1,\ldots,x_N$ satisfy $\psi$. 
Conversely, any satisfying assignment of $\psi$ extends uniquely to the added binary variables. 
Set the literal emissions according to this assignment and use the unique transitions of Step~III. 
The variable tests then have distributions~\eqref{eq:mask-test-targets}, the clause tests match by Step~IV., and the result HMM is equivalent to the target as established earlier in this step.

For a fixed satisfying assignment of $\psi$, the added variables, all open literal emissions, and all open test transitions are uniquely determined. 
There are no other open entries. 
Different assignments change at least one literal emission entry, so they produce different numerical HMMs. 
Consequently, satisfying assignments of $\psi$ correspond bijectively to HMMs in the masked family that are equivalent to the target. 
The target corresponds to $\sigma^\star$, and hence has a distinct equivalent rival exactly when $\varphi$ is satisfiable by~\eqref{eq:mask-dummy}.

Finally, clauses of $\psi$ have at most four literals, so $h_j\le3$ and $L_j\le8$. 
There are linearly many variables, test states, literal states, comparison states, and observation symbols in the size of $\varphi$. 
The fixed probabilities have polynomial bit length, including $1/R$ and the clause weights. 
Thus even listing all matrices and the mask takes polynomial time. 
Every open entry has its own independent parameter---the only domain constraints are nonnegativity and stochasticity.
Repeated uses of a literal refer to the same state, and no two different open entries are required by the mask to share a parameter. 
This concludes the correctness of the reduction.
\end{proof}

\section{HMM permutation-equivalence is GI-complete}\label{app:permutation_GI_complete}

\thmpermequivhard*

\begin{proof}
We show both directions.

For $\gi\le_p$ HMM permutation equivalence, let $G=(V,E)$ be a simple undirected graph on $n$ vertices and set $c:=1/(n+1)$.
Construct an $n$-state HMM with uniform initial distribution, one observation symbol emitted with probability $1$ from every state, and transition matrix
\[
A_G(u,v):=
\begin{cases}
c,&u\ne v\text{ and }\{u,v\}\in E,\\
1-c\,\deg_G(u),&u=v,\\
0,&\text{otherwise}.
\end{cases}
\]
Every row sums to one, and the diagonal entry is strictly larger than $c$ because
$\deg_G(u)\le n-1$.
Hence the off-diagonal entries equal to $c$ recover exactly the edges of $G$.
The initial and emission distributions are invariant under every permutation, so
\[
    G_1\cong G_2
    \quad\Longleftrightarrow\quad
    A_{G_2}=\Pi A_{G_1}\Pi\transpose
    \text{ for some permutation }\Pi
    \quad\Longleftrightarrow\quad
    \model_{G_1}\similarperm\model_{G_2}.
\]

For the converse, given rational HMMs $\model_1,\model_2$, view each hidden state as a vertex labeled by the pair consisting of its initial probability and its emission row, and put a directed edge $u\to v$ labeled by the rational weight $A(u,v)$ whenever that weight is nonzero.
Then HMM permutation equivalence is exactly isomorphism of these vertex-labeled, edge-labeled directed graphs.
Only polynomially many rational labels occur in the input, so replace equal labels by equal finite colors.
Colored directed graph isomorphism reduces in polynomial time to ordinary graph isomorphism~\citep{hsieh2006efficient}.
Thus HMM permutation equivalence $\le_p\gi$.
\end{proof}

\section{HMM identifiability up to state renaming and bisimilarity}
\subsection{HMM identifiability up to bisimilarity is in $\coetr$}
\label{app:bisim-etr}

\thmbisimetr*

\begin{proof}
    Let $(\phi, \Theta)$ be a parameterization and $\theta^\star$ provide the target HMM $\model^\star = (n, \pi^\star, A^\star, B^\star) := \phi(\theta^\star)$.
    We extend the $\etr$ certificate from Theorem~\ref{thm:det-eq-etr-certificate}.
    As before, we existentially guess a rival $\model' = (n, \pi', A', B')$ through matrices $(F,\{\vv*{T}{y}\})$, and impose constraints (i)--(iii) from Theorem~\ref{thm:det-eq-etr-certificate}.
    It remains to encode that $\model'$ is not classically bisimilar to $\target$.

    Let the target states be $1,\ldots,n$ and the rival states be $n+1,\ldots,2n$.
    We encode a partition-refinement sequence \citep{paige1987three} by Boolean variables
    \[
        P^t_{ij}\in\{0,1\},
        \qquad
        t=0,\ldots,2n,\quad i,j\in\{1,\ldots,2n\},
    \]
    where $P^t_{ij}=1$ means that states $i,j$ are in the same block of the $t$-th partition.
    First, we impose that each $P^t$ is an equivalence relation:
    \[
        P^t_{ij}(P^t_{ij}-1)=0,\qquad
        P^t_{ii}=1,\qquad
        P^t_{ij}=P^t_{ji},\qquad
        P^t_{ij}P^t_{jk}(1-P^t_{ik})=0.
    \]

    The initial partition groups together states if and only if they have the same observation distribution, which we encode by, for $i,j\in\{1,\ldots,2n\}$, introducing constraint
    \[
        P^0_{ij}=1
        \;\longleftrightarrow\;
        \sum_{y=1}^{m}
        \bigl(
            \widehat B(i,y)-\widehat B(j,y)
        \bigr)^2=0,
    \]
    where
    \[
        \widehat B(i,y)=
        \begin{cases}
            B^\star(i,y), & i\le n,\\
            \obs'(i-n,y), & i>n.
        \end{cases}
    \]

    The refinement step says that two states remain together iff they were
    together before and have the same transition mass into every current block.
    Thus, for every $t<2n$ and every $i,j\in\{1,\ldots,2n\}$, we introduce the following constraint:
    \[
        P^{t+1}_{ij}=1
        \;\longleftrightarrow\;
        \left(
            P^t_{ij}=1
            \;\wedge\;
            \bigwedge_{k=1}^{2n}
            \left[
                \sum_{u=1}^{2n} P^t_{ku}\widehat A(i,u)
                =
                \sum_{u=1}^{2n} P^t_{ku}\widehat A(j,u)
            \right]
        \right),
    \]
    where
    \[
        \widehat A(i,u)=
        \begin{cases}
            A^\star(i,u), & i\le n,\ u\le n,\\
            \trans'(i-n,u-n), & i>n,\ u>n,\\
            0, & \text{otherwise}.
        \end{cases}
    \]

    Finally, to enforce that the models are not bisimilar, $\target \not \sim \rival$, require that the initial distributions disagree on some final
    block:
    \[
        \sum_{k=1}^{2n}
        \left(
            \sum_{u=1}^{n}P^{2n}_{ku}\init^\star(u)
            -
            \sum_{u=1}^{n}P^{2n}_{k,n+u}\init'(u)
        \right)^2>0.
    \]

    The formula above describes the partition-refinement algorithm for classical HMM bisimulation~\citep{paige1987three,larsen1989bisimulation}.
    Since there are only $2n$ states, after at most $2n$ iterations the partition is stable and the algorithm can terminate (i.e., we have reached a fixpoint), meaning that we have reached the coarsest bisimulation. 
    We refer to \citet{paige1987three} and \citet{larsen1989bisimulation} for a proof of this standard fact.
    The final constraint says that the two initial distributions assign different mass to some block, so $\target$ and $\model'$ are not classically bisimilar.

    The certificate has polynomial size and all constraints are polynomial constraints and finite Boolean combinations thereof.
    We recall that these Boolean combinations can be done in $\etr$, as explained in Section~\ref{sec:etr}.
    Hence $\coiddetbisim\in\etr$.
\end{proof}

\subsection{$\coetr$-hardness of deterministic identifiability up to bisimulation with simple parameters}
\label{app:bisim_simple_hardness}

\corbisimetrhard*

\begin{proof}
Take the PA family constructed in Theorem~\ref{thm:pa_identifiability_hard}.
Its target has $z=0$, and every distinct output-equivalent rival has $z=1$.
We extend the family by preventing the rivals from being bisimilar to the target while preserving output equivalence.
We use the well-known fact that making a probabilistic choice earlier or later can preserve output probabilities without preserving classical bisimilarity (see for example \citet[Figure~1]{doyen2008equivalence}).
We specify these branches next.

Introduce a fresh letter $\sigma_{\not\sim}$ and five fresh nonaccepting states
$q_1^{\not\sim},\dots,q_5^{\not\sim}$, with transitions
\[
 q_{\mathrm{init}}
 \xrightarrow[\sigma_{\not\sim}]{1-z}q_1^{\not\sim},
 \qquad
 q_{\mathrm{init}}
 \xrightarrow[\sigma_{\not\sim}]{z}q_2^{\not\sim},
\]
and
\[
\begin{aligned}
 q_1^{\not\sim}&\xrightarrow[t]{1/2}q_3^{\not\sim},
 &q_1^{\not\sim}&\xrightarrow[t]{1/2}q_4^{\not\sim},
 &q_2^{\not\sim}&\xrightarrow[t]{1}q_5^{\not\sim},
 \\[2pt]
 q_3^{\not\sim}&\xrightarrow[t]{1}q_{\mathrm{acc}},
 &q_4^{\not\sim}&\xrightarrow[t]{1}q_{\mathrm{rej}},
 &q_5^{\not\sim}&\xrightarrow[t]{1/2}q_{\mathrm{acc}},
 &q_5^{\not\sim}&\xrightarrow[t]{1/2}q_{\mathrm{rej}}.
\end{aligned}
\]
Every letter other than $t$ leads from each new state to
$q_{\mathrm{rej}}$ with probability $1$.
The new letter $\sigma_{\not\sim}$ also leads to $q_{\mathrm{rej}}$
from every old state (other than $q_{\mathrm{init}}$).

From either $q_1^{\not\sim}$ or $q_2^{\not\sim}$, the only accepted word is $t^2$, with probability $1/2$.
Starting from $q_1^{\not\sim}$, the choice between acceptance and rejection is made on the first $t$.
Starting from $q_2^{\not\sim}$, it is made on the second $t$.
Thus the only new accepted word is $\sigma_{\not\sim}t^2$, and its acceptance probability is
\[
 (1-z)\tfrac12+z\tfrac12=\tfrac12.
\]
Note that all original acceptance probabilities remain unchanged, which means that the extended PA has exactly the same output-equivalent parameter values as before.

Apply the PA-to-HMM transformation from Appendix~\ref{app:hmm_identifiability_simple}, and suppose that a rival with $z=1$ were bisimilar to the target.
Then, their initial control states must be related, and each has a unique successor emitting $\sigma_{\not\sim}$, so these
letter states must also be related.
Their deterministic transitions then force the target's $q_{1,\mathrm c}^{\not\sim}$ to be related to the rival's $q_{2,\mathrm c}^{\not\sim}$.
Likewise, their unique successors emitting $t$ must be related.
However, the target's $q_{1,t}^{\not\sim}$ moves with equal probability to $q_{3,\mathrm c}^{\not\sim}$ and $q_{4,\mathrm c}^{\not\sim}$, whereas the rival's $q_{2,t}^{\not\sim}$ moves deterministically to $q_{5,\mathrm c}^{\not\sim}$.
From these three successor states, the probabilities of observing $\#t\#\cmark$ are, respectively, $\kappa^2$, $0$, and $\tfrac12 \kappa^2$, where $\kappa>0$ is the fixed probability of choosing a particular
letter or the decision step.
Bisimilar states have equal observation probabilities, so neither target successor can belong to the bisimulation class of the rival successor.
The two bisimilar letter states therefore assign probabilities $0$ and $1$ to that class, a contradiction.
\end{proof}

\subsection{$\conp$-hardness of deterministic identifiability up to bisimulation with masks}
\label{app:bisim_mask_hardness}

\corbisimnphard*

\begin{proof}
We use the reduction of Theorem~\ref{thm:iddeteq_mask_nphard} without modification. 
Recall that HMMs that are output-equivalent to the target correspond bijectively to satisfying assignments of the extended Boolean formula. 
The target is the unique such model with $z=0$, and every distinct equivalent rival has $z=1$. 
A distinct equivalent rival exists exactly when the input $\mathsf{3SAT}$ formula is satisfiable.

Fix such a rival. In the notation of Appendix~\ref{app:mask_nphard}, its positive-literal state $q_z^+$ has emission probabilities
\[
 B(q_z^+,y_z^+)=\epsilon,\qquad
 B(q_z^+,\mathsf g)=1-2\epsilon,\qquad
 B(q_z^+,\mathsf t)=\epsilon,\qquad
 B(q_z^+,\mathsf f)=0,
 \qquad \epsilon=1/4.
\]
No state of the target has this emission distribution as we argue next. 
The only target states emitting $y_z^+$ are $q_z^+$ and the two comparison states $q_{z,+}^{\mathsf T},q_{z,+}^{\mathsf F}$. 
The target's $q_z^+$ emits $\mathsf t$ with probability $0$, whereas each comparison state emits its additional symbol $c_z^{\mathsf T}$ or $c_z^{\mathsf F}$ with probability $\epsilon$. 
Both of these additional symbols have probability $0$ at the rival's $q_z^+$.
Suppose the rival and target were bisimilar. 
Their copies of the true-literal test state $u_z^{\mathsf T}$ would have to be related: its identifying emission symbol is used by no other state, and its initial probability is $1/R>0$ in each model. 
The rival sends probability $1/2$ from this test state to $q_z^+$, because $z$ is true. 
But the bisimulation class containing the rival's $q_z^+$ contains no target state, since bisimulation requires identical emission distributions. 
The rival therefore sends at least $1/2$ into this class, while the target sends $0$, a contradiction.

Every distinct equivalent rival is consequently non-bisimilar to the target. 
A satisfiable input yields such a rival, and an unsatisfiable input leaves only the target. 
\end{proof}

\section{The hardness of HMM identifiability is independent of alphabet size}
\label{app:alphabet_independent}

\binalphabet*

\begin{proof}
The reduction from $\biniddeteq$ to $\iddeteq$ is immediate, so we prove the converse direction.
Let $\model=(n,\pi,A,B)$ be an HMM over $\abc=\{1,\ldots,m\}$ with parameterization $\phi$.
Let $b:=\lceil\log_2m\rceil$ and fix an injective encoding $c:\abc\to\{0,1\}^b$.

We replace every original hidden state $i$ by a small binary-output module.
It contains a \emph{choice state} $q_i$, an \emph{exit state} $e_i$, and, for every symbol $y\in\abc$, a deterministic code chain $q_{i,y,1},\ldots,q_{i,y,b}$.
All emissions in the new HMM are deterministic: $q_i$ and $e_i$ emit $0$, while $q_{i,y,j}$ emits the $j$-th bit of $c(y)$.
The initial distribution is
\[
    \widehat\pi(q_i)=\pi_i
\]
and is zero on all other new states.
The transitions are
\[
\widehat A(q_i,q_{i,y,1})=B_{iy},
\]
\[
\widehat A(q_{i,y,j},q_{i,y,j+1})=1
\quad(j<b),
\qquad
\widehat A(q_{i,y,b},e_i)=1,
\]
and
\[
\widehat A(e_i,q_{i'})=A_{ii'}.
\]
All unspecified transitions are zero.
Observe that every row is stochastic.

Thus one original observation $y$ produced in hidden state $i$ is replaced by the fixed binary block
\[
    0\,c(y)\,0,
\]
and the original transition from $i$ to the next hidden state occurs only after that block has been emitted.
Consequently, if
\[
E(y_1\cdots y_\ell)
:=
0c(y_1)00c(y_2)0\cdots 0c(y_\ell)0,
\]
then
\[
    \prob_{\widehat\model}(E(w))=\prob_\model(w)
    \qquad\text{for every }w\in\abc^*,
\]
while every binary word that is not a prefix of a valid block encoding has probability zero.
If a binary word ends partway through a valid block, its probability is the sum of the probabilities of the original words obtained by completing that block with each symbol whose code has the observed binary prefix.
Thus every binary-word probability is determined by the original probabilities of observation words.
Conversely, the complete encodings $E(w)$ recover each original word probability exactly.
Hence two original HMMs are output equivalent iff their binary encodings are output equivalent.

The construction also preserves the simple and mask parameter restrictions used in this paper.
Every original initial entry $\pi_i$ appears once as $\widehat\pi(q_i)$,
every emission entry $B_{iy}$ appears once as the transition
$q_i\to q_{i,y,1}$, and every transition entry $A_{ii'}$ appears once as the transition
$e_i\to q_{i'}$.
All other new entries are fixed to $0$ or $1$.
Therefore a simple parameterization remains simple, and a mask remains a mask, with exactly the same parameters.
The map from original rivals to binary rivals is bijective within the induced parameterized family and preserves strict/literal equality of HMM parameters.
\end{proof}

\section{Deterministic local identifiability is in $\coetr$}
\label{app:local_etr}

\thmsmallradius*

\begin{proof}
Define
\begin{equation}\label{eq:f-target}
    f_\target(\rival)
    :=
    \sum_{w\in\abc^{\le 2n-1}}
    \bigl(\prob_\target(w)-\prob_\rival(w)\bigr)^2
    =
    \sum_{w\in\abc^{\le 2n-1}}
    \bigl(\one\transpose T^\star(w)-\one\transpose T'(w)\bigr)^2,
\end{equation}
where the second equality is given by Remark~\ref{rem:forward_backward}.  
By \citet{paz1971introduction}, for $n$-state rivals we have that $f_\target(\rival)=0$ if and only if $\target\equiv\rival$ (see the introductory discussion in Section~\ref{sec:hmm_equivalence_certificate}).
Thus, local identifiability means exactly that $\target$ is the only HMM in some ball around $\target$ on which $f_\target$ returns 0.

Let $K := n+n^2+nm$ be the number of entries in $(\pi,A,B)$, and view rivals as points in $[0,1]^K$.
Since the entries of $\target$ are rational, we can choose a positive integer $D$ divisible by all the denominators.
For $|w|\le 2n-1$, the probability $\prob_\target(w)$ is a sum of products of at most $4n-2$ entries of $\target$, so it is a rational number whose denominator divides $D^{4n}$.
Hence $\widehat f_\target := D^{8n}f_\target$ is a polynomial with integer coefficients, and multiplying by $D^{8n}$ does not change the zero set.
We use $\widehat f_\target$ in place of $f_\target$ from now on.
Similarly, we replace the squared distance
$\|\target-\rival\|_2^2$ by the integer polynomial
\[
    \widehat g(\rival):=D^2\|\target-\rival\|_2^2.
\]

Consider the following semialgebraic set that includes all rivals that are output-equivalent to $\target$:
\begin{equation}\label{eq:semialgebraic-set-equivalence}
    Z_\target :=
    \{\rival\in[0,1]^K \mid \rival \text{ is stochastic and } \widehat f_\target(\rival)=0\}.
\end{equation}
The set $Z_\target$ is closed and bounded in $\mathbb R^K$, hence compact (Heine--Borel theorem), and it is defined by the polynomial equation $\widehat f_\target(\rival) = 0$ and non-strict inequalities concerning stochasticity.

We now fix the parameters needed to apply the lower bound of \citet[Theorem~1]{jeronimo2013minimum} to $\widehat g$ on $Z_\target$.
\begin{itemize}
    \item \emph{Degree.}
    A word probability of length $\ell$ has degree $2\ell$ in the rival entries, so $\widehat f_\target$ has degree at most $4(2n-1)$, and $\widehat g$ has degree $2$.
    We therefore take the even upper bound $d:=8n$.
    \item \emph{Coefficients.}
    There are at most $m^{2n}$ words of length $\le 2n-1$, and squaring a word probability produces at most $n^{4n}$ products of pairs of hidden paths.
    Thus, very crudely, the integer coefficients of $\widehat f_\target$ are bounded in absolute value by $4D^{8n}m^{2n}n^{4n}$.
    The coefficients of $\widehat g$ are bounded by $(2n+1)D^2$, where the largest one is the constant term $D^2\|\target\|_2^2$, and $\|\target\|_2^2\le 2n+1$ because every stochastic vector has squared norm at most $1$.
    We take
    \[
        H:=\max\{4D^{8n}m^{2n}n^{4n},\,(2n+1)D^2\}.
    \]
    \item \emph{Number of polynomials.}
    Let $s$ be the number of polynomials used to define $Z_\target$: one for the equality $\widehat f_\target=0$, $2n+1$ for the stochasticity equalities, and $2K$ for the box inequalities.
    That is, $s:=2K+2n+2$.
\end{itemize}
Finally, let $\widetilde H := \max\{H,2K+2s\}$.
Although $H$ may be numerically enormous, its binary length is polynomial in the input size, and it can be computed without expanding the exponentially large polynomial $\widehat f_\target$.

Assume now that $\target$ is locally identifiable.
Then the connected component of $Z_\target$ containing $\target$ is the singleton $\{\target\}$.  
Here a connected component means a maximal connected subset of $Z_\target$.

Let $C$ be any other connected component of $Z_\target$.
Since $Z_\target$ is compact and connected components of compact sets are closed, $C$ is also compact.
Since $C$ does not contain $\target$, the polynomial $\widehat g$ has a positive minimum on $C$.

\citet[Theorem~1]{jeronimo2013minimum} applies to a polynomial with integer coefficients minimized over a compact connected component of a set
defined by polynomial equations and non-strict inequalities.
The theorem gives us that
\begin{equation}\label{eq:jpt-bound}
    \min_{\rival\in C} \widehat g(\rival)
    \ge
    \left(2^{4-K/2}\widetilde H d^K\right)^{-E},
    \qquad
    E := K2^K d^K .
\end{equation}
Let $A_0 := 16\,\widetilde H d^K$.
Since $2^{4-K/2}\widetilde H d^K \le A_0$, and dividing by $D^2$ undoes the scaling in $\widehat g$, \eqref{eq:jpt-bound} gives
\[
    \|\target-\rival\|_2^2
    \ge
    \frac{1}{D^2A_0^E}
\]
for every $\rival\in C$.
This lower bound is independent of the component $C$.  

Let
\[
    M_\target := 1 + 2\lceil\log_2 D\rceil + E\lceil\log_2 A_0\rceil,
    \qquad
    L_\target := \lceil\log_2 M_\target\rceil,
\]
Then
\[
    2^{-2^{L_\target}}
    \le 2^{-M_\target}
    <
    \frac{1}{D^2A_0^E}
\]
because $D^2A_0^E \le 2^{M_\target-1}$.
Then, if $\target$ is locally identifiable, no distinct equivalent rival HMM lies within distance $2^{-2^{L_\target}}$ of $\target$.

Finally, $L_\target$ is polynomially bounded and polynomial-time computable:
$K$ and $d$ are polynomial in the HMM dimensions, while
$\log D$, $\log H$, and therefore the binary lengths of $A_0$, $E$, and $M_\target$ are polynomial in the input size.
Only $L_\target$, not the doubly exponentially small radius itself, will be written explicitly in the next certificate.
\end{proof}

\section{Generic identifiability is in $\pspace$}
\label{app:generic-upper-bound}

\begin{proof}
We follow the proof of \citet[Theorem~11]{dorfler2024complexity}, using our output-equivalence certificate to describe the non-identifiable set $\mathsf{Bad}_\phi$ in \eqref{eq:generic-bad-set}.
Let formula $\varphi(\theta)$ define $\Theta$, and let $\psi(\theta,\theta', F, \{\vv*{T}{y}\}_{y\in\abc})$ be the polynomial-size formula for checking (i)--(iii) from Theorem~\ref{thm:det-eq-etr-certificate}, where $F$ and $\{\vv*{T}{y}\}_{y\in\abc}$ are the certificate matrices. 
Then membership in $\mathsf{Bad}_{\phi}$ is expressed by
\begin{equation}
 \varphi(\theta)\land\exists\theta',F, \{\vv*{T}{y}\}_{y\in\abc}, \,
 \bigl(\varphi(\theta')\land \psi(\theta,\theta',F, \{\vv*{T}{y}\}_{y\in\abc})
       \land\phi(\theta')\ne\phi(\theta)\bigr).
 \label{eq:generic-existential-description}
\end{equation}
The last condition means that at least one entry of the two numerical HMMs differs, so it is a polynomial-size disjunction of polynomial inequalities. 
Thus both $\Theta$ and $\mathsf{Bad}_{\phi}$ have polynomial-size $\etr$ descriptions. 
Only $\theta$ remains free in~\eqref{eq:generic-existential-description}, on which we will compute the algebraic dimension.

Koiran's algorithm reduces the question $\dim S\ge r$ to the existential theory of the reals in polynomial time\footnote{although $\etr$ had not yet been established as a complexity class at the time of the algorithm's publication.}, also when $S$ is given by an existential formula~\citep[Section~1.1 and Theorem~6]{koiran1999real}; see also \citet[Lemma~9 and Appendix~D]{dorfler2024complexity}. Assuming oracle access to the existential theory of the reals, we can therefore decide generic identifiability as follows:
\begin{enumerate}
 \item Compute $\dim\Theta$ by applying Koiran's algorithm to test $\dim\Theta\ge r$ for $r=0,\ldots,K$.
 \item Compute $\dim\mathsf{Bad}_{\phi}$ by testing $\dim\mathsf{Bad}_{\phi}\ge r$ for the same values of $r$.
 \item Accept if $\Theta=\varnothing$ or $\dim\mathsf{Bad}_{\phi}<\dim\Theta$, otherwise reject.
\end{enumerate}
In each of the first two steps, the dimension is the largest successful threshold, or $-1$ if none succeeds. The procedure is correct by the definition of generic identifiability. It uses polynomially many existential real queries and hence already gives a polynomial-space algorithm by \citet{canny1988some} and \citet{renegar1992computational}.

To obtain the sharper bound, replace these iterations by the single formula
\begin{equation}
 \neg(\dim\Theta\ge0)
 \ \lor\
 \bigvee_{r=0}^{K}
 \bigl((\dim\Theta\ge r)
       \land\neg(\dim\mathsf{Bad}_{\phi}\ge r)\bigr).
 \label{eq:generic-threshold-disjunction}
\end{equation}
The first disjunct handles an empty domain. 
For a nonempty domain, a successful threshold gives $\dim\mathsf{Bad}_{\phi}<r\le\dim\Theta$. 
Conversely, non-identifiability makes the disjunct with $r=\dim\Theta$ true.

Each dimension check in~\eqref{eq:generic-threshold-disjunction} has a polynomial-size existentially quantified formula by Koiran's algorithm.
To bring the quantification to the outer part of the formula, we use a new set of fresh variables per check.
Since the right-hand side of the conjunction in~\eqref{eq:generic-threshold-disjunction} is negated, it becomes universally quantified. 

Within each conjunction, the existential and universal formulas have no variables in common, so their quantifiers can be placed in either order. 
We thus obtain both an $\exists\forall$ sentence and a $\forall\exists$ sentence, each constructible in polynomial time. 
This proves
\[
 \idgeneric\in\exists\forall\mathbb R\cap\forall\exists\mathbb R \subseteq \pspace,
\]
where the $\pspace$ bound is given by \citet{canny1988some} and \citet{renegar1992computational}.
\end{proof}

\section{Tune--Nguyen--Roughan criterion}\label{app:tnrcondition}

This section considers the sound and complete criterion provided by \citet{tune2013hidden}.
Their criterion makes use of the row-wise tensor product and Kruskal rank as we introduce next.

\begin{definition}[Row-wise tensor product and Kruskal rank]
For matrices $U\in\real^{q\times r}$ and $V\in\real^{q\times s}$ with the same number of rows,
\[
U\rowtensor V
:=
\begin{pmatrix}
U_{1,*}\otimes V_{1,*}\\
\vdots\\
U_{q,*}\otimes V_{q,*}
\end{pmatrix}.
\]
The Kruskal rank $\krank(M)$ is the largest $k$ such that every set of $k$ rows of $M$ is linearly independent.
\end{definition}

\begin{theorem}[\citet{tune2013hidden}, Theorem~5]\label{thm:tnr}
Given an HMM $\model = (\pi, A, B)$ with an irreducible and aperiodic Markov chain and a finite non-redundant alphabet,
the HMM is deterministically globally identifiable up to hidden-state permutation and scaling if and only if
\[
    \krank(B\rowtensor A)=n.
\]
\end{theorem}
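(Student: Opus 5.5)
The plan is to follow the tensor-decomposition route of \citet{allman2009identifiability} and prove the two directions separately. For sufficiency, I would condition on a hidden state $\rand q_t = i$ at stationarity; irreducibility and aperiodicity give a unique, strictly positive stationary distribution $\mu$, so the time index can be taken arbitrarily large. I would then split an observation window into three conditionally independent blocks: a past block $(\rand z_{t-k},\ldots,\rand z_{t-1})$, the present emission together with its successor information, and a future block $(\rand z_{t+1},\ldots,\rand z_{t+k})$. The joint distribution of the three blocks is a three-way tensor $\sum_i \mu_i\, P_i \otimes C_i \otimes N_i$ with $n$ rank-one terms. The middle factor is designed to have row $i$ proportional to $B_{i,*}\otimes A_{i,*}$ once the next state is marginalised into the future block; this is exactly where $B\rowtensor A$ enters.

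The key steps, in order, are as follows. First, I would show that for $k$ large the past factor $P$ and future factor $N$ have Kruskal rank $n$. I would argue this from aperiodicity and irreducibility, using the non-redundancy of the alphabet to rule out two states with identical long-run emission behaviour. Second, I would take the middle factor to be $B\rowtensor A$, suitably reshaped, with $\krank = n$ by hypothesis. Kruskal's inequality $\krank(P)+\krank(C)+\krank(N)\ge 2n+2$ then holds, and Kruskal's theorem recovers the rank-one terms uniquely up to a common permutation and scaling. Third, from the recovered middle factor I would read off $B$ by marginalising the $A$-coordinates and read off $A$ by marginalising the $B$-coordinates. Row-stochasticity fixes the scaling, and $\mu$ (hence $\pi$ in the stationary setting) is then determined. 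This yields identifiability up to permutation and scaling.

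For necessity, suppose $\krank(B\rowtensor A) < n$. Then some set of rows of $B\rowtensor A$ is linearly dependent. I would try to use a vector in the left kernel, restricted to those rows, to perturb $(\pi,A,B)$ along a direction that leaves every word probability $\pi\transpose T_w\one$ unchanged. The approach is to mimic the forward-space argument of Theorem~\ref{thm:kw-certificate}: show that the perturbed model's forward vectors stay in the same span and remain orthogonal to the difference. This would produce an output-equivalent rival that is not a permutation.

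I expect this converse to be the main obstacle. Kruskal's condition is not necessary in general, so the converse must exploit the specific Markov structure together with the irreducibility and aperiodicity assumptions. If a direct perturbation fails to preserve stochasticity, I would first try merging the dependent states into a convex-combination construction, in the spirit of the weighted averages used in Appendix~\ref{app:mask_nphard}.
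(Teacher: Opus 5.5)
\paragraph{Review.} There is no proof in the paper to compare against: the paper quotes this statement from \citet{tune2013hidden} and adds only that an $n$-row matrix has Kruskal rank $n$ exactly when it has full row rank. Your sketch therefore has to stand on its own, and its central step is mis-specified. Given $\rand q_t=i$, the successor $\rand q_{t+1}$ is hidden, so no observable mode is indexed by it. If you marginalise it into the future block, as you propose, the present factor is just $B$, and $A$ is absorbed into the future factor $AN$, where $N$ collects the future laws given $\rand q_{t+1}$. Expanded, the joint law is $\sum_{i,j}\mu_iA_{ij}\,P_i\otimes B_{i,*}\otimes N_{j,*}$, whose terms are indexed by pairs $(i,j)$. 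The matrix $B\rowtensor A$ appears only after merging present and future into one mode, as the factor $(B\rowtensor A)(I_m\otimes N)$ of the two-way flattening ``past versus (present, future)''. Kruskal's theorem does not apply to such a two-way flattening. So your hypothesis never enters Kruskal's inequality as claimed, and Step~3 (``marginalise the $A$-coordinates'') reads off coordinates that are not observable.

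Step~1 also fails. Irreducibility and aperiodicity give a positive stationary $\mu$, but they say nothing about linear independence of the $n$ conditional past or future laws, which is what Kruskal rank $n$ requires. Excluding pairs of states with identical behaviour only gives $\krank\ge2$. Take $n=m=2$, $\pi=(\tfrac12,\tfrac12)$, $A=\tfrac12\one_2\one_2\transpose$ and $B=I_2$; all transitions are positive and both symbols are used. For every $k$ both states induce the same past and future laws, so those factors have Kruskal rank $1$. Yet $B\rowtensor A$ has the independent rows $(\tfrac12,\tfrac12,0,0)$ and $(0,0,\tfrac12,\tfrac12)$. Moreover, replacing $B$ by the matrix with rows $(\tfrac35,\tfrac25)$ and $(\tfrac25,\tfrac35)$ gives the same i.i.d.\ uniform output, and this rival is not a permutation of the target. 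So sufficiency cannot be derived from the hypotheses as you use them. You need an explicit minimality-type assumption (presumably what ``non-redundant'' must encode), and you must show where it enters.

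For necessity you give only a plan, and you miss a short direct argument. Suppose $c\neq0$ and $c\transpose(B\rowtensor A)=0$. Its $(y,j)$ entries say $c\transpose D(y)A=c\transpose T_y=0$ for every $y$, and summing them gives $c\transpose\one_n=0$. Hence $(\pi+\epsilon c)\transpose T_w\one_n=\pi\transpose T_w\one_n$ for every word $w$. If $\pi>0$ on the support of $c$, small $\epsilon$ gives a valid rival, and for all but finitely many $\epsilon$ it is not a permutation of the target. No perturbation of $A$ or $B$ is needed. This rival has a non-stationary initial distribution, though, so it is admissible only when $\pi$ is a free parameter. Your sufficiency argument instead ties $\pi$ to $\mu$, so you must fix one model class for both directions.
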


The matrix in Theorem~\ref{thm:tnr} has exactly $n$ rows, meaning that its Kruskal rank is $n$ precisely when its set of all $n$ rows is linearly independent, which is exactly full row rank.
Gaussian elimination checks this in polynomial time.

To relate their setting to ours, the models they consider are parameterized with fully open rectangular masks, that is, 
$\Theta$ is the set $[0, 1]^{n + n^2 + n m}$ restricted to the stochasticity constraints of HMMs and $\phi$ maps each parameter one-to-one to each HMM entry.
Therefore, the rival's parameters are fully unrestricted, and only specific degenerate target HMMs, those with a non-full Kruskal rank, are unidentifiable in the sense that there exist output-equivalent rivals that are not equal to the target up to state permutation and scaling.

\end{document}